\newcommand{\N}{\ensuremath{\mathbb{N}} }
\newcommand{\fbins}{\ensuremath{\{0,1\}^*}}
\newcommand{\bin}{\ensuremath{\{0,1\}}}
\newcommand{\infbins}{\ensuremath{\{0,1\}^{\omega}}}
\newcommand{\ILPDC}{\ensuremath{\mathrm{ILPDC}}}
\newcommand{\ILFST}{\ensuremath{\mathrm{ILFST}}}
\newcommand{\pref}{\ensuremath{\textrm{pref}}}
\newcommand{\FScomp}[1]{\ensuremath{D_{\mathrm{FS}}^{#1}}}
\newcommand{\Fcomp}[1]{\ensuremath{D_{\mathrm{F}}^{#1}}}
\newcommand{\PBcomp}[1]{\ensuremath{D_{\mathrm{PB}}^{#1}}}
\newcommand{\PB}{\ensuremath{\mathrm{PB}}}
\newcommand{\push}{\ensuremath{\textrm{push}}}
\newcommand{\ILUPDC}{\ensuremath{\mathrm{ILUPDC}}}
\newcommand{\pop}{\ensuremath{\textrm{pop}}}
\newcommand{\FSTsize}[1]{\text{FST}^{\leq #1}}
\newcommand{\Fsize}[1]{\text{F}^{\leq #1}}
\newcommand{\Ft}{\ensuremath{\textrm{F}}}
\newcommand{\thh}{\ensuremath{\textrm{th}}}
\newcommand{\PBsize}[1]{\ensuremath{\mathrm{PB}^{\leq #1}}}
\theoremstyle{plain}
\newtheorem{theorem}{Theorem}[section]
\newtheorem{corollary}[theorem]{Corollary}
\newtheorem{lemma}[theorem]{Lemma}
\theoremstyle{definition}
\newtheorem{definition}[theorem]{Definition}
\newtheorem{remark}[theorem]{Remark}
\newtheorem{claim}[theorem]{Claim}
\begin{document}

\title{Pebble-Depth}

\author{Liam Jordon\thanks{Supported by a postgraduate scholarship from the Irish Research Council.}\\
\textrm{liam.jordon@mu.ie} \\
\and 
Philippe Moser \\
\textrm{philippe.moser@mu.ie}}

\date{%
    Dept. of Computer Science, Maynooth University, Maynooth, Co. Kildare, Ireland.\\[2ex]%
}

\maketitle

 \begin{abstract}
        In this paper we introduce a new formulation of Bennett's logical depth based on pebble transducers. This notion is defined based on the difference between the minimal length descriptional complexity of prefixes of infinite sequences from the perspective of finite-state transducers and pebble transducers. Our notion of pebble-depth satisfies the three fundamental properties of depth: i.e. easy sequences and random sequences are not deep, and the existence of a slow growth law type result. We also compare pebble-depth to other depth notions based on finite-state transducers, pushdown compressors and the Lempel-Ziv $78$ compression algorithm. We first demonstrate that there exists a normal pebble-deep sequence even though there is no normal finite-state-deep sequence. We then show that there exists a sequence which has pebble-depth level of roughly $1/2$ and Lempel-Ziv-depth level of roughly $0$. Finally we show the existence of a sequence which has a pebble-depth level of roughly $1$ and a pushdown-depth level of roughly $1/2$.
    \end{abstract}

\textbf{Keywords:} Logical Depth, Pebble Transducer, Finite-State Transducers, Pushdown Compressors, Lempel-Ziv Algorithm, Kolmogorov Complexity

\section{Introduction}

In 1988 Charles Bennett introduced a new method to measure the complexity of a piece of data \cite{b:bennett88}. He called this tool \textit{logical depth}. Intuitively, deep structures can be thought of as structures which contain complex patterns that are hard to find. Given more time and resources, an algorithm could identify patterns in the structure to exploit them. Non-deep structures are referred to as being \emph{shallow}. Random structures are considered shallow as they contain no patterns to identify. Simple structures are considered shallow as while they contain patterns, they are too easy to spot.

Bennett's original notion was based on uncomputable Kolmogorov complexity and has been shown to interact nicely with several aspects of computability theory \cite{DBLP:journals/tcs/DowneyMN17,DBLP:journals/dmtcs/MoserS17}. Recently however, researchers have studied more feasible notions at lower complexity levels. These include computable notions \cite{DBLP:journals/iandc/LathropL99}, notions based on polynomial time computations \cite{b.antunes.depth.journal,DBLP:journals/tcs/Moser13,DBLP:journals/iandc/Moser20}, and notions based on classes of transducers \cite{DotyM07,JordonM20}.

Many of these notions are based on minimal descriptional complexity, i.e. the ratio of the length of input and output of a function. For notions based on classes of transducers (finite-state and pushdown) previously studied, this ratio has been linear \cite{DotyM07,JordonM20}. In this paper we examine the minimal descriptional complexity of a class of transducers known as \textit{pebble} transducers which has polynomial size output.

 For $k \in \N$, a $k$-pebble transducer is a two-way finite-state transducer with the additional capacity to mark $k$ squares of its tape with its \textit{pebbles}.  When two or more pebbles are used, we exclusively examine pebble transducers whose pebbles follow a stack-like discipline in the sense that the pebbles are ordered and a pebble's position on the tape can only be altered (lifted from or dropped onto a square) if all lower ranked pebbles are currently on the tape and all higher ranked pebbles are not currently on the tape. Building on work studying $0$-pebble automata \cite{DBLP:journals/ibmrd/RabinS59,DBLP:journals/ibmrd/Shepherdson59} and $1$-pebble automata \cite{DBLP:conf/focs/BlumH67}, Globerman and Harel showed that this restricted class of pebble automata recognise exactly the regular languages \cite{DBLP:journals/tcs/GlobermanH96}. They were first examined as transducers for trees by Milo et al. in \cite{DBLP:journals/jcss/MiloSV03}. Further study of pebble-transducers for strings by Engelfriet and Maneth can be found in \cite{DBLP:conf/mfcs/EngelfrietM02}. While Globerman and Harel's result shows that pebble acceptors are equivalent to finite-state acceptors, pebble transducers are much more powerful in the following sense: While two-way finite-state transducers have output $O(n)$ on inputs of size $n$, a $k$-pebble transducer has output of size $O(n^{k+1})$. As such, the class of functions computed by pebble transducers has been referred to as the class of \textit{polyregular} functions and has been shown to have several equivalent characterizations \cite{MikoPoly18,DBLP:conf/icalp/BojanczykKL19}. Specifically, a recent result by Lhote demonstrates that a polyregular function has output of size $O(n^{k+1})$ if and only if the function can be performed by a $k$-pebble transducer \cite{DBLP:conf/lics/Lhote20}.

Depth notions are defined via two families of string to string functions called \textit{observers} (e.g. lossless compressors) $T$ and $T'$ where $T'$ is more powerful than $T$. We say an infinite sequence $S$ is $(T,T')$-deep if for every observer $G$ of type $T$, there exists an observer $G'$ of type $T'$ such that on almost every prefix (or infinitely many prefixes) of $S$, $G'$ performs better (e.g. achieves better compression) than $G$ on the prefix by at least $\alpha n$ bits where $\alpha \in (0,1)$ and $n$ is the length of the prefix. We refer to $\alpha$ as the $(T,T')$-depth level of $S$. Bennett's original notion was based on time bounded Kolmogorov complexity $H^t$ and ordinary Kolmogorov complexity $H$, i.e. $(H^t,H)$-depth. 

Inspired by this, we define pebble-depth (PB-depth) to be based on the difference between the minimal descriptional complexity of strings when restricted to finite-state transducers and pebble transducers, i.e. $(\textrm{FS},\textrm{PB})$-depth. We propose that finite-state transducers are a good candidate for the weaker classes of observers as finite-state transducers can be viewed as $0$-pebble transducers with the added restriction that the tape head can only move in one direction. Finite-state transducers have previously been used to define notions of depth in \cite{DotyM07,JordonM20}. We prove our new pebble notion satisfies the three basic properties of depth that are generally considered fundamental. The first two are that random sequences are shallow (for the appropriate randomness notion) and computable sequences are shallow (for the appropriate randomness notion). The third fundamental property featured in Bennett's original notion is called the \textit{slow growth law} which states that no fast simple process can create a deep sequence. Therefore, the construction of a deep sequence must be in someway difficult. We demonstrate that our PB-depth also satisfies a slow growth law.

We compare PB-depth with finite-state-depth (FS-depth) which is based on finite-state transducers \cite{DotyM07}, Lempel-Ziv-depth (LZ-depth) \cite{JordonMPDLZ} based on the lossless compression algorithm Lempel-Ziv $78$ \cite{DBLP:LZ78}, and with pushdown-depth (PD-depth) \cite{JordonMPDLZ} which is based on information lossless pushdown compressors. We first demonstrate that unlike FS-depth where all normal sequences are shallow, there exists a normal PB-deep sequence. We demonstrate a difference with LZ-depth by building a sequence which has a PB-depth level of approximately $1/2$ and an LZ-depth level close to $0$. We also offer preliminary results comparing PB-depth with PD-depth by showing the existence of a sequence which has a PB-depth level of close to $1$ and a PD-depth level of close to $1/2$.

\section{Preliminaries}

We write $\N$ to denote the set of all non-negative integers. All logarithms are taken in base $2$. A \emph{string} is an element of $\fbins$. We use $|x|$ to denote the length of string $x$. For $n \in \N$, $\bin^n$ denotes the set of strings of length $n$. We use $\bin^{+}$ to denote the set of strings of length at least $1$. A \emph{sequence} is an element of $\infbins$. Given strings $x,y$ and a sequence $S$, $xy$ and $xS$ denote the concatenation of $x$ with $y$ and $x$ with $S$
 respectively. For a string $x$ and $n\in \N$, $x^n$ denotes the string of $n$ copies of $x$ concatenated together while $x^{\omega}$ denotes the sequence formed from infinite concatenations of $x$ with itself. For a string $x$ and sequence $S$, for positive integers $i,j$ with $i\leq j$, $x[i..j]$ and $S[i..j]$ represent the \emph{substring} of $x$ and $S$ composed of their respective $(i+1)^{\thh}$ through $(j+1)^{\thh}$ bits. If $j < i$, then $x[i.. j]=S[i .. j] = \lambda$, where $\lambda$ is the \emph{empty} string. $x[i]$ and $S[i]$ represent the $i^{\thh}$ bit of $x$ and $S$ respectively. For a string $v = xyz$ we say $x$ is a \emph{prefix} of $v$, $y$ is a substring of $v$, and that $z$ is a suffix of $v$. We write $x \sqsubseteq v$ to denote that $x$ is a prefix of $v$ and $x \sqsubset v$ if $x$ is a prefix of $v$ but $x \neq v$. For a string or sequence $S$ and $n \in \N$, $S \upharpoonright n$ denotes $S[0 .. n-1]$, the prefix of length $n$ of $S$. 
 
 For a string $x = x_1x_2 \ldots x_n$, $d(x)$ denotes $x$ with every bit doubled, i.e.  $d(x)= x_1x_1 \ldots x_nx_n$. $x^{-1}$ denotes the reverse of $x$, i.e. $x^{-1} = x_nx_{n-1} \ldots x_2x_1$. For $k \in \N$, $pow_k(x)$ denotes the string $x^{|x|^k}$, i.e. the string formed by concatenating $x$ with itself $|x|^k$ times. $\pref(x)$ denotes the string $x_1x_1x_2x_1x_2x_3\ldots x$, i.e. the string formed by concatenating all prefixes of $x$ in order of length.

We write $K(x)$ to represent the plain Kolmogorov complexity of $x$. That is, for a fixed universal Turing machine $U$, $K_U(x) = \min \{ |y| : y \in \fbins, U(y) = x \}.$ Here $y$ is the shortest input to $U$ that results in the output of $x$. The value $K_U(x)$ does not depend on the choice of universal machine up to an additive constant, therefore we drop the $U$ from the notation. Other authors commonly use $C$ to denote plain Kolmogorov complexity (see \cite{nies:book}), however we reserve $C$ to denote compressors. Note that for all $n \in \N$, there exists a string $x \in \bin^n$ such that $K(x) \geq |x|$ by a simple counting argument. The \emph{prefix-free} variation of Kolmogorov Complexity, denoted by $H$, is similarly defined when Turing machines we examine are restricted to having a prefix-free domain.

We use Borel normality \cite{borelNormal} to examine the properties of some sequences. A sequence $S$ is said to be \emph{normal} if for all $x\in \fbins$, $x$ occurs with asymptotic frequency $2^{-|x|}$ as a substring in $S$.

\subsection{Pebble Transducers}

A $k$-pebble transducer is two-way finite-state transducer which also has $k$-pebbles labelled $1,\ldots,k$. Initially, the transducer has no pebbles on its input tape, however during its computation the transducer can drop pebbles onto and pick up pebbles from squares of the tape. At each stage of computation the transducer knows which pebbles are on the square under its head and it can choose to drop a new pebble on that square, lift the topmost pebble from that square, or move to a different square. However, the way in which the transducer can drop and lift pebbles is restricted to act in a stack like fashion.

We use the pebble transducers with this restriction as these transducers have very nice properties. For instance, in Theorem $2$ of \cite{DBLP:conf/mfcs/EngelfrietM02} it was shown that this class of deterministic pebble transducers is closed under composition. We use this property in the proof of Theorem \ref{PB: Thm: SGL}. These transducers also have other nice properties such as that all non-deterministic pebble transducers which compute a partial function can in fact be computed by a deterministic pebble transducer \cite{DBLP:journals/acta/Engelfriet15}.

\begin{definition}
A \textit{pebble transducer (PB)} is a $6$-tuple $T = (Q,q_0,F,k,\delta,\nu)$ where
\begin{enumerate}
    \item $Q$ is a non-empty, finite set of \emph{states},
    \item $q_0 \in Q$ is the \textit{start state},
    \item $F \subseteq Q $ is the set of \textit{final states},
    \item $k$ is the number of \emph{pebbles} allowed to be placed on the tape,
    \item $\delta:(Q - F) \times (\bin \bigcup \{\dashv,\vdash\}) \times \bin^k \rightarrow Q \times \{+1,-1,\push,\pop\}$ is the \textit{transition function},
    \item $\nu:(Q-F) \times (\bin \bigcup \{\dashv,\vdash\}) \times \bin^k \rightarrow \fbins$ is the \textit{output function}.
\end{enumerate}
\label{PB: Def: Pebble Transducer}
\end{definition}

A PB with $k$ pebbles is referred to as a $k$-pebble transducer. On input $x \in \fbins$, the input tape contains $\dashv x \vdash$, where $\dashv$ and $\vdash$ are the left and right end markers of the tape respectively. The tape squares are numbered $0,1,\ldots,|x|,|x|+1$. A \emph{configuration} of $T$ is a $4$-tuple $(q,i,\sigma,w)$ where $q \in Q$ is the current state of $T$, $0 \leq i \leq |x|+1$ is the current position of the head, $\sigma \in \{\bot,0,\ldots ,|x|+1\}^k$ is a tuple indicating the location of the pebbles and $w \in \fbins$ is what $T$ has outputted so far. That is, $\sigma[m-1] = j$ means that pebble $m$ is on square $j$, and $\sigma[m-1] = \bot$ means pebble $m$ is not currently on the tape. Hence by the stack nature of $T$, if only $l$ pebbles are currently placed on the tape then $\sigma[l] = \cdots = \sigma[k-1] = \bot$.

If $T$ is in configuration $(q,i,\sigma,w)$ with $a$ being the symbol on square $i$ of the tape, then $T$'s transition and output functions $\delta$ and $\nu$ take the input $(q,a,b)$ where for $0\leq j \leq k-1$, $b[j] = 1 \iff \sigma[j] = i$.

 If there are $l$ pebbles on $T$'s tape, $\delta(q,a,b)=(q',d)$ means that $T$ moves from state $q$ to state $q'$ and performs action $d$, where $+1$ means move one square right, $-1$ means move one square left, $\push$ means place pebble $l+1$ onto the current square and $\pop$ means remove pebble $l$ from the current square. These four types transitions are undefined if performing $d$ results in an impossible action, i.e. if $d = +1$ when $a = \,  \vdash$, $d = -1$ when $a = \, \dashv$, $d = \push$ when all pebbles are currently on the tape, and $d = \pop$ when pebble $l$ is not on the current square of the configuration respectively. Following a transition, $T$ enters the \emph{successor} configuration $(q',i',\sigma',w\cdot \nu(q,a,b))$, where $q',i'$ and $\sigma'$ reflect the new state, head position and place of the pebbles based on the result of $\delta(q,a,b)$. Specifically $q' = \delta_Q(q,a,b),$ $i' \in \{i-1,i,1+1\}$ depending on whether the instruction was to move left, push or pop, or move right respectively and
         \begin{equation}
         \sigma' = \begin{cases}
            \sigma \textrm{ if the instruction was $+1$ or $-1$}, \\
            \sigma[0..l-1]i\bot^{k-l-1} \textrm{ if the instruction   was to push pebble $l+1$},\\
            \sigma[0..l-2]\bot^{k-l+1} \textrm{ if the instruction  was to pop pebble $l$}.
        
                \end{cases}
                \label{PB: Sec: PBT: EQ1}
\end{equation}
    
We say that $T$ on input $x$ outputs $w$, i.e. $T(x) = w,$ if starting in configuration $(q_0,0,\bot^k,\lambda)$, there is a finite sequence of successor configurations of $T$ ending with $(q,i,\sigma,w)$, where $q \in F$. We require the use of final states to define the output as we do not wish to consider cases where $T$ finds itself in a loop and outputs an infinite sequence, i.e. $T(x) = zy^{\omega}$, for some $z,y\in \fbins$ where $|y| \geq 1$. We write \textrm{PB} to denote the set of deterministic pebble transducers. We use PB$_k \subset$ PB to denote set of deterministic $k$-pebble transducers. Note then that each $M \in \textrm{PB}$ computes a partial function from $\fbins$ to $\fbins$.

Using constructions of \cite{DBLP:journals/ita/GeffertI10}, Engelfriet showed that the set of functions computed by PBs is closed under composition \cite{DBLP:journals/acta/Engelfriet15}. This property is used in the proof of a slow growth law in Theorem \ref{PB: Thm: SGL}.

 \begin{theorem}[\cite{DBLP:journals/acta/Engelfriet15}]
   Let $r,m \geq 0$. Let $R \in $ PB$_r$ and $M \in $PB$_m$. Then there exists $T \in$ PB$_{rm + r + m}$ such that for all $x \in \fbins$, $T(x) = R(M(x)).$
    \label{PB: Thm: PB Composition}
\end{theorem}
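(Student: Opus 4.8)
The plan is to construct $T$ so that on input $x$ it simulates the run of $R$ on the \emph{virtual} tape $\dashv M(x)\vdash$ while its own tape only ever holds $\dashv x\vdash$; thus $T$ never materialises $M(x)$. Everything hinges on one encoding. Fix $M$ and let $B$ bound $|\nu_M(q,a,b)|$ over all arguments, so that $M$ emits its output in blocks of length at most $B$. Each virtual position that is not one of the two end markers lies inside the block emitted at some transition step of $M$ on input $x$; encode such a position by the pair consisting of the configuration of $M$ immediately before that step together with the offset $o<B$ of the bit within that block. The only unbounded part of an $M$-configuration is geometric: the position of $M$'s head together with the positions of the $\leq m$ pebbles of $M$ that are currently down --- at most $m+1$ squares of the real tape $\dashv x\vdash$, which $T$ records with $m+1$ of its own pebbles. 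The remaining data (the state of $M$, the offset $o$, which of $M$'s pebbles are down, and a flag for the two end markers) is bounded and lives in $T$'s finite control.

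Given this encoding, I would equip $T$ with three subroutines acting on encoded virtual positions. \textbf{Read:} from the encoding of a virtual position $p$, recover the virtual symbol at $p$ by replaying $M$ from its start configuration while maintaining a running comparison against the stored configuration; this costs $m$ fresh pebbles of $T$ placed above whatever is already frozen. \textbf{Step:} from the encoding of $p$, produce the encoding of $p+1$ (advance $M$'s replay to the next output-producing step, or merely increment $o$) or of $p-1$; the case $p-1$ with $o=0$ is the awkward one, since $M$ writes left to right and has no built-in way to return to its previous output step, so $p-1$ must be found by replaying $M$ from the start and remembering the last output-producing configuration reached before $p$. \textbf{Compare:} decide equality of two encoded virtual positions by walking $T$'s head between the relevant frozen pebble groups and checking the finite-control parts. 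On top of these, $T$ simulates $R$ step by step: it holds $R$'s state in finite control, holds the encodings of the virtual positions of $R$'s currently-dropped pebbles as frozen pebble groups low in $T$'s stack --- here it is essential that $R$'s own pebbles obey a stack discipline, so that only the topmost group is ever created or destroyed, which is compatible with $T$'s pebbles also being stack-disciplined --- and holds the encoding of $R$'s head position as well; one step of $R$ then amounts to using \textbf{Read} and \textbf{Compare} to assemble $R$'s input triple, consulting $\delta_R$ and $\nu_R$, appending $\nu_R(\cdot)$ to the output, and updating the configuration with \textbf{Step} (a head move) or with a copy or discard of the head encoding (a $\push$ or $\pop$ of $R$). $T$ halts in a final state exactly when $R$ does, so $T(x)=R(M(x))$, and the base cases $r=0$ or $m=0$ are covered uniformly.

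The real obstacle is the pebble count. A naive layout of the above uses on the order of $(r+1)(m+1)+m$ pebbles --- one $(m+1)$-group per pebble of $R$, one more for $R$'s head, and $m$ working pebbles for replaying $M$ --- which exceeds the target $rm+r+m$. Bringing it down requires a carefully chosen stack-layout invariant ensuring that the data needed at any given moment is never all simultaneously live: in particular, the group encoding the virtual position that a subroutine is currently decoding should share space with that subroutine's $m$ working pebbles rather than sit beside them, and the scratch needed to replay $M$ during a leftward move of $R$'s head must overlap space that the update itself frees. Pinning down this layout, and verifying that it is preserved through every $\push$, $\pop$ and move of both $R$ and $M$, is where essentially all the difficulty lies; this is exactly the analysis carried out by Engelfriet using the two-way-transducer normal forms of Geffert and Ikenmeyer \cite{DBLP:journals/ita/GeffertI10} to tame the ``go back'' problem, and it also underlies the composition closure in \cite{DBLP:conf/mfcs/EngelfrietM02}. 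Once the layout is fixed, correctness of the three subroutines and of the overall simulation follows by a routine induction on the length of $R$'s computation. A softer alternative would be to obtain closure under composition from one of the known characterisations of polyregular functions and then apply Lhote's theorem \cite{DBLP:conf/lics/Lhote20} to turn the composite back into a pebble transducer, but this would not recover the sharp bound $rm+r+m$ without redoing the same bookkeeping.
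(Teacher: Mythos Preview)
The paper does not prove this theorem; it is quoted from Engelfriet \cite{DBLP:journals/acta/Engelfriet15} (building on \cite{DBLP:journals/ita/GeffertI10}) and used as a black box, so there is no ``paper's own proof'' to compare your proposal against.

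On its own merits, your sketch is the right architecture --- simulate $R$ on the virtual tape $\dashv M(x)\vdash$ by encoding each virtual position as a bounded amount of finite-control data plus at most $m{+}1$ real-tape positions held by pebbles, and exploit $R$'s stack discipline so that only the topmost encoded group is ever created or destroyed. You are also honest that the sketch, as it stands, does \emph{not} establish the theorem: your own count gives roughly $(r{+}1)(m{+}1)+m$ pebbles, which is $m{+}1$ too many, and you explicitly defer the layout argument that removes those extra pebbles to Engelfriet's analysis. That is a genuine gap if this were meant as a self-contained proof, since the entire content of the statement is the bound $rm+r+m$ (closure under composition alone is older, from \cite{DBLP:conf/mfcs/EngelfrietM02}). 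Two concrete places where pebbles can be saved, in case you want to close the gap yourself: (i) the group encoding $R$'s head need not coexist with the $m$ working pebbles used to replay $M$, because you only replay $M$ precisely when you are recomputing or moving the head encoding; and (ii) for the backward step ``$p-1$ with $o=0$'' you do not need a second full group to remember the previous output-producing configuration --- you can overwrite the current head group in place and detect ``one step before $p$'' using the finite control together with the pebbles already frozen for $R$'s lower pebbles. Making these overlaps precise and checking they respect the stack discipline across all cases is exactly the bookkeeping Engelfriet carries out.
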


\subsection{Finite-State Transducers}

We use the standard finite-state transducer model.

\begin{definition}
 A \emph{finite-state transducer (FST)} is a $4$-tuple $T = (Q,q_0,\delta,\nu)$, where
\begin{itemize}
    \item $Q$ is a non-empty, finite set of \emph{states},
    \item $q_0 \in Q$ is the \emph{initial state},
    \item $\delta : Q\times \bin \rightarrow Q$ is the \emph{transition function}, 
    \item $\nu : Q\times \bin \rightarrow \fbins$ is the \emph{output function}.
\end{itemize}
\label{FS: Def: FST}
\end{definition}

For all $x \in \fbins$ and $b \in \bin$, the \emph{extended transition function} $\widehat{\delta}:\fbins \rightarrow Q$ is defined by the recursion $\widehat{\delta}(\lambda) = q_0$ and $\widehat{\delta}(xb) = \delta(\widehat{\delta}(x),b).$ For $x \in \fbins$, the output of $T$ on $x$ is the string $T(x)$ defined by the recursion $T(\lambda) = \lambda$, and $T(xb) = T(x)\nu(\widehat{\delta}(x),b)$. We require the class of \textit{information lossless} finite-state transducers to later demonstrate a slow growth law.

\begin{definition}
An FST $T$ is \emph{information lossless (IL)} if for all $x \in \fbins$, the function $x \mapsto (T(x), \widehat{\delta}(x))$ is injective. 
\label{FS: Def: ILFST}
\end{definition}

In other words, an FST $T$ is IL if the output and final state of $T$ on input $x$ uniquely identify $x$. We call an FST that is IL an ILFST. By the identity FST, we mean the ILFST $I_{\mathrm{FS}}$ such that on every input $x$, $I_{\mathrm{FS}}(x) = x.$ We write (IL)FST to denote the set of all (IL)FSTs. We note that occasionally we call ILFSTs \textit{finite-state compressors} to emphasise when we view the ILFSTs as compressors as opposed to decompressors.

We require the concept of \textit{(information lossless) finite-state computable} functions to demonstrate our slow growth also.

\begin{definition}
A function $f: \infbins \rightarrow \infbins$ is said to be  \emph{(information lossless) finite-state computable ((IL)FS computable)} if there is an (IL)FST $T$ such that for all $S \in \infbins$, $\lim\limits_{n \to \infty}|T(S \upharpoonright n)| = \infty$ and for all $n \in \N$, $T(S \upharpoonright n) \sqsubseteq f(S)$.
\label{FS: Def: ILFS Computable}
\end{definition}
Based on the above definition, if $f$ is (IL)FS computable via the (IL)FST $T$, we say that $T(S) = f(S)$.
We often use the following two results \cite{Huff59a,Koha78} that demonstrate that any function computed by an ILFST can be inverted to be approximately computed by another ILFST. 

\begin{theorem}[\cite{Huff59a,Koha78}]
 For all $T \in \ILFST$, there exists $T^{-1} \in \ILFST$ and a constant $c \in \N$ such that for all $x \in \fbins$, $x \upharpoonright (|x| - c) \sqsubseteq T^{-1}(T(x)) \sqsubseteq x$.
 \label{FS: Thm: inverse fst}
\end{theorem}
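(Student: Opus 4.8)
The plan is to realise $T^{-1}$ as a \emph{bounded-delay decoder} for $T$, exactly along the lines of the classical constructions of \cite{Huff59a,Koha78}. On an input word $w$ — which we think of as $w=T(x)$ — the machine $T^{-1}$ keeps in its finite control the set of all runs of $T$ that are still compatible with the prefix of $w$ read so far, and it appends a bit to its own output as soon as every one of these surviving runs agrees on that next input bit; when $w$ is exhausted it simply halts, so that the last few bits of $x$ whose value $w$ does not yet force are never emitted. A state of $T^{-1}$ is therefore (a canonical encoding of) a finite set of pairs $(q,\beta)$, where $q$ is a state $T$ could currently be in along some compatible run and $\beta$ is the short string by which that run's output currently runs ahead of, or behind, the part of $w$ already consumed, together with the bounded block of input bits that the live runs have read but that $T^{-1}$ has not yet committed. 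Granting that (i) these buffers $\beta$ and the uncommitted block range over a set that is finite and determined by $T$, so that $T^{-1}$ is a genuine FST, and (ii) the number of uncommitted input bits is always at most some constant $c=c(T)$, it follows that after reading all of $w=T(x)$ the transducer $T^{-1}$ has output a prefix of $x$ of length at least $|x|-c$; that is, $x\upharpoonright(|x|-c)\sqsubseteq T^{-1}(T(x))\sqsubseteq x$. Information losslessness of $T^{-1}$ itself is then immediate, since its final state records the surviving runs and their buffers, which together with its output pins down $w$ — or one simply cites \cite{Huff59a,Koha78} for the whole package.

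Both (i) and (ii) come down to the single fact that information losslessness forbids any pattern of behaviour that stays ambiguous arbitrarily long. The first, easy, instance is that $T$ has no \emph{silent loop}: if some state $q$ reachable from $q_0$ admitted a nonempty word $u$ with $\widehat\delta$-action sending $q$ to $q$ and total output $\lambda$, then fixing $x$ with $\widehat\delta(x)=q$, the infinitely many distinct inputs $x,xu,xu^2,\dots$ would all map to $(T(x),q)$, contradicting Definition~\ref{FS: Def: ILFST}; in particular any $|Q|$ consecutive transitions of $T$ produce at least one output bit, so $|T(x)|$ is at least linear in $|x|$. To pin down the buffers, one runs two copies of $T$ on two inputs that agree for a while and then split while keeping their outputs prefix-comparable, synchronising the copies by output position; the configuration of this product is a triple consisting of the two current $T$-states and the string of output by which one copy overhangs the other, and a one-step analysis shows the overhang can only shrink, vanish, or flip sides (with the new overhang again of length at most $\max_{q,b}|\nu(q,b)|$), so there are only finitely many product configurations. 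This gives (i), and reduces (ii) to the same finiteness.

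The step I expect to be the real obstacle is the uniform delay bound (ii): one must show that if, reading $w=T(x)$, the value of some input bit of $x$ stayed genuinely undetermined for too long, then information losslessness would be violated. The argument is a pumping argument on the product above: if a bit remained uncommitted after more than (number of product configurations) further steps, then by the pigeonhole principle a product configuration would recur, so between its two occurrences each of the two copies traverses a loop in $T$ (returning to its state with the overhang unchanged), and these two loops emit words of equal length that, by the no-silent-loop fact, is nonzero and that are tied to the fixed overhang by a word equation. Iterating the two loops in step — taking just enough repetitions on each side that the overhang cancels and the two runs end in the same $T$-state — then exhibits two distinct inputs of $T$ with the same output and the same final state, the forbidden configuration. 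Getting this last coordination right (which loop to iterate how many times so that both the overhang and the final states line up) is the delicate part, and it is precisely the content of the ``no bad loop in the testing graph'' characterisation of information losslessness underlying \cite{Huff59a,Koha78}; with it, any $c$ at least the number of product configurations works.
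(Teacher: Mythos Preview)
The paper does not supply a proof of this theorem; it is quoted as a classical result with citations to \cite{Huff59a,Koha78} and used as a black box (in Lemma~\ref{PB: Lemma: ILFST PB LEQs} and Theorem~\ref{PB: Thm: SGL}). So there is nothing in the paper to compare your argument against.

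Your sketch is the standard bounded-delay decoder construction underlying those references, and the outline is sound. Two small points of care. First, your claim that the overhang ``can only shrink, vanish, or flip sides'' is not literally true step-by-step, since a single transition of the trailing copy may output $\lambda$; what you need (and essentially have, via the no-silent-loop observation) is that within $|Q|$ steps the trailing copy emits, so the overhang is eventually bounded by $\max_{q,b}|\nu(q,b)|$ and hence the product configuration space is finite. Second, in the pumping step for~(ii) you do not actually need to arrange that the two iterated runs end in the \emph{same} $T$-state: once a product configuration recurs you can pump the pair of loops to make the ambiguous segment arbitrarily long while the set of possible (output, final-state) pairs stays bounded by $|Q|$ times the finite overhang set, so pigeonhole already forces two distinct inputs with the same image under $x\mapsto (T(x),\widehat\delta(x))$. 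That is exactly the ``no compatible cycle in the testing graph'' criterion of \cite{Koha78}, and with it any $c$ at least the number of product configurations suffices, as you say.
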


\subsection{Lempel-Ziv 78 Algotithm}
The Lempel-Ziv 78 algorithm (denoted LZ) \cite{DBLP:LZ78} is a lossless dictionary based compression algorithm. Given an input $x \in \fbins$, LZ parses $x$ into phrases $x = x_1x_2\ldots x_n$ such that each phrase $x_i$ is unique in the parsing, except for possibly the last phrase. Furthermore, for each phrase $x_i$, every prefix of $x_i$ also appears as a phrase in the parsing. That is, if $y \sqsubset x_i$, then $y = x_j$ for some $j<i$. Each phrase is stored in LZ's dictionary. LZ encodes $x$ by encoding each phrase as a pointer to its dictionary containing the longest proper prefix of the phrase along with the final bit of the phrase. Specifically for each phrase $x_i$, $x_i = x_{l(i)}b_i$ for $l(i) < i$ and $b_i \in \{0,1\}.$ Then for $x = x_1x_2\ldots x_n$
$$LZ(x) = c_{l(1)}b_1c_{l(2)}b_2\ldots c_{l(n)}b_n$$
where $c_i$ is a prefix free encoding of the pointer to the $i^{th}$ element of LZ's dictionary, and $x_0 = \lambda$.

\subsection{Pushdown Compressors}

The model of pushdown compressors (PDC) we use to define pushdown depth can be found in \cite{DBLP:journals/mst/MayordomoMP11} where PDCs were referred to as \textit{bounded pushdown compressors}. We use this model  as it allows for feasible run times by bounding the number of times a PDC can pop a bit from its stack without reading an input bit. This prevents the compressor spending an arbitrarily long time altering its stack without reading its input. This model of pushdown compression also has the nice property that it equivalent to a notion of pushdown-dimension based on bounded pushdown gamblers \cite{DBLP:conf/birthday/AlbertMM17}. Similar models where there is not bound on the number of times a bit can be popped off from the stack can be found in \cite{DBLP:journals/tcs/DotyN07}. 

The following contains details of the model used. It is taken from \cite{DBLP:journals/mst/MayordomoMP11}.
\begin{definition}
A \emph{pushdown compressor (PDC)} is a 7-tuple $C = (Q, \Gamma, \delta, \nu, q_0, z_0,c)$ where
\begin{enumerate}
    \item $Q$ is a non-empty, finite set of \textit{states},
    \item $\Gamma = \{0,1,z_0\}$ is the finite \textit{stack alphabet},
    \item $\delta: Q \times (\bin \cup \{\lambda\}) \times \Gamma \rightarrow Q \times \Gamma^*$ is the \emph{transition function},
    \item $\nu: Q \times (\bin \cup \{\lambda\}) \times \Gamma \rightarrow \fbins$ is the \emph{output function},
    \item $q_0 \in Q$ is the \textit{initial state},
    \item $z_0 \in \Gamma$ is the special \textit{bottom of stack symbol},
    \item $c\in \N$ is an upper bound on the number of $\lambda$-transitions per input bit.
\end{enumerate}
\label{PD: Def: PD}
\end{definition}

We write $\delta_Q$ and $\delta_{\Gamma*}$ to represent the projections of the function $\delta.$ For $z \in \Gamma^+$ the stack of $C$, $z$ is ordered such that $z[0]$ is the topmost symbol of the stack and $z[|z|-1] = z_0.$ $\delta$ is restricted to prevent $z_0$ being popped from the bottom of the stack. That is, for every $q \in Q, \, b \in \bin \cup \{\lambda\}$, either $\delta(q,b,z_0) = \bot$, or $\delta(q,b,z_0) = (q',vz_0)$ where $q' \in Q$ and $v \in \Gamma^*$.

Note that $\delta$ accepts $\lambda$ as a valid input symbol. This means that $C$ has the option to pop the top symbol from its stack and move to another state without reading an input bit. This type of transition is call a $\lambda$\textit{-transition}. In this scenario $\delta(q,\lambda, a) = (q',\lambda)$. To enforce determinism, we ensure that one of the following hold for all $q \in Q$ and $a \in \Gamma$: \begin{itemize}
    \item $\delta(q,\lambda,a) = \bot$, or
    \item $\delta(q,b,a) = \bot$ for all $b \in \bin$.
\end{itemize}
This means that the compressor does not have a choice to read either $0$ or $1$ characters. To prevent an arbitrary number of $\lambda$-transitions occurring at any one time, we restrict $\delta$ such that at most $c$ $\lambda$-transitions can be performed in succession without reading an input bit.

The extended transition function $\widehat{\delta}: Q \times \fbins \times \Gamma^+ \rightarrow Q \times \Gamma^*$ is defined by the usual induction similar to the FST case. $\widehat{\delta}(q_0,w,z_0)$ is abbreviated to $\widehat{\delta}(w)$. The \emph{extended output function} $\widehat{\nu}: Q\times \fbins \times \Gamma^+ \rightarrow Q \times \Gamma^*$ is also defined by the usual induction similar to the FST case also. We omit the sepcifics for space constraints, however, for full details on both, see \cite{DBLP:journals/mst/MayordomoMP11}. The \emph{output} of the PDC $C$ on input $w \in \fbins$ is the string $C(w) = \widehat{\nu}(q_0,w,z_0).$

To make our notion of depth meaningful, we examine the class of \textit{information lossless pushdown compressors}.

\begin{definition}
A PDC $C$ is \textit{information lossless (IL)} if for all $x \in \fbins$, the function $x \mapsto (C(x),\delta_Q(x))$ is injective.
\label{PD: Def: ILPDC}
\end{definition}

In other words, a PDC $C$ is IL if the output and final state of $C$ on input $x$ uniquely identify $x$. We call a PDC that is IL an ILPDC. We write (IL)PDC to denote the set of all (IL)PDCs. By the identity PDC, we mean the ILPDC $I_{\textrm{PD}}$ where on every input $x$, $I_{\textrm{PD}}(x) = x.$

As part of our definition of pushdown depth, we examine ILPDCs whose stack is limited to only containing the symbol $0$ also.

\subsubsection{Unary-stack Pushdown Compressors}

Unary-stack pushdown compressors (UPDCs) are similar to counter compressors as seen in \cite{DBLP:journals/jcss/BecherCH15}. The difference here is that for a UPDC, only a single $0$ can be popped from the stack during a single transtion, while for a counter transducer, an arbitrary number of $0$s can be popped from its stack on a single transition, i.e. its counter can be deducted by an arbitrary amount. However, the UPDC has the ability to pop off $0$s from its stack without reading a symbol via $\lambda$-transitions while the counter compressor cannot. Thus, if a counter compressor decrements its counter by the value of $k$ on a single transition, a UPDC can do the same by performing $k-1$ $\lambda$-transitions in a row before reading performing the transition of the counter compressor and popping off the final $0$.

\begin{definition}
A \emph{unary-stack pushdown compressor (UPDC)} is a $7$-tuple $$C = (Q,\Gamma, \delta, \nu, q_0, z_0, c)$$ where $Q,\delta, \nu, q_0, z_0$ and $c$ are all defined the same as for a PDC in Defintion \ref{PD: Def: PD}, while the \emph{stack alphabet} $\Gamma$ is the set $\{0,z_0\}$.
\label{UPDC def}
\end{definition}

\begin{definition}
A UPDC $C$ is \textit{information lossless (IL)} if for all $x \in \fbins$, the function $x \mapsto (C(x),\delta_Q(x))$ is injective. A UPDC which is IL is referred to as an ILUPDC.
\label{PD: Def: ILUPDC}
\end{definition}

We make the following observation regarding ILUPDCs. Let $C \in \ILUPDC$ and suppose it has been given the input $yx$. After reading the prefix $y$, if $C$'s stack height is large enough such that it never empties on reading the suffix $x$, the actual height of the stack doesn't matter. That is, any reading of $x$ with an arbitrarily large stack which is far enough away from being empty will all have a similar behaviour if starting in the same state. This is because if the stack does not empty, it has little impact on the processing of $x$. We describe this below.

\begin{remark}
Let $C \in \ILUPDC$ and suppose $C$ can perform at most $c$ $\lambda$-transitions in a row. Consider running $C$ on an input of the form $yx$ and let $q$ be the state $C$ ends in after reading $y$. If $C$'s stack has a height above $(c+1)|x|$ after reading $y$, then $C$'s stack can never be fully emptied upon reading $x$. Hence, for $k,k' \geq (c+1)|x|$ with $k\neq k'$ then $C(q,x,0^kz_0) = C(q,x,0^{k'}z_0),$ i.e. $C$ will output the same string regardless of whether the height is $k$ or $k'$. Thus, prior to reading $x$, only knowing whether the stack's height is below $(c+1)|x|$ will have any importance.
\label{PD: Rmk: height discussion}
\end{remark}

 \subsection{\textit{k}-String Complexity}
 
 In \cite{DotyM07}, a notion of finite-state-depth (FS-depth) was introduced which was based on the finite-state minimal descriptional complexity of strings. Further study of finite-state minimal descriptional complexity can be found in \cite{calude2011finite,calude2016finite}. We generalise this idea to any class of transducers $T$ to define our depth notions. In particular, we consider finite-state and pebble-complexity to later define PB-depth.
 
\begin{definition}
Let F be a class of transducers and $D \subseteq \fbins$ be an infinite, computable set of strings. A \emph{binary representation of} $\Ft$\emph{-transducers} $\sigma$ is a computable map $\sigma : D \rightarrow \mathrm{F}$, such that for every transducer $T\in\Ft$, there exists some $x \in D$ such that $\sigma(x)=T$, i.e. $\sigma$ is surjective. If $\sigma(x) = T$, we call $x$ a $\sigma$ $description$ of $T$.
\label{Prelim: Def: Binary Description}
\end{definition}

\noindent For a binary representation of F-transducers $\sigma$, we define $$|T|_{\sigma} = \min \{|x| : \sigma(x) = T\}$$ to be the \emph{size of {$T$} with respect to {$\sigma$}}. For all $k \in \N$, define $$\Fsize{k}_{\sigma} = \{ T \in \mathrm{F}:|T|_{\sigma} \leq k \}$$ to be the set of F-transducers with a $\sigma$ description of size $k$ or less. For all $k \in \N$ and $x \in \fbins$, the \emph{k-F complexity} of $x$ with respect to binary representation $\sigma$ is defined as $$D^{k}_{\sigma}(x) = \min \Big \{ |y| \, : \,  T \in \mathrm{F}^{\leq k}_{\sigma} \, \wedge \, T(y) = x \, \Big \}.$$ Here, $y$ is the shortest string that gives $x$ as an output when inputted into an F-transducer of size $k$ or less with respect to the binary representation $\sigma$. 
 
We fix an arbitrary binary representation of pebble transducers in this paper and write $\PBcomp{k}(x)$ to represent the $k$-PB complexity of string $x$. Note that this should not be confused with the set PB$_k$. We use the binary representation for FSTs presented in \cite{JordonM20}. It is chosen as it is used in the proof of Lemma \ref{FS: Lemma: GEQ Lemma} which requires an upper bound for the size of FSTs with the same transition and output functions, but with different start states. We omit the full details of the representation as if a sequence is FS-deep with respect to one representation, it is FS-deep with respect to all representations \cite{JordonMPDLZ}. Hence we write $\FScomp{k}(x)$ to represent the $k$-FS complexity of string $x$.

The upper and lower randomness density of a sequence $S$ for a family of transducers $F$ are given by $$\rho_F(S) = \lim_{k \to \infty}\liminf_{n \to \infty}\frac{\Fcomp{k}(S \upharpoonright n)}{n}, \textrm{ and } R_F(S) = \lim_{k \to \infty}\limsup_{n \to \infty}\frac{\Fcomp{k}(S \upharpoonright n)}{n} \textrm{ respectively.}$$ We say a sequence $S$ is $\Ft-$\emph{trivial} if $R_{\Ft}(S) = 0$ and $\Ft-$\emph{incompressible} if $\rho_{\Ft}(S) = 1$.

For every sequence $S$, the following result gives a relation between $k$-FS complexity of the prefixes of $S$ and the compression ratio of ILFSTs on prefixes of $S$.

\begin{theorem}[\cite{b.doty.moser.lossy.decompressors,DBLP:journals/iandc/SheinwaldLZ95}]
Let $S$ be a sequence. Then $$\rho_{FS}(S) = \inf_{C \in \ILFST}\limits\liminf_{n \to \infty}\frac{|C(S \upharpoonright n)|}{n}.$$
\label{compvs decomp}
\end{theorem}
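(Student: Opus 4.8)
The plan is to establish the two inequalities $\rho_{FS}(S)\le\gamma$ and $\rho_{FS}(S)\ge\gamma$, where $\gamma=\inf_{C\in\ILFST}\liminf_{n\to\infty}|C(S\upharpoonright n)|/n$ denotes the right-hand side. For a single transducer $T$ write $D^T(x)=\min\{|y|:T(y)=x\}$, and let $\beta_k=\liminf_{n\to\infty}\FScomp{k}(S\upharpoonright n)/n$; since $\FScomp{k}$ is non-increasing in $k$ we have $\rho_{FS}(S)=\inf_k\beta_k$.

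For $\rho_{FS}(S)\le\gamma$, fix $C\in\ILFST$. By Theorem~\ref{FS: Thm: inverse fst} there are $C^{-1}\in\ILFST$ and $c\in\N$ with $x\upharpoonright(|x|-c)\sqsubseteq C^{-1}(C(x))\sqsubseteq x$ for all $x$; let $k_0$ be the size of $C^{-1}$ in the fixed representation and put $m_n=|C^{-1}(C(S\upharpoonright n))|\in[n-c,n]$. Then $C^{-1}$, a transducer of size $\le k_0$, maps the string $C(S\upharpoonright n)$ of length $|C(S\upharpoonright n)|$ to $S\upharpoonright m_n$, so $\FScomp{k_0}(S\upharpoonright m_n)\le|C(S\upharpoonright n)|$ for every $n$. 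The key observation is that we never need a decompressor that recovers $S\upharpoonright n$ exactly: passing to a subsequence of $n$ along which $|C(S\upharpoonright n)|/n\to\liminf_n|C(S\upharpoonright n)|/n$ and using $m_n/n\to1$ gives $\beta_{k_0}\le\liminf_n|C(S\upharpoonright n)|/n$, hence $\rho_{FS}(S)\le\liminf_n|C(S\upharpoonright n)|/n$; taking the infimum over $C$ yields $\rho_{FS}(S)\le\gamma$.

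For $\rho_{FS}(S)\ge\gamma$, fix $\varepsilon>0$ and aim to build $C\in\ILFST$ with $\liminf_n|C(S\upharpoonright n)|/n\le\rho_{FS}(S)+\varepsilon$. First I would reduce to a single decompressor: among the finitely many transducers $T_1,\dots,T_N$ of size $\le k$, the transducer $T^{(k)}$ reading a $\lceil\log N\rceil$-bit index $i$ followed by $p$ and outputting $T_i(p)$ has some fixed size and satisfies $D^{T^{(k)}}(x)=\FScomp{k}(x)+\lceil\log N\rceil$ whenever $\FScomp{k}(x)<\infty$; choosing $k$ with $\beta_k\le\rho_{FS}(S)+\varepsilon/2$ therefore fixes one decompressor $T:=T^{(k)}$ with $\liminf_n D^T(S\upharpoonright n)/n=\beta_k$. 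Next I would define $C$ as a block re-encoder with a large block length $\ell$ and look-ahead $\ell+c_{\max}$, where $c_{\max}$ bounds the output of a single transition of $T$: $C$ simulates $T$ while scanning $x$ in blocks, and for each block, starting from $T$'s current simulated state, it finds the shortest input piece $q_t$ that produces the next block of roughly $\ell$ bits of $x$ and emits a self-delimiting encoding of $q_t$ of length $|q_t|+O(\log\ell)$, and a verbatim-coded tail handles the final partial block. Bounded look-ahead makes $C$ a genuine FST, and since its output is self-delimiting and, via the deterministic simulation of $T$, recovers every block and hence $x$, the transducer $C$ is information lossless.

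The hard part is the analysis of this $C$: showing $|C(S\upharpoonright n)|\le D^T(S\upharpoonright n)+O\big(\tfrac{\log\ell}{\ell}\,n\big)+o(n)$ along the subsequence where $D^T(S\upharpoonright n)/n\to\beta_k$. The self-delimiting overhead is $O(\log\ell)$ per block, i.e.\ $O(\tfrac{\log\ell}{\ell}\,n)$ in total, and the tail is $o(n)$, so everything reduces to bounding $\sum_t|q_t|$ by $D^T(S\upharpoonright n)+o(n)$. This is delicate because $C$ commits block-by-block to the state $T$ lands in, which need not be the state a globally optimal $T$-program occupies at the same output position; I would close this gap by an amortized argument showing that re-synchronizing with an optimal run costs only $O(1)$ input symbols per block (any transducer with unbounded output is recurrent, so the states reached after at least $\ell$ output symbols lie in a strongly connected component traversable with $O(1)$ extra symbols), or, more robustly, by letting $C$ buffer a bounded window of blocks and run a Viterbi-style shortest-path computation over the finite product of output positions and states inside the window. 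Granting this bound, $\liminf_n|C(S\upharpoonright n)|/n\le\beta_k+O(\tfrac{\log\ell}{\ell})\le\rho_{FS}(S)+\varepsilon$ once $\ell$ is taken large enough, giving $\gamma\le\rho_{FS}(S)$ and completing the proof.
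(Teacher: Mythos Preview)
The paper does not give its own proof of this theorem: it is stated with a citation to \cite{b.doty.moser.lossy.decompressors,DBLP:journals/iandc/SheinwaldLZ95} and used as a black box. So there is no ``paper's proof'' to compare against, and what follows is an assessment of your sketch on its own terms.

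Your first inequality $\rho_{FS}(S)\le\gamma$ is correct and is exactly how one argues it: Theorem~\ref{FS: Thm: inverse fst} turns any IL compressor $C$ into a decompressor $C^{-1}$ of some fixed size $k_0$, the shift $m_n\in[n-c,n]$ is harmless for the $\liminf$, and the subsequence argument you give is sound.

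The second inequality is where the real content lies, and your sketch does not close it. Two points. First, your single ``universal'' decompressor $T^{(k)}$ is fine as an FST, but note that for this direction you actually need an \emph{IL} compressor achieving the decompression ratio; the reduction to one $T$ is only a preliminary step. Second, and more seriously, the block re-encoder you describe commits greedily, block by block, to a state of $T$, and you correctly flag that the globally optimal $T$-program need not pass through those states. Your proposed fixes do not work as stated: the SCC argument fails because traversing extra edges of $T$ to resynchronise states also produces extra \emph{output}, so you are no longer describing $S\upharpoonright n$; and a bounded Viterbi window still only optimises locally, so the gap $\sum_t|q_t|-D^T(S\upharpoonright n)$ can in principle be $\Theta(n/\ell)$ rather than $o(n)$. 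The arguments in the cited references avoid this trap: Sheinwald--Lempel--Ziv build the compressor so that for each output block it emits (a prefix-free code for) a shortest $T$-input \emph{over all possible start states}, together with the start/end state pair, which costs only $O(\log|Q|)$ extra bits per block; then any segmentation of the optimal program $p^\ast$ into pieces producing $\ge\ell$ output bits witnesses $\sum_t|q_t|\le|p^\ast|$ directly, with no resynchronisation needed. That is the missing idea in your sketch.
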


Normal sequences are used in the proof of Theorem \ref{malicious}. It is well known that a sequence is normal iff it is incompressible by information lossless finite-state transducers (see \cite{DBLP:journals/tcs/BecherH13,b.lutz.finite-state.dimension,DBLP:journals/acta/SchnorrS72}). Combining this with Theorem \ref{compvs decomp} gives us the following corollary.

\begin{corollary}
Let $S$ be a sequence. Then $S$ is normal iff $\rho_{\textrm{FS}}(S) = 1.$
\label{normal fs}
\end{corollary}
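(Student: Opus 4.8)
The plan is to derive this directly from the two ingredients already assembled in the excerpt: the classical fact that normality coincides with incompressibility by information lossless finite-state transducers (\cite{DBLP:journals/tcs/BecherH13,b.lutz.finite-state.dimension,DBLP:journals/acta/SchnorrS72}), and Theorem~\ref{compvs decomp}, which identifies the best-case ILFST compression ratio of $S$ with $\rho_{\mathrm{FS}}(S)$. The corollary is then just a matter of chaining these equivalences.

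First I would record that $\rho_{\mathrm{FS}}(S) \leq 1$ for every sequence $S$, so that the condition ``$\rho_{\mathrm{FS}}(S) = 1$'' is the same as FS-incompressibility in the sense of the definition given above. This holds because the identity transducer $I_{\mathrm{FS}}$ has a $\sigma$-description of some fixed size $k_0$, and since $I_{\mathrm{FS}}(S \upharpoonright n) = S \upharpoonright n$ we get $\FScomp{k}(S \upharpoonright n) \leq n$ for all $k \geq k_0$ and all $n$, whence the double limit defining $\rho_{\mathrm{FS}}(S)$ is at most $1$.

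For the forward direction, suppose $S$ is normal. By the cited characterization, no ILFST compresses $S$ in the best case, which is precisely the statement $\inf_{C \in \ILFST} \liminf_{n \to \infty} |C(S \upharpoonright n)|/n = 1$. Theorem~\ref{compvs decomp} then yields $\rho_{\mathrm{FS}}(S) = 1$. Conversely, if $\rho_{\mathrm{FS}}(S) = 1$, Theorem~\ref{compvs decomp} says the same infimum equals $1$, so no ILFST achieves best-case compression ratio strictly below $1$ on $S$; applying the cited characterization in the other direction gives that $S$ is normal.

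The argument is essentially bookkeeping, so I do not anticipate a genuine obstacle. The one point that needs a little care is checking that ``FS-incompressible'' as used in the statement, the quantity $\inf_{C \in \ILFST} \liminf_{n} |C(S \upharpoonright n)|/n$ appearing in Theorem~\ref{compvs decomp}, and ``incompressible by information lossless finite-state transducers'' in the cited literature are all phrased in terms of the same best-case (liminf) compression ratio, so that the equivalences compose cleanly. Once that alignment is confirmed, the corollary follows immediately.
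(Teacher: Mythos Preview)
Your proposal is correct and matches the paper's approach exactly: the paper does not give a standalone proof but simply states that the corollary follows by combining the cited characterization of normality via ILFST incompressibility with Theorem~\ref{compvs decomp}. Your write-up just makes this chaining explicit, and the extra observation that $\rho_{\mathrm{FS}}(S)\leq 1$ via $I_{\mathrm{FS}}$ is a harmless clarification.
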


The binary representation of FSTs we use satisfies the following lemma which is used, for example, in the proof of Lemma \ref{PB: Lemma: SGL No Work With PB}.
\begin{lemma}[\cite{JordonM20}]
There exists a binary representation of FSTs such that
$$(\forall^{\infty} k \in \N)(\forall n \in \N)(\forall x,y,z \in \fbins) \FScomp{k}(xy^nz) \geq \FScomp{3k}(x) + n\FScomp{3k}(y) + \FScomp{3k}(z).$$
\label{FS: Lemma: GEQ Lemma}
\end{lemma}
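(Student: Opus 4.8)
The plan is to exhibit a concrete binary representation of FSTs and then argue that, given any FST $T$ that witnesses $\FScomp{k}(xy^nz)$, i.e. $T(w) = xy^nz$ for some $w$ with $|w| = \FScomp{k}(xy^nz)$ and $|T|_\sigma \le k$, one can dissect the computation of $T$ on $w$ into three pieces that respectively produce $x$, the $n$ copies of $y$, and $z$, and then repackage each piece as a small FST. First I would recall the representation from \cite{JordonM20}: its key feature (the one already invoked in the statement) is that changing only the start state of an FST costs at most a factor of $3$ in description size, so if $T$ has $\sigma$-size $\le k$ then the FST obtained from $T$ by redesignating any state as the start state has $\sigma$-size $\le 3k$. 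I would make this quantitative claim precise as the one structural fact about $\sigma$ that the proof consumes.

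The core of the argument is a cut-and-paste on the run of $T$ on $w$. Write $w = w_0 w_1 \cdots w_n w_{n+1}$ where $w_0$ is the shortest prefix of $w$ whose output under $T$ is exactly $x$ (such a split point need not exist on the nose, so I would instead take the split points where the output crosses the boundaries $|x|$, $|x|+|y|$, \dots, $|x|+n|y|$, and absorb the unavoidable ``straddling'' characters carefully; this is the fiddly part). Let $q_0, q_1, \dots, q_{n+1}$ be the states of $T$ at these split points. The state sequence $q_1, \dots, q_n$ at the copies of $y$ takes values in the finite state set $Q$ of $T$, so by pigeonhole infinitely many of them coincide — but more simply, for the inequality we only need: the machine $T_x$ = ``$T$ started in $q_0$, run until it has emitted $|x|$ symbols'' produces $x$ from $w_0$; the machine $T_y^{(i)}$ = ``$T$ started in $q_i$'' produces the $i$-th copy of $y$ from $w_i$; and $T_z$ = ``$T$ started in $q_{n+1}$'' produces (a suffix containing) $z$ from $w_{n+1}$. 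Each of these is $T$ with only its start state changed, hence of $\sigma$-size $\le 3k$, giving $\FScomp{3k}(x) \le |w_0|$, $\FScomp{3k}(y) \le |w_i|$ for each $i$, and $\FScomp{3k}(z) \le |w_{n+1}|$; summing yields $\sum_i |w_i| = |w| = \FScomp{k}(xy^nz)$, which is the desired inequality. The ``$\forall^\infty k$'' quantifier appears because the factor-$3$ property of the representation may only hold for $k$ above some constant (depending on how many bits the representation spends on overhead), and the ``$3k$'' could in fact be any fixed constant multiple — I would keep it as $3k$ to match what is used downstream.

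The main obstacle is the alignment issue: an FST emits a block $\nu(\widehat\delta(u),b)$ of several bits on a single input character, so the output boundaries between $x$, the $y$-copies, and $z$ need not fall at input-character boundaries of $w$. I would handle this by choosing, for each boundary, the first input position at which the cumulative output length reaches or exceeds that boundary, and then noting that the ``overshoot'' at one boundary is exactly the prefix missing at the start of the next block; by defining $T_x, T_y^{(i)}, T_z$ to output a slightly truncated or slightly extended string and tracking these $O(1)$-per-boundary discrepancies (which telescope), one still gets an FST of size $\le 3k$ producing the right string from $w_i$. An alternative, cleaner route is to prove the statement first for the special representation where $\FScomp{}$ is defined via ILFSTs acting as decompressors and use Theorem \ref{FS: Thm: inverse fst} to move between compressor and decompressor pictures, but the direct cut-and-paste above is self-contained and is, I expect, what the cited paper does.
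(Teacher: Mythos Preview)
The paper does not actually contain a proof of this lemma; it is quoted as a result from \cite{JordonM20}, and the only hint the present paper gives about the argument is the remark preceding the lemma that the chosen binary representation ``requires an upper bound for the size of FSTs with the same transition and output functions, but with different start states.'' Your proposal identifies precisely this as the structural property of $\sigma$ that drives the proof, and your cut-and-paste strategy on the run of $T$ on $w$ is the natural (and, as far as the paper's hint indicates, the intended) approach.

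One point deserves tightening. You write that each of $T_x, T_y^{(i)}, T_z$ is ``$T$ with only its start state changed,'' but as you yourself flag, the output block $\nu(\widehat\delta(u),b)$ emitted on a single input bit may straddle an $x/y$ or $y/y$ or $y/z$ boundary, so literally re-rooting $T$ at $q_i$ and feeding it $w_i$ produces a string that may be missing a short prefix (the overshoot from the previous block) and may carry a short extra suffix (the overshoot into the next block). To get an FST that outputs \emph{exactly} $y$ (or $x$, or $z$) from $w_i$ you must also hard-wire the missing prefix into the first transition out of the new start state and truncate the final output block; both modifications touch only $O(1)$ transitions and add a bounded number of bits to the description, and it is this combined modification---not merely the change of start state---that the factor $3$ in the representation must absorb. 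Your telescoping remark is correct for the \emph{lengths} $|w_i|$, but the FSTs themselves need this extra patching, so the sentence ``each of these is $T$ with only its start state changed, hence of $\sigma$-size $\le 3k$'' should be replaced by a statement that the representation of \cite{JordonM20} bounds the size after changing the start state \emph{and} adjusting a constant number of output labels. With that correction the argument goes through.
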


\begin{remark}
Lemma \ref{FS: Lemma: GEQ Lemma} can be generalised such that for our fixed binary representation $\sigma$, we can break the input into any number of substrings to get a similar result. That is for any string $x = x_1\ldots x_n$,  $$(\forall^{\infty} k \in \N) \, \FScomp{k}(x_1 \ldots x_n) \geq \sum_{i=1}^n\FScomp{3k}(x_i).$$
\label{FS: Remark: any substring split}
\end{remark}

\section{Pebble-Depth}
\label{Pebble-Depth Section}

In this section we present our notion of \textit{pebble depth (PB-depth)}, show it satisfies the three basic fundamental properties of depth and identify a normal PB-deep sequence.

We define PB-depth by examining the difference in $k$-FS and $k'$-PB complexity on prefixes on sequences. This keeps to the spirit of Bennett's original notion of comparing Kolmogorov complexity against its restricted time-bounded version. FSTs can be viewed as a restricted subset of $0$-pebble transducers which can only move in one direction.

\begin{definition}
A sequence $S$ is \textit{pebble deep (PB-deep)} if $$(\exists \alpha > 0)( \forall k \in \N)( \exists k' \in \N)( \forall^{\infty} n \in \N) \FScomp{k}(S \upharpoonright n) - \PBcomp{k'}(S \upharpoonright n) \geq \alpha n.$$
\label{PB: Def: PB Depth}
\end{definition}

\begin{definition}
Let $S \in \infbins$. We say that PB-\textit{depth}$(S) \geq \alpha$ if $$(\forall k \in \N)(\exists k' \in \N)(\forall^{\infty} n \in \N) \, \FScomp{k}(S \upharpoonright n) - \PBcomp{k'}(S \upharpoonright n) \geq \alpha n.$$
    Otherwise we say PB-\textit{depth}$(S) < \alpha$.
    \label{PB: Def: PB depth}
\end{definition}

\subsection{Fundamental Properties of Pebble-Depth}

Before we show that pebble depth satisfies the fundamental properties of depth, we first show the existence of PB-incompressible sequences. The existence of PB-trivial sequences is evident in our later proofs. To demonstrate this, we first need the following result which relates H and K, the prefix-free and plain version of Kolmogorov complexity. It can be found as Corollary 2.4.2 in \cite{nies:book}.

\begin{lemma}
For all $x \in \fbins$ it holds that $$H(x) \leq K(x) + 2\log(K(x)) + O(1)\leq K(x) + 2\log(|x|) + O(1).$$
\label{PB: Lemma: H K Rel}
\end{lemma}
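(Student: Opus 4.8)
The plan is to establish the two inequalities separately. The second inequality, $K(x) + 2\log(K(x)) + O(1) \leq K(x) + 2\log(|x|) + O(1)$, is immediate once we recall the standard bound $K(x) \leq |x| + O(1)$: the universal machine $U$ can be fed a constant-size program that simply prints its literal argument, so $K(x) \leq |x| + c$ for an absolute constant $c$. Hence $\log(K(x)) \leq \log(|x| + c) \leq \log(|x|) + 1$ for all $x$ with $|x| \geq c$, and the finitely many shorter strings contribute only to the additive $O(1)$. (One also needs $|x| \geq 1$ so that $\log(|x|)$ is defined; the empty string is again folded into the $O(1)$.)

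For the first inequality the idea is to convert a shortest plain program for $x$ into a prefix-free one by prepending a self-delimiting description of its length. Fix a shortest $p \in \fbins$ with $U(p) = x$, so $|p| = K(x)$. Let $\overline{m}$ denote a standard self-delimiting encoding of a natural number $m$ using $2\log(m) + O(1)$ bits --- for instance, write the binary expansion of $m$, double each of its bits, and append the marker $01$. Define a prefix-free machine $V$ that, on input $w$, first reads a self-delimiting block to recover a number $m$, then reads exactly the next $m$ bits to obtain a string $q$, and finally outputs $U(q)$; on any other input $V$ diverges. Then $\dom(V)$ is prefix-free, since a valid input is determined bit by bit: the self-delimiting prefix tells $V$ when to stop reading the length, and the length then tells it exactly how many further bits to consume. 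Consequently $V(\overline{|p|}\,p) = U(p) = x$, and by the invariance theorem for prefix-free complexity, $H(x) \leq H_V(x) + O(1) \leq |\overline{|p|}| + |p| + O(1) = K(x) + 2\log(K(x)) + O(1)$, as required.

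The construction is essentially routine, and the only point demanding care --- the ``main obstacle'', such as it is --- concerns the degenerate cases where $K(x)$ equals $0$ or $1$, so that $\log(K(x))$ is ill-defined or non-positive. These are handled by reserving short codewords in the self-delimiting encoding for ``$m=0$'' and ``$m=1$'' and absorbing the bounded discrepancy into the $O(1)$ term; since there are only finitely many strings $x$ with $K(x) \leq 1$, this causes no difficulty. I therefore expect no genuine obstacle: the lemma is a packaging of the standard fact that a length prefix of $2\log(\cdot) + O(1)$ bits suffices to turn a plain program into a self-delimiting one.
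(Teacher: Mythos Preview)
Your proof is correct and entirely standard. Note, however, that the paper does not actually prove this lemma: it is stated without proof and attributed to Corollary~2.4.2 of Nies's book, so there is no in-paper argument to compare against. Your write-up supplies precisely the folklore construction (self-delimiting length prefix of $2\log m + O(1)$ bits, together with the trivial bound $K(x)\leq |x|+O(1)$) that underlies the cited result.
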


Martin-L{\"{o}}f randomness (ML-randomness)\cite{martinlof66} is used throughout algorithmic information theory as a way to define random sequences. While there are several equivalent characterisations of ML-randomness \cite{nies:book,downey:book}, we shall give the definition based on prefix-free Kolmogorov complexity.

\begin{definition}
$S \in \infbins$ is \emph{ML-random} if there is a constant $c \in \N$ such that for all $n \in \N$ it holds that $H(S \upharpoonright n) > n - c.$
\label{prelim: Def: 1 random}
\end{definition}

We next show ML-random sequences are PB-incompressible.

\begin{lemma}
If $S \in \infbins$ is ML-random, then $\rho_{\PB}(S) = 1.$
\label{PB: Lemma: pebble random}
\end{lemma}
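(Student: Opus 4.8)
The plan is to show that if $S$ is ML-random, then for every $k$-pebble transducer $T$ of size at most $k'$, the shortest input $y$ with $T(y) = S \upharpoonright n$ cannot be much shorter than $n$, so that $\PBcomp{k'}(S \upharpoonright n)/n \to 1$. The key point is a Kolmogorov-complexity bound: a pebble transducer of bounded description size is a finitely-describable object, so from a $\sigma$-description of $T$ (of length $k'$), the input $y$, and the number $n$, one can reconstruct $S \upharpoonright n$. This gives $K(S \upharpoonright n) \leq |y| + O(\log n) + O(k')$, where the $O(\log n)$ term comes from encoding $n$ (or, more carefully, from making the whole description self-delimiting). Since $T$ is fixed once $k'$ is fixed, and we take the limit $k \to \infty$ inside $\rho_{\PB}$ after the limits in $n$, the additive constants depending on $k'$ will wash out.

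First I would fix $k' \in \N$ and an arbitrary $T \in \PBsize{k'}$, and fix $n$. Let $y$ be a shortest string with $T(y) = S \upharpoonright n$, so $|y| = \PBcomp{k'}(S \upharpoonright n)$. I would describe a Turing machine that, given as input a self-delimiting encoding of $n$ followed by $y$ (and with the fixed binary representation $\sigma$ and the $\sigma$-description of $T$ hard-wired, or alternatively prepended), simulates $T$ on $y$ and outputs the result, which is $S \upharpoonright n$. This yields
$$K(S \upharpoonright n) \leq \PBcomp{k'}(S \upharpoonright n) + 2\log n + c_{k'}$$
for a constant $c_{k'}$ depending only on $k'$ (and the universal machine). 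Then I would invoke Lemma \ref{PB: Lemma: H K Rel} to pass to prefix-free complexity:
$$H(S \upharpoonright n) \leq K(S \upharpoonright n) + 2 \log(|S \upharpoonright n|) + O(1) \leq \PBcomp{k'}(S \upharpoonright n) + 4 \log n + c_{k'}'.$$
Finally, ML-randomness of $S$ gives a constant $c$ with $H(S \upharpoonright n) > n - c$ for all $n$, so
$$\PBcomp{k'}(S \upharpoonright n) \geq n - 4\log n - c_{k'}'' .$$
Dividing by $n$ and taking $\liminf_{n \to \infty}$ gives $\liminf_n \PBcomp{k'}(S \upharpoonright n)/n \geq 1$ for every $k'$; since $\PBcomp{k'}(x) \leq |x|$ always (the identity is computable by a $0$-pebble transducer, which has bounded size), the $\limsup$ is at most $1$, so the limit over $k'$ is exactly $1$, i.e. $\rho_{\PB}(S) = 1$.

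The main obstacle — or rather the main thing to be careful about — is the bookkeeping of the additive terms: I must ensure the $\log n$ overhead for encoding $n$ (needed so the simulator knows when the pebble transducer's output reaches the right length, though since $T(y)$ is exactly $S \upharpoonright n$ this is automatic once $y$ is delivered self-delimitingly) does not secretly depend on $k'$ in a way that survives the limit, and that the constant absorbing the description of $T$ genuinely depends only on $k'$ and not on $n$. Since the representation $\sigma$ is computable and $|T|_\sigma \leq k'$, there is a $\sigma$-description of $T$ of length at most $k'$, and hard-coding it costs $O(k')$ bits; everything else is a fixed simulator. One subtlety worth a sentence: a pebble transducer may fail to halt on some inputs, but here we only care about inputs $y$ for which $T(y)$ is defined and equals $S \upharpoonright n$, so the simulation terminates. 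This is entirely analogous to the standard argument that ML-random sequences are FS-incompressible, just with ``finite-state transducer'' replaced by ``pebble transducer'' and using closure under the (computable) simulation rather than any special structure of $T$.
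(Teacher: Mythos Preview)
Your proposal is correct and follows essentially the same approach as the paper: encode a description of the pebble transducer together with its input to bound $K(S\upharpoonright n)$, pass to $H$ via Lemma~\ref{PB: Lemma: H K Rel}, and invoke ML-randomness. The only cosmetic differences are that the paper uses the doubling trick $d(\sigma)01y$ rather than hard-wiring, and it does not encode $n$ separately (as you yourself note is unnecessary, since $T(y)$ already equals $S\upharpoonright n$), so its final bound has $2\log n$ rather than your $4\log n$; also, be slightly more careful in your write-up that the optimal $T\in\PBsize{k'}$ may vary with $n$, so it should be chosen after $n$ rather than fixed beforehand.
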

\begin{proof}
Let $S \in \infbins$ be ML-random. Hence for all $n$ there exists some $c \in \N$ such that $H(S \upharpoonright n) > n-c.$

Fix $k \in \N$ and consider the prefix $S\upharpoonright n$ of $S$. Let $T \in \PBsize{k}$ and $y \in \fbins$ be such that $T(y) = S\upharpoonright n$ and $\PBcomp{k}(S \upharpoonright n) = |y|$. Let $M$ be the machine such that on inputs of the form $d(\sigma)01x$, where $\sigma$ is a description of a pebble transducer via our encoding and $x$ is in the domain of the pebble transducer that $\sigma$ describes, $M$ uses $d(\sigma)$ to retrieve the pebble transducer and then simulates the transducer on $x$. Otherwise, $M$ loops.

Hence we have that $$K(S \upharpoonright n) \leq 2|\sigma| + 2 + |y| + O(1) \leq 2k + 2 + \PBcomp{k}(S \upharpoonright n) + O(1).$$ By Lemma \ref{PB: Lemma: H K Rel} it follows that $$H(S \upharpoonright n) \leq 2k + 2 + \PBcomp{k}(S \upharpoonright n) + 2\log(n) + O(1) = \PBcomp{k}(S \upharpoonright n) + 2\log(n) + O(1).$$  Therefore $$\PBcomp{k}(S \upharpoonright n) > n - c - 2\log(n) - O(1) = n - 2\log(n) - O(1).$$ Note if no $y$ as above exists, this still holds as in such cases $\PBcomp{k}(S \upharpoonright n) = \infty$.

Therefore, for all $k$ we have that $$1 = \liminf\limits_{n \to \infty}\frac{n - 2\log(n) - O(1)}{n} \leq \liminf\limits_{n \to \infty}\frac{\PBcomp{k}(S \upharpoonright n)}{n} \leq 1.$$ As $k$ was arbitrary, it follows that $\rho_{\PB}(S) = 1.$

\end{proof}

The following demonstrates that sequences which are FST-trivial and PB-incompressible are not PB-deep. This is analogous to Bennett's fundamental properties of computable and ML-random sequences being shallow.

\begin{theorem}
Let $S \in \infbins$. If $R_{\mathrm{FS}}(S) = 0$ or $\rho_{\PB}(S) = 1$ then $S$ is not PB-deep.
\label{PB: Thm: Easy Hard}
\end{theorem}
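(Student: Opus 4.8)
The plan is to handle the two cases separately, each time bounding $\FScomp{k}(S \upharpoonright n) - \PBcomp{k'}(S \upharpoonright n)$ from above by $o(n)$, contradicting the $\alpha n$ requirement of Definition \ref{PB: Def: PB Depth}.

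\textbf{Case $\rho_{\PB}(S) = 1$.} Here the pebble complexity is essentially maximal, so no gap can open up. Concretely, fix any $\alpha > 0$ and any $k \in \N$; I must exhibit $k' \in \N$ such that $\FScomp{k}(S\upharpoonright n) - \PBcomp{k'}(S\upharpoonright n) < \alpha n$ infinitely often (in fact for all but finitely many $n$ it still holds that the gap is $< \alpha n$ — actually to contradict PB-deepness it suffices to find, for some $k$, that for every $k'$ the inequality $\geq \alpha n$ fails infinitely often). First I would note the trivial bound $\FScomp{k}(S\upharpoonright n) \leq n + O_k(1)$: the identity FST $I_{\mathrm{FS}}$ (or a fixed-size FST) outputs $S\upharpoonright n$ on input $S\upharpoonright n$, and $I_{\mathrm{FS}}$ has some constant $\sigma$-size, so for all $k$ larger than that constant $\FScomp{k}(S\upharpoonright n) \leq n$. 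Since any pebble transducer $T$ can be converted back to an FST-style description, or more simply since FSTs embed into $0$-pebble transducers, we also have $\PBcomp{k'}(S\upharpoonright n) \leq n + O(1)$ for $k'$ above a constant; but the real content is the lower bound $\PBcomp{k'}(S\upharpoonright n) \geq n - 2\log n - O_{k'}(1)$ from $\rho_{\PB}(S) = 1$ — actually $\rho_{\PB}(S) = 1$ gives $\liminf_n \PBcomp{k'}(S\upharpoonright n)/n = 1$ for each fixed $k'$, which is weaker than a pointwise bound. So instead I combine: $\FScomp{k}(S\upharpoonright n) \leq n$ for $k$ past a constant, and $\liminf_n \PBcomp{k'}(S\upharpoonright n)/n = 1$ for every $k'$. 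Then for any fixed $k$ and any candidate $k'$, $\limsup_n \big(\FScomp{k}(S\upharpoonright n) - \PBcomp{k'}(S\upharpoonright n)\big)/n \leq 1 - \liminf_n \PBcomp{k'}(S\upharpoonright n)/n = 0$, so the gap is $< \alpha n$ infinitely often (indeed eventually along a subsequence), which already negates Definition \ref{PB: Def: PB Depth}. Here I should double-check the quantifier structure of PB-deepness: it demands $\exists\alpha\,\forall k\,\exists k'\,\forall^\infty n$; to refute it I need $\forall\alpha\,\exists k\,\forall k'\,\exists^\infty n$ with the gap $< \alpha n$, and the computation above delivers exactly this for every $k$ past the identity-FST constant.

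\textbf{Case $R_{\mathrm{FS}}(S) = 0$.} Now the FS complexity itself is $o(n)$ along arbitrarily good finite-state descriptions: $R_{\mathrm{FS}}(S) = \lim_k \limsup_n \FScomp{k}(S\upharpoonright n)/n = 0$. I would argue that since the quantity $\limsup_n \FScomp{k}(S\upharpoonright n)/n$ is nonincreasing in $k$ and its limit is $0$, it must already equal $0$ for every $k$ — hence $\limsup_n \FScomp{k}(S\upharpoonright n)/n = 0$ for all $k$. Since $\PBcomp{k'} \geq 0$ always, for any fixed $k$ and any $k'$ we get $\limsup_n \big(\FScomp{k}(S\upharpoonright n) - \PBcomp{k'}(S\upharpoonright n)\big)/n \leq \limsup_n \FScomp{k}(S\upharpoonright n)/n = 0$. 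So again for any $\alpha > 0$ the gap is $< \alpha n$ for infinitely many $n$, negating PB-deepness. This case is essentially immediate once one observes $R_{\mathrm{FS}}(S)=0$ forces each $\limsup_n \FScomp{k}(S\upharpoonright n)/n = 0$.

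\textbf{Main obstacle.} The only genuinely delicate point is getting the quantifier bookkeeping right: PB-deepness is a $\exists\alpha\,\forall k\,\exists k'\,\forall^\infty n$ statement, so its negation is $\forall\alpha\,\exists k\,\forall k'\,\exists^\infty n$, and in both cases what I actually prove is the stronger $\limsup_n (\FScomp{k} - \PBcomp{k'})/n \leq 0$ for a suitable fixed $k$ (any $k$ large enough to describe the identity FST) and \emph{all} $k'$, which comfortably implies the negation. A secondary point worth a line is justifying $\FScomp{k}(S\upharpoonright n) \leq n$ for large $k$ via the identity transducer and the surjectivity of the binary representation $\sigma$ (Definition \ref{Prelim: Def: Binary Description}), and justifying that $\limsup$ in $k$ of a monotone sequence tending to $0$ is termwise $0$. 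No hard construction is needed; the theorem is a structural consequence of the definitions together with the monotonicity built into $\rho_{\PB}$ and $R_{\mathrm{FS}}$.
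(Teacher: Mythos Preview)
Your overall strategy matches the paper's: bound the gap $\FScomp{k}(S\upharpoonright n) - \PBcomp{k'}(S\upharpoonright n)$ above by something $<\alpha n$ in each case, using the identity FST for the upper bound on $\FScomp{k}$ in the $\rho_{\PB}(S)=1$ case and the trivial bound $\PBcomp{k'}\geq 0$ in the $R_{\mathrm{FS}}(S)=0$ case. The $\rho_{\PB}(S)=1$ case is fine (one small imprecision: from $\rho_{\PB}(S)=1$ and the fact that $k'\mapsto \liminf_n \PBcomp{k'}(S\upharpoonright n)/n$ is nonincreasing you get $\liminf_n \PBcomp{k'}(S\upharpoonright n)/n \geq 1$ for every $k'$, not necessarily $=1$; but $\geq 1$ is what you need).

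There is, however, a genuine slip in the $R_{\mathrm{FS}}(S)=0$ case. You write that since $k\mapsto \limsup_n \FScomp{k}(S\upharpoonright n)/n$ is nonincreasing with limit $0$, it ``must already equal $0$ for every $k$''. That inference is false: a nonincreasing nonnegative sequence with limit $0$ need not be identically $0$ (e.g.\ $1,\tfrac12,\tfrac14,\ldots$). Concretely, for small $k$ there may be no FST at all in $\FSTsize{k}$, so $\FScomp{k}(S\upharpoonright n)$ can be $\infty$ and the limsup certainly need not vanish. The fix is exactly what the paper does: for a given $\alpha>0$, use $R_{\mathrm{FS}}(S)=0$ to choose $k$ large enough that $\limsup_n \FScomp{k}(S\upharpoonright n)/n < \alpha$, hence $\FScomp{k}(S\upharpoonright n) < \alpha n$ for almost every $n$; then for every $k'$ the gap is $<\alpha n$ for almost every $n$. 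This gives the required $\exists k\,\forall k'\,\exists^\infty n$ (indeed $\forall^\infty n$) negating PB-deepness. So your proof sketch is correct once you replace ``for all $k$'' by ``for $k$ sufficiently large depending on $\alpha$'' in that case.
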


\begin{proof}
Suppose that $R_{\textrm{FS}}(S) = 0.$ Let $\alpha> 0$. Let $k$ be such that for almost every $n$ \begin{equation}\FScomp{k}(S \upharpoonright n) \leq \alpha n. \label{PB: Thm: Easy Hard: EQ3}\end{equation} 

\noindent Then for all $k' \in \N$, for almost every $n$ we have that \begin{equation}
\FScomp{k}(S \upharpoonright n) - \PBcomp{k'}(S \upharpoonright n) \leq \FScomp{k}(S \upharpoonright n) < \alpha n. \label{PB: Thm: Easy Hard: EQ4}\end{equation}

\noindent As $\alpha$ was chosen arbitrarily, $S$ is not PB-deep.

Next suppose that $\rho_{\PB}(S) = 1$. Therefore for all $\alpha > 0$ and $k \in \N$, for almost every $n$ it holds that \begin{equation}
\PBcomp{k}(S \upharpoonright n) \geq (1 - \varepsilon). \label{PB: Thm: Easy Hard: EQ1}
\end{equation}

\noindent Therefore we have for $k'$ such that $I_{\mathrm{FS}} \in \FSTsize{k'}$,
\begin{equation}
\FScomp{k'}(S \upharpoonright n) - \PBcomp{k}(S \upharpoonright n) \leq n - (1 - \alpha)n = \alpha n. \label{PB: Thm: Easy Hard: EQ2} \end{equation} As $\alpha$ was chosen arbitrarily, $S$ is not PB-deep
 
\end{proof}



Prior to showing the PB-depth satisfies a slow growth law, we require the following two lemmas which demonstrate relationships between $k$-FST and $k$-PB complexity of strings $x$ and $M(x)$, where $M$ is an ILFST. They are required for the proof of the slow growth law of PB-depth in Theorem \ref{PB: Thm: SGL}.

\begin{lemma}[\cite{DotyM07}]
Let $M$ be an $\mathrm{ILFST}$. 
     $$(\forall k \in \N)(\exists k' \in \N)(\forall x \in \fbins) \, \FScomp{k'}(M(x)) \leq \FScomp{k}(x).$$
\label{PB: Lemma: ILFST LEQs}
\end{lemma}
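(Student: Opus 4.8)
The plan is to exploit closure of finite-state transductions under composition, with an effective bound on the size of the composed machine. Fix $k \in \N$ and let $T$ be an arbitrary FST with a $\sigma$-description of size at most $k$. Writing $M = (Q_M, q_0^M, \delta_M, \nu_M)$ and $T = (Q_T, q_0^T, \delta_T, \nu_T)$, I would form the composition $M \circ T$ by the usual product construction: its state set is $Q_T \times Q_M$, its start state is $(q_0^T, q_0^M)$, and on reading a bit $b$ in state $(q,p)$ it advances $T$'s component by $\delta_T(q,b)$ while running $M$ from state $p$ through the block $\nu_T(q,b) \in \fbins$, one $M$-transition per symbol of that block, updating $M$'s component accordingly and emitting the concatenation of $M$'s outputs produced along that run. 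A routine induction on $|y|$ shows $(M \circ T)(y) = M(T(y))$ for every $y \in \fbins$, so $M \circ T$ is an FST computing $x \mapsto M(T(x))$. (This is the finite-state analogue of the pebble composition result in Theorem~\ref{PB: Thm: PB Composition} used on the PB side.)

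Next I would convert this into the size bound. Since $\sigma : D \to \mathrm{F}$ is a surjective computable map with $D \subseteq \fbins$, the set $\Fsize{k}_\sigma$ is finite, being the image of the finitely many strings of $D$ of length at most $k$. Consequently
$$k' := \max\{\, |M \circ T|_\sigma \;:\; T \in \Fsize{k}_\sigma \,\}$$
is a well-defined natural number depending only on $M$ and $k$ (each $M \circ T$ has a finite $\sigma$-description because $\sigma$ is surjective), and by construction $M \circ T \in \Fsize{k'}_\sigma$ for every $T \in \Fsize{k}_\sigma$.

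Finally, fix $x \in \fbins$. If $\FScomp{k}(x) = \infty$ the inequality is trivial, so assume there are $T \in \Fsize{k}_\sigma$ and $y \in \fbins$ with $T(y) = x$ and $|y| = \FScomp{k}(x)$. Then $(M \circ T)(y) = M(x)$ with $M \circ T \in \Fsize{k'}_\sigma$, hence $\FScomp{k'}(M(x)) \leq |y| = \FScomp{k}(x)$; since $x$ was arbitrary, this establishes the lemma. The main obstacle, such as it is, lies in the composition step: a single transition of $T$ emits a whole block $\nu_T(q,b)$ rather than one symbol, so the $M$-component must be advanced through that entire block within one step of $M \circ T$. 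Handling this correctly is exactly what keeps the state set a plain product $Q_T \times Q_M$ and the output function finitely specified, which is what makes $|M \circ T|_\sigma$ finite and bounds it uniformly over $\Fsize{k}_\sigma$. Note that information losslessness of $M$ plays no role here — only its finite-state character matters; the IL hypothesis becomes relevant instead for the companion inequality obtained by passing to $M^{-1}$ via Theorem~\ref{FS: Thm: inverse fst}.
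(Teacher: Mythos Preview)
Your argument is correct and is the standard composition proof: build $M\circ T$ on the product state set, take $k'$ as the maximum $\sigma$-size of $M\circ T$ over the finitely many $T\in\Fsize{k}_\sigma$, and read off the inequality. Note, however, that the present paper does not actually prove this lemma---it is quoted from \cite{DotyM07} and stated here without proof---so there is no ``paper's own proof'' to compare against beyond observing that the cited source uses essentially the same FST-composition idea you have written out. Your remark that information losslessness of $M$ is unused in this direction is also accurate; the IL hypothesis only matters for the companion Lemma~\ref{PB: Lemma: ILFST PB LEQs}, where one passes to $M^{-1}$.
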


\begin{lemma}
Let $M \in \ILFST$. Then $$(\forall k \in \N)(\exists k' \in \N)(\forall x \in \N) \, \PBcomp{k'}(x) \leq \PBcomp{k}(M(x)).$$
\label{PB: Lemma: ILFST PB LEQs}
\end{lemma}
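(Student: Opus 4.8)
The plan is to run the strategy behind Lemma~\ref{PB: Lemma: ILFST LEQs} in reverse: instead of composing a small transducer for $x$ with $M$, I would compose a small pebble transducer for $M(x)$ with an FST that (approximately) inverts $M$. Fix $M \in \ILFST$ and $k \in \N$. By Theorem~\ref{FS: Thm: inverse fst} there are $M^{-1} \in \ILFST$ and $c \in \N$ with $z \upharpoonright (|z|-c) \sqsubseteq M^{-1}(M(z)) \sqsubseteq z$ for every $z \in \fbins$. An FST is just a $0$-pebble transducer whose head only ever moves right (scanning $\dashv z \vdash$ left to right and entering a final state at $\vdash$), so I may regard $M^{-1}$ as a transducer $R \in \PB_0$ with $R(z) = M^{-1}(z)$ for all $z$. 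Now fix $x \in \fbins$. If $\PBcomp{k}(M(x)) = \infty$ there is nothing to prove (for $k'$ at least the $\sigma$-size of the identity pebble transducer one has $\PBcomp{k'}(x) \le |x| < \infty$), so pick $T \in \PBsize{k}$ and $y$ with $T(y) = M(x)$ and $|y| = \PBcomp{k}(M(x))$. Applying Theorem~\ref{PB: Thm: PB Composition} with $R$ (which has $r=0$ pebbles) and $T$ (with $m$ pebbles, $m$ bounded in terms of $k$) yields a pebble transducer $T_0 \in \PB_m$ with $T_0(z) = R(T(z)) = M^{-1}(T(z))$ for all $z$; in particular $T_0(y) = M^{-1}(M(x)) = x \upharpoonright \ell$ for some $\ell$ with $|x|-c \le \ell \le |x|$.

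The remaining gap is the missing suffix $r := x[\ell..|x|-1]$, which has length at most $c$. I would close it by hard-wiring $r$ into the output function of $T_0$: let $T'$ have the same states and transition function as $T_0$, but with $\nu_{T'}(q,a,b) = \nu_{T_0}(q,a,b)\,r$ on exactly those triples $(q,a,b)$ whose transition enters a final state (and $\nu_{T'} = \nu_{T_0}$ otherwise). Since the only way $T_0$ can halt is via such a transition, $T'(z) = T_0(z)\,r$ for all $z \in \dom(T_0)$, and in particular $T'(y) = (x \upharpoonright \ell)\,r = x$. To see that the $\sigma$-size of $T'$ can be bounded independently of $x$: there are only finitely many $T \in \PB$ with $|T|_\sigma \le k$ (finitely many $\sigma$-descriptions of length $\le k$), the passage $T \mapsto T_0$ is a fixed construction applied together with the fixed $R$, and $r$ ranges over the finite set $\bigcup_{0 \le i \le c}\bin^i$; hence $T'$ takes only finitely many values as $x$ varies. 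As $\sigma$ is surjective each of these transducers has a finite $\sigma$-size, so taking $k'$ to be the largest of them — a quantity depending only on $k$ and $M$ — gives $\PBcomp{k'}(x) \le |y| = \PBcomp{k}(M(x))$ for every $x$, which is the claim.

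I expect the one genuinely delicate point to be precisely the $c$-bit lag in Theorem~\ref{FS: Thm: inverse fst}: unlike the forward direction used in Lemma~\ref{PB: Lemma: ILFST LEQs}, a naive composition recovers $x$ only up to a bounded suffix, and those missing bits are in general not a function of $M(x)$ alone. The resolution above works only because that suffix has a fixed bounded length and because $\PBcomp{k'}(x)$ is a minimum over \emph{all} size-$\le k'$ pebble transducers, so the transducer — though not the bound $k'$ — is permitted to depend on $x$ and can therefore carry the lost suffix in its finite control. Everything else (casting an FST as a $0$-pebble transducer, and the modification of $\nu_{T_0}$) is routine and only changes sizes by amounts bounded in terms of $M$, $k$ and $c$.
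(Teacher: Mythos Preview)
Your proposal is correct and follows essentially the same approach as the paper: invoke Theorem~\ref{FS: Thm: inverse fst} to obtain $M^{-1}$, view it as a $0$-pebble transducer, compose with a minimal $k$-PB description $A$ of $M(x)$ via Theorem~\ref{PB: Thm: PB Composition}, and hard-wire the bounded missing suffix into the resulting transducer's output at its final transition; the bound $k'$ then comes from the finitely many choices of $A \in \PBsize{k}$ and of the suffix in $\bin^{\le c}$. Your write-up is in fact a bit more careful than the paper's in spelling out where the suffix is appended and why $k'$ is independent of $x$.
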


\begin{proof}
Let $M,\,k$ and $x$ be as stated in the lemma. By Theorem \ref{FS: Thm: inverse fst}, there exists an ILFST $M^{-1}$ and constant $b$ such that for all $z \in \fbins$, $z \upharpoonright |z| - b \sqsubseteq M^{-1}(M(x)) \sqsubseteq z$.

Note that both $M$ and $M^{-1}$ can be simulated by $0$-pebble transducers that print nothing on first reading $\dashv$, then read their input bit by bit moving right performing the same actions as $M$ and $M^{-1}$ respectively and where upon reading $\vdash$, they output nothing and enter their final state. For simplicity of notation we call these equivalent $0$-pebble transducers $M$ and $M^{-1}$ also.

Let $p$ be a $k$-PB minimal program for $M(x),$ i.e. $A(p) = M(x)$ for $A \in \mathrm{PB}^{\leq k}$, and $\PBcomp{k}(M(x)) = |p|$. We construct $A'$ and $p'$ for $x$. Let $y = M^{-1}(M(x)),$ i.e. there exists some $z \in \bin^{\leq b}$ such that $yz = x$. Let $A'$ be the PB which on input $p$ simulates $A(p)$ to get $M(x)$, and sticks the output into $M^{-1}$ and adds $z$ at the end of $M^{-1}$'s output, i.e. when it enters the final state of $M^{-1}$, regardless of what is under its reading head, it prints $z$ and enters its own final state. Note that by Theorem $\ref{PB: Thm: PB Composition}$, $A'$ will have the same number of pebbles as $A$. Thus $\PBcomp{|A'|}(x) \leq \PBcomp{k}(M(x))$. As $|A'|$ depends only on $k$, the size of $M$ and $|z|\leq b$, we set $k'$ to be the smallest integer that takes all the possibilities for $z$ into account. That is $\PBcomp{k'}(x) \leq \PBcomp{k}(M(x)).$  

\end{proof}

We can now demonstrate that PB-depth satisfies a slow growth law. In the following theorem we show that if a PB-deep sequence $S'$ is the output of some ILFS computable mapping, the original sequence $S$ used to compute $S'$ must also have been PB-deep. This is analogous to Bennett's slow growth law as it demonstrates that the fast process of computation by an ILFST cannot transform a non-deep sequence into a deep sequence.

\begin{theorem}[Slow Growth Law]
Let $S$ be a sequence. Let $f: \infbins \longrightarrow \infbins$ be $\mathrm{ILFS}$ computable, and let $S' = f(S)$. If $S'$ is PB-deep, then $S$ is PB-deep. 
\label{PB: Thm: SGL}
\end{theorem}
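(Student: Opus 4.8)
The plan is to ``pull back'' the PB-depth of $S'$ to $S$ along the ILFST computing $f$, using the two transfer lemmas (Lemma~\ref{PB: Lemma: ILFST LEQs} and Lemma~\ref{PB: Lemma: ILFST PB LEQs}) to move $k$-FS and $k$-PB complexities between a prefix $S \upharpoonright m$ and the corresponding prefix of $S'$. Let $M \in \ILFST$ be such that $M(S) = S' = f(S)$, and for $m \in \N$ write $n_m = |M(S \upharpoonright m)|$. By Definition~\ref{FS: Def: ILFS Computable} we have $M(S \upharpoonright m) \sqsubseteq S'$ and $n_m \to \infty$, so $M(S \upharpoonright m) = S' \upharpoonright n_m$.

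The first point to establish is that $n_m$ grows at least linearly in $m$; this is what turns an $\alpha n_m$ advantage on prefixes of $S'$ into a $\beta m$ advantage on prefixes of $S$. By Theorem~\ref{FS: Thm: inverse fst} there is $M^{-1} \in \ILFST$ and a constant $c$ with $S \upharpoonright (m-c) \sqsubseteq M^{-1}(M(S \upharpoonright m)) \sqsubseteq S \upharpoonright m$. Since $M^{-1}$ is an FST it expands its input by at most a bounded factor $C$ (the maximum length of an output block), so $m - c \leq |M^{-1}(S' \upharpoonright n_m)| \leq C n_m$, giving $n_m \geq (m-c)/C$.

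Now fix the witness. Let $\alpha > 0$ be the depth level coming from the hypothesis that $S'$ is PB-deep, and set $\beta = \alpha/(2C)$, so that $\alpha n_m \geq \beta m$ for all large $m$. Given $k \in \N$, Lemma~\ref{PB: Lemma: ILFST LEQs} applied to $M$ yields $j$ with $\FScomp{j}(S' \upharpoonright n_m) = \FScomp{j}(M(S \upharpoonright m)) \leq \FScomp{k}(S \upharpoonright m)$ for all $m$. Since $S'$ is PB-deep, for this $j$ there is $j'$ with $\FScomp{j}(S' \upharpoonright n) - \PBcomp{j'}(S' \upharpoonright n) \geq \alpha n$ for almost every $n$. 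Finally, Lemma~\ref{PB: Lemma: ILFST PB LEQs} applied to $M$ with parameter $j'$ yields $k'$ with $\PBcomp{k'}(S \upharpoonright m) \leq \PBcomp{j'}(M(S \upharpoonright m)) = \PBcomp{j'}(S' \upharpoonright n_m)$ for all $m$. Chaining these inequalities, for every $m$ large enough that $n_m$ lies in the cofinite set of good indices for $S'$ and $\alpha n_m \geq \beta m$, we obtain
\[
\FScomp{k}(S \upharpoonright m) - \PBcomp{k'}(S \upharpoonright m) \;\geq\; \FScomp{j}(S' \upharpoonright n_m) - \PBcomp{j'}(S' \upharpoonright n_m) \;\geq\; \alpha n_m \;\geq\; \beta m.
\]
As $k$ was arbitrary and $\beta > 0$ does not depend on $k$, $S$ is PB-deep.

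The main obstacle is the linear-growth step: without it the argument only delivers an $\alpha n_m$ gap, which is worthless if $n_m = o(m)$; the fix is the elementary observation that the FST $M^{-1}$ can expand its input by only a bounded factor, combined with the near-invertibility of Theorem~\ref{FS: Thm: inverse fst}. The remaining care is bookkeeping the quantifier order ($\beta$ chosen before $k$, then $j$ from $k$, then $j'$ from $j$, then $k'$ from $j'$) and checking that ``almost every $n$'' for $S'$ transfers to ``almost every $m$'' for $S$, which holds because $n_m \to \infty$.
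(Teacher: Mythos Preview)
Your proof is correct and follows essentially the same approach as the paper: apply Lemma~\ref{PB: Lemma: ILFST LEQs} and Lemma~\ref{PB: Lemma: ILFST PB LEQs} to transfer the complexity gap from prefixes of $S'$ to prefixes of $S$, and use linear growth of $n_m$ in $m$ to convert an $\alpha n_m$ advantage into a $\beta m$ advantage. The only minor variation is in the linear-growth step: you obtain it via Theorem~\ref{FS: Thm: inverse fst} and the bounded expansion factor of $M^{-1}$, while the paper argues directly that an ILFST cannot revisit a state without emitting a bit (so $n_m \geq m/|Q|$); both justifications are valid and the rest of the argument is identical in structure and quantifier bookkeeping.
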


\begin{proof}
Let $S,\,S'$ and $f$ be as stated, and let $M$ be an ILFST computing $f$.

For all $n$ such that $M(S \upharpoonright m) = S'\upharpoonright n$ for some $m$, let $m_n$ denote the largest integer such that $M(S \upharpoonright m_n) = S' \upharpoonright n.$ Note then that for all $m$, there exists an $n$ such that $M(S \upharpoonright m_{n-1}) \sqsubset M(S \upharpoonright m) = M(S \upharpoonright m_n) = S' \upharpoonright n.$ As $M$ is IL, it cannot visit the same state twice without outputting at least one bit, so there exist a $\beta > 0$ such that for all $n$,
 $n \geq \beta m_n$.
 
 Fix $l \in \N$. Let $k$ be from Lemma \ref{PB: Lemma: ILFST LEQs} such that for all $x \in \fbins$ \begin{equation}
     \FScomp{k}(M(x)) \leq \FScomp{l}(x). \label{PB: Thm: SGL: EQ1}
 \end{equation}
 As $S'$ is PB-deep, there exists $k'$ and $\alpha > 0$ such that for almost every $n$ \begin{equation}
     \FScomp{k}(S' \upharpoonright n) - \PBcomp{k'}(S \upharpoonright n) \geq \alpha n. \label{PB: Thm: SGL: EQ2} 
 \end{equation}
 Similarly let $l'$ be from Lemma \ref{PB: Lemma: ILFST PB LEQs} such that for all $x$ \begin{equation}
     \PBcomp{l'}(x) \leq \PBcomp{k'}(M(x)). \label{PB: Thm: SGL: EQ3}
 \end{equation}
 
 Hence for almost every $m$ we have that
 \begin{align*}
     \FScomp{l}(S \upharpoonright m) - \PBcomp{l'}(S \upharpoonright m) & \geq \FScomp{k}(M(S \upharpoonright m) - \PBcomp{l'}(S \upharpoonright m) \tag{by \eqref{PB: Thm: SGL: EQ1}} \\
     & \geq \FScomp{k}(M(S \upharpoonright m) - \PBcomp{k'}(M(S \upharpoonright m)) \tag{by \eqref{PB: Thm: SGL: EQ3}} \\
     & = \FScomp{k}(M(S \upharpoonright m_n) - \PBcomp{k'}(M(S \upharpoonright m_n)) \tag{for some $n$} \\
     & = \FScomp{k}(S' \upharpoonright n) - \PBcomp{k'}(S' \upharpoonright n) \notag \\
     & \geq \alpha n \tag{by \eqref{PB: Thm: SGL: EQ2}} \\
     & \geq \alpha \beta m_n \geq \alpha \beta m. \notag
 \end{align*}
 Hence $S$ is PB-deep.
 
 \end{proof}
 
 Suppose that transformations via pebble transducers instead of via FSTs were used to define an alternative slow growth law. We show that the existence of a sequence $S$ such that $\rho_{\textrm{PB}}(S) = 1$ would break this alternative slow growth law. Before we do this, we need the following definition of a pebble-computable function.

\begin{definition}
A function $f: \infbins \rightarrow \infbins$ is said to be pebble computable if there is exists some $T \in \PB$ such that for all $S \in \infbins$, $\lim\limits_{n \to \infty}|T(S \upharpoonright n)| = \infty$ and for all $n \in \N$, $T(S \upharpoonright n) \sqsubseteq f(S)$.
\label{FS: Def: Pebble Computable}
\end{definition}

We now break the alternative slow growth law. The idea is to take a PB-incompressible sequence $S$, and to transform it into the sequence $S' = x_1x_2\ldots$ , where for each $i$, $x_i = S[0..i-1]$. On input $x_j$, a $1$-pebble transducer $T$ can print $x_1x_2x_3\cdots x_j$ by continuously moving its pebble one square to the right at each stage to keep track of the $x_i$'s it has printed. $T$ scans its head to the left end of the tape and then moves right printing what it sees up until it reaches the tape square containing the pebble upon which the pebble is moved one square right. Thus, prefixes of $S'$ have low pebble-complexity.

\begin{lemma}
Let $S \in \infbins$ be such that $\rho_{\textrm{PB}}(S) = 1$. There exists a pebble computable function $f$ such that for $S' = f(S)$, $S'$ is PB-deep while $S$ is not PB-deep.
\label{PB: Lemma: SGL No Work With PB}
\end{lemma}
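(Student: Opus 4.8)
The plan is to take the PB-incompressible sequence $S$ and define $f$ so that $S' = f(S) = x_1 x_2 x_3 \cdots$, where $x_i = S[0..i-1]$ (equivalently $S' = \pref(S)$ read as a limit of prefixes). First I would verify that $f$ is pebble computable in the sense of Definition \ref{FS: Def: Pebble Computable}: I would exhibit a $1$-pebble transducer $T$ that, upon reading more of the input, outputs successively longer prefixes of $S'$. Concretely, $T$ uses its single pebble to mark the rightmost input bit it has "committed" to; each time it has read one more bit of $S$, it sweeps to the left end marker, moves right copying bits to the output until it reaches the pebble, advances the pebble one square, and continues. This gives $|T(S\upharpoonright n)| = \Theta(n^2) \to \infty$ and $T(S\upharpoonright n) \sqsubseteq f(S)$ for every $n$, so $f$ is pebble computable.

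Next I would show $S'$ is PB-deep. The key point is that prefixes of $S'$ have low pebble-complexity: given a short description of $S\upharpoonright m$ together with a small pebble program that reconstructs $S\upharpoonright m$ and then applies the $\pref$-style expansion, a pebble transducer can output $x_1 x_2 \cdots x_m$. So there is a fixed $k'$ with $\PBcomp{k'}(S' \upharpoonright N) \le \PBcomp{j}(S \upharpoonright m) + O(1) \le m + O(1)$ for the appropriate $m$, where $N = |x_1 \cdots x_m| = \binom{m+1}{2} = \Theta(m^2)$, hence $\PBcomp{k'}(S'\upharpoonright N) = O(\sqrt N)$. For the finite-state lower bound I would use Lemma \ref{FS: Lemma: GEQ Lemma} (and Remark \ref{FS: Remark: any substring split}): since $S'\upharpoonright N$ is built by concatenating the blocks $x_1, \ldots, x_m$, we get $\FScomp{k}(S'\upharpoonright N) \ge \sum_{i=1}^m \FScomp{3k}(x_i) = \sum_{i=1}^m \FScomp{3k}(S\upharpoonright i)$. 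Because $\rho_{\PB}(S) = 1$ (and $0$-pebble one-directional transducers are FSTs, so $\FScomp{j}$ is at least the relevant $\PBcomp{}$-type quantity, or directly: a high-pebble-complexity sequence is certainly not FS-compressible), each $\FScomp{3k}(S\upharpoonright i)$ is $i - o(i)$, so the sum is $\frac{m^2}{2} - o(m^2) = \frac{N}{1} \cdot \frac{1}{2}(1+o(1))\cdot\ldots$; more carefully $\sum_{i\le m} i = \binom{m+1}{2} = N$, giving $\FScomp{k}(S'\upharpoonright N) \ge N - o(N)$. Therefore $\FScomp{k}(S'\upharpoonright N) - \PBcomp{k'}(S'\upharpoonright N) \ge N - o(N) - O(\sqrt N) \ge \alpha N$ for, say, $\alpha = 1/2$ and all large $N$ of this form; for intermediate lengths $n$ I would interpolate, noting that both complexities change by at most $O(\sqrt n)$ between consecutive block boundaries, which is absorbed into the $o(n)$ term. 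This gives PB-$\mathrm{depth}(S') \ge 1/2 > 0$, so $S'$ is PB-deep.

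Finally I would show $S$ is not PB-deep: this is immediate from Theorem \ref{PB: Thm: Easy Hard}, since $\rho_{\PB}(S) = 1$. Thus $f$ is the desired pebble computable map with $S' = f(S)$ PB-deep and $S$ not PB-deep, breaking the alternative slow growth law.

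The main obstacle I anticipate is making the finite-state lower bound genuinely tight — i.e. turning "$\rho_{\PB}(S) = 1$" into "$\FScomp{3k}(S\upharpoonright i) \ge i - o(i)$ uniformly enough that $\sum_{i\le m}$ is $N - o(N)$". One must be careful that the $o(i)$ error in the FS-complexity of $S\upharpoonright i$ does not accumulate to something like $\Theta(m \log m) = \Theta(\sqrt N \log N)$ that still vanishes against $N$ — it does, so this is fine — and that the $k \mapsto 3k$ blow-up in Lemma \ref{FS: Lemma: GEQ Lemma} across the single application is harmless (it is, since it is applied once). The interpolation step for non-block-boundary lengths and the precise bookkeeping of the additive constants in the pebble upper bound are the only other places needing a little care.
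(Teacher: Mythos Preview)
Your proposal is correct and follows essentially the same route as the paper: define $f=\pref$, show it is computed by a $1$-pebble transducer, use Remark~\ref{FS: Remark: any substring split} together with $\rho_{\mathrm{FS}}(S)=1$ (inherited from $\rho_{\PB}(S)=1$) for the FS lower bound, and a direct $O(\sqrt{n})$ pebble upper bound for $\PBcomp{k'}(S'\upharpoonright n)$; the non-depth of $S$ is Theorem~\ref{PB: Thm: Easy Hard}. Two small remarks: the paper avoids your interpolation step by encoding arbitrary prefixes directly as $d(x_j)01y$ (the tail $y$ is fed verbatim to the transducer), and your own bounds actually yield depth level $1-\varepsilon$ for every $\varepsilon>0$, not just $1/2$.
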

\begin{proof}
Let $S \in \infbins$ be such that $\rho_{\PB}(S) = 1$. 
Recall the function $\pref$ on strings and note that $\pref$ can extended to be a pebble computable function on sequences. We will show that $S' = \pref(S)$ is PB-deep even though $S$ is not PB-deep by Theorem \ref{PB: Thm: Easy Hard}. 

Let $0<\varepsilon<1$. Note that as $\rho_{\PB}(S) = 1$, it follows that $\rho_{\textrm{FS}}(S) =1$ also. Hence, for all $k$ and almost every $n$ it holds that \begin{equation}
    \FScomp{k}(S \upharpoonright n) > n(1 - \frac{\varepsilon}{3}). \label{PB: Lemma: SGL No Work With PB: EQ1}
\end{equation}

Consider an arbitrary prefix  $S'\upharpoonright n$ of $S'$. Let $j$ be such that $$\frac{j(j+1)}{2} \leq n < \frac{(j+1)(j+2)}{2}.$$ $S' \upharpoonright n$ can be written in the form $x_1x_2\ldots x_j y$ where $x_i = S[0..i-1]$ and $y \sqsubset x_{j+1}$.

Suppose $j^*$ is such that Equation \eqref{PB: Lemma: SGL No Work With PB: EQ1} holds for all $j \geq j^*$. Then by Remark \ref{FS: Remark: any substring split} it follows that for almost all $k$ and large enough $n$, \begin{align}
    \FScomp{k}(S' \upharpoonright n) &\geq \sum_{i = 1}^j \FScomp{3k}(x_i) + \FScomp{3k}(y) \notag \\
    & > \sum_{i = j^*}^j \FScomp{3k}(x_i) \notag \\
    & > \sum_{i = j^*}^j|x_i|(1 - \frac{\varepsilon}{3}) \tag{by \eqref{PB: Lemma: SGL No Work With PB: EQ1}} \\ 
    & = (n - O(1) - |y|)(1 - \frac{\varepsilon}{3}) \notag \\
    & > n(1 - \frac{2\varepsilon}{3}) \label{PB: Lemma: SGL No Work With PB: EQ10}
\end{align}
as $|y| = O(\sqrt{n})$.

Similarly, let $T$ be the pebble transducer such that on inputs of the form $d(x)01z$, $T$ uses $d(x)$ and a pebble to print $\pref(x)$ and then uses $z$ to print $z$. Hence, $T(d(x_j)01y) = S'\upharpoonright n.$ Thus for $n$ large, \begin{equation}
    \PBcomp{|T|}(S'\upharpoonright n) \leq 2j + 2 + |y| < 3(j+1) = O(\sqrt{n}). \label{PB: Lemma: SGL No Work With PB: EQ2}
\end{equation}

Hence, for almost every $k$ and for $n$ large it holds that \begin{equation}
\FScomp{k}(S'\upharpoonright n) - \PBcomp{|T|}(S'\upharpoonright n) > n(1 - \frac{2\varepsilon}{3}) - \frac{n\varepsilon}{3} = n(1 - \varepsilon). \label{PB: Lemma: SGL No Work With PB: EQ3}
\end{equation}

As Equation \eqref{PB: Lemma: SGL No Work With PB: EQ3} in fact holds for every $k$ as $\FScomp{i}(x) \geq \FScomp{i+1}(x)$ for all strings $x$, $S'$ is in fact PB-deep. Thus the alternative slow growth law breaks.

For completeness, the construction of $T$ is confined to the appendix subsection \ref{cons1}.

\end{proof}

\section{Comparison with Other Depth Notions}

In this section we give the definition of finite-state, pushdown and LZ-depth. We continue by comparing PB-depth with these three notions.

In \cite{DotyM07}, the following definition is provided for an (infinitely often) FS-depth notion. An almost everywhere version is explored in \cite{JordonM20}

\begin{definition}
A sequence $S$ is finite-state deep (FS-deep) if 
$$(\exists \alpha > 0)(\forall k \in \N)(\exists k' \in \N)(\exists^{\infty} n \in \N) \, \FScomp{k}(S \upharpoonright n) - \FScomp{k'}(S \upharpoonright n) \geq \alpha n.$$ \label{fs deep definition}
\end{definition}

The following notions of PD-depth and LZ-depth are presented in \cite{JordonMPDLZ}.\footnote{We note that reference \cite{JordonMPDLZ} currently directs to a preprint version of an article.}

\begin{definition}
A sequence $S$ is pushdown-deep (PD-deep) if \begin{align*}(\exists \alpha > 0)(\forall C \in \ILUPDC)(\exists C' \in \ILPDC)(\forall^{\infty} n \in \N), 
 |C(S \upharpoonright n)| - |C'(S \upharpoonright n)| \geq \alpha n.\end{align*}
\end{definition}

\begin{definition}
A sequence $S$ is \emph{Lempel-Ziv deep (LZ-deep)} if $$(\exists \alpha > 0)(\forall C \in \ILFST)(\forall^{\infty} n \in \N), |C(S \upharpoonright n)| - |LZ(S \upharpoonright n)| \geq \alpha n.$$
\end{definition}

For a sequence $S$, FS-depth$(S)$, PD-depth$(S)$ and LZ-depth$(S)$ are defined similarly to PB-depth$(S)$.

\subsection{Finite-State Depth}

In \cite{DotyM07}, Doty and Moser prove that no normal sequence is FS-deep in their notion. This is similarly true for Jordon and Moser's almost everywhere version \cite{JordonM20}. In this section, we demonstrate that a difference between FS-depth and PB-depth by showing that there are normal PB-deep sequences. To demonstrate this we require the following result by Lathrop and Strauss.

\begin{theorem}[\cite{DBLP:journals/iandc/LathropL99}]
There exists $S \in \infbins$ such that the sequence $S' = x_1x_2x_3\ldots$ where for each $i$, $x_i = S[0..i-1]$, is normal.
\label{malicious}
\end{theorem}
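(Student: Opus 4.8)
The goal is to construct a sequence $S$ whose prefix-catenation $S' = x_1 x_2 x_3 \cdots$ with $x_i = S[0..i-1]$ is normal. The natural approach is a greedy/diagonalization construction of $S$ bit by bit, ensuring that in the limit every finite string $w$ occurs in $S'$ with asymptotic frequency $2^{-|w|}$. First I would observe that $S'$ is obtained from $S$ by concatenating all prefixes of $S$ in increasing order of length, so $S' \upharpoonright n$ for $n = \binom{j+1}{2}$ equals $\pref(S \upharpoonright j)$; as $j \to \infty$ the structure of $S'$ is governed by longer and longer prefixes of $S$, and each bit $S[i]$ of $S$ gets ``copied'' into $S'$ infinitely often (once in every block $x_m$ with $m > i$). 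The key point is that when we extend $S$ from $S \upharpoonright j$ to $S \upharpoonright (j+1)$ by appending a single bit $b$, this appends a whole new block $x_{j+1} = (S \upharpoonright j)\,b$ of length $j+1$ to $S'$, which is long compared to the total change we can no longer control; so we have a lot of freedom to steer the statistics of $S'$.

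\textbf{Key steps.} The steps I would carry out, in order, are: (1) Fix an enumeration $w_1, w_2, \ldots$ of all binary strings. (2) Build $S$ in stages; at stage $t$ we have committed to a finite prefix $S \upharpoonright j_t$, and we choose a long extension block of bits to append to $S$ so that, in the resulting new portion of $S'$, the frequency of $w_t$ (and of all $w_i$ for $i \le t$) is driven close to $2^{-|w_i|}$ and stays there. (3) The crucial sub-lemma is that given any prefix $u = S \upharpoonright j$ already fixed, and any target string $w$, one can choose a continuation $u' \sqsupseteq u$ of $S$ such that the corresponding stretch of $S'$ — namely the concatenation of the new blocks $x_{j+1}, \ldots, x_{j'}$ where $x_m = (S \upharpoonright m)$ — contains $w$ with frequency within $\varepsilon$ of $2^{-|w|}$. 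To get this, note that if we pick the newly-appended bits of $S$ to themselves be a long normal-like block (e.g. a de Bruijn-type sequence cycling through all strings of some length $\ell \gg |w|$), then each block $x_m$ for $m$ in the relevant range is a long prefix of a normal-ish string, hence $w$ appears in $x_m$ with near-correct frequency; summing over the many blocks $x_{j+1} \cdots x_{j'}$, whose total length is quadratic in $j'$ and hence dominates everything committed earlier, yields the desired frequency in $S' \upharpoonright \binom{j'+1}{2}$. (4) Finally, handle the ``ragged'' prefixes: for $n$ not of the form $\binom{j+1}{2}$, $S' \upharpoonright n$ differs from a block boundary by at most $j+1 = O(\sqrt n)$ bits, which is negligible, so frequencies converge for all $n$. (5) A standard interleaving/finite-injury-free bookkeeping across stages ensures all $w_i$ simultaneously attain the correct frequency in the limit.

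\textbf{Main obstacle.} The delicate part is step (3): ensuring that steering the statistics for $w_t$ at stage $t$ does not destroy the statistics already established for $w_1, \ldots, w_{t-1}$. This is where the quadratic growth of $|S' \upharpoonright n|$ in terms of the controlled prefix length of $S$ is essential — because each new stage appends a stretch of $S'$ that is long relative to the previously committed part, but we must still verify the earlier frequencies are not merely preserved on the new stretch but converge overall. I expect the cleanest route is to make the appended bits of $S$ at \emph{every} stage come from a single fixed normal sequence (or successive de Bruijn words of growing order), so that every sufficiently long block $x_m$ automatically has near-correct frequencies for \emph{all} strings up to a length that grows with $m$; then convergence for each fixed $w$ is automatic once $m$ is large, and the stage-by-stage bookkeeping only needs to push the finitely many ``early'' badly-behaved blocks into a vanishing fraction. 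Since this is the technical heart, I would expect the authors to invoke Lathrop and Strauss's original argument (Theorem \ref{malicious}) rather than reprove it, and simply cite it.
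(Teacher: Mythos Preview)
The paper does not prove this theorem; it is stated with a citation to Lathrop and Strauss and used as a black box in the proof of Theorem~\ref{PB: Thm: Normal}. You correctly anticipated this in your final sentence. Your sketched construction (choosing $S$ itself to be normal, or built from de~Bruijn-type blocks, so that each long prefix block $x_m$ already has near-correct substring frequencies, and then using the quadratic growth $|S'\upharpoonright n| = \Theta(j^2)$ to wash out boundary effects) is a reasonable outline of how such a result is typically established, and is in the spirit of the original Lathrop--Strauss argument, but since the present paper offers nothing to compare against there is no discrepancy to report.
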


\begin{theorem}
There exists a normal sequence which is $PB$-deep.
\label{PB: Thm: Normal}
\end{theorem}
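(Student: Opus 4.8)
The plan is to combine Theorem~\ref{malicious} with the argument of Lemma~\ref{PB: Lemma: SGL No Work With PB}. By Theorem~\ref{malicious} there is a sequence $S$ such that $S' = \pref(S) = x_1x_2x_3\ldots$ with $x_i = S[0..i-1]$ is normal. I claim this very $S'$ is PB-deep, which establishes the theorem. The intuition is that normality of $S'$ forces $\FScomp{k}(S'\upharpoonright n)$ to be close to $n$ (high finite-state complexity) while the self-similar prefix structure of $S'$ gives it very low pebble complexity.

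First I would record the two halves. For the lower bound on finite-state complexity: since $S'$ is normal, Corollary~\ref{normal fs} gives $\rho_{\textrm{FS}}(S') = 1$, so for any $\varepsilon \in (0,1)$, for every $k$ and almost every $n$ we have $\FScomp{k}(S'\upharpoonright n) > n(1 - \varepsilon/3)$ — exactly the estimate used at Equation~\eqref{PB: Lemma: SGL No Work With PB: EQ1}, except now it is $S'$ itself that is incompressible rather than being deduced from incompressibility of an auxiliary $S$. (We no longer even need Remark~\ref{FS: Remark: any substring split}, since the bound is directly on prefixes of $S'$.) For the upper bound on pebble complexity: the construction in Lemma~\ref{PB: Lemma: SGL No Work With PB} applies verbatim. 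Given a prefix $S'\upharpoonright n$, pick $j$ with $\frac{j(j+1)}{2} \leq n < \frac{(j+1)(j+2)}{2}$, write $S'\upharpoonright n = x_1\ldots x_j y$ with $y \sqsubset x_{j+1}$, and let $T$ be the pebble transducer that on input $d(x_j)01y$ uses its pebble to print $\pref(x_j) = x_1\ldots x_j$ and then appends $y$. Note $x_j = S\upharpoonright j$, so $|d(x_j)01y| \leq 2j + 2 + |y| = O(\sqrt{n})$, giving $\PBcomp{|T|}(S'\upharpoonright n) = O(\sqrt n)$.

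Putting the two bounds together, for every $k$ and almost every $n$,
\[
\FScomp{k}(S'\upharpoonright n) - \PBcomp{|T|}(S'\upharpoonright n) > n\Bigl(1 - \tfrac{\varepsilon}{3}\Bigr) - O(\sqrt n) > n(1-\varepsilon),
\]
so with $k' = |T|$ and witness constant $\alpha = 1-\varepsilon$ the defining inequality of Definition~\ref{PB: Def: PB Depth} holds (using $\FScomp{i}(x) \geq \FScomp{i+1}(x)$ to handle all $k$, as in Lemma~\ref{PB: Lemma: SGL No Work With PB}). Hence $S'$ is PB-deep and normal.

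The only real subtlety — and the step I would be most careful about — is the bookkeeping in the pebble-transducer construction: confirming that a single pebble genuinely suffices to print $\pref(x)$ (sweep left, then sweep right copying symbols until the pebble is reached, advance the pebble one cell, repeat), that this is captured by the fixed binary representation underlying $\PBcomp{\cdot}$, and that the description length $2j+2+|y|$ is correctly $O(\sqrt n)$ given $j = \Theta(\sqrt n)$ and $|y| \leq |x_{j+1}| = j+1$. Since this is exactly the transducer already built (and deferred to the appendix) in Lemma~\ref{PB: Lemma: SGL No Work With PB}, I would simply invoke that construction rather than redo it. Everything else is a direct substitution of "$S'$ is normal, hence FS-incompressible" in place of the hypothesis "$\rho_{\textrm{PB}}(S) = 1$" used there.
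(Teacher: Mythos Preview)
Your proposal is correct and follows essentially the same route as the paper: invoke Theorem~\ref{malicious} to obtain a normal $S' = \pref(S)$, then repeat the argument of Lemma~\ref{PB: Lemma: SGL No Work With PB}, with the simplification that the lower bound on $\FScomp{k}(S'\upharpoonright n)$ now comes directly from Corollary~\ref{normal fs} (normality of $S'$) rather than via Remark~\ref{FS: Remark: any substring split}. The paper's own proof sketch says precisely this, and your write-up is in fact more detailed than what appears there.
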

\begin{proof}

Let $S \in \infbins$ from Theorem \ref{malicious} such that the sequence $S' = x_1x_2x_3\ldots$ where for each $i$, $x_i = S[0..i-1]$, is normal. 

The proof that $S'$ is PB-deep follows a similar argument as the proof of Lemma \ref{PB: Lemma: SGL No Work With PB}, which also examined a sequence formed by concatenating prefixes of another sequence. The main difference is that Equation \eqref{PB: Lemma: SGL No Work With PB: EQ10} of the proof follows immediately from Corollary \ref{normal fs} as $S'$ is normal.

\end{proof}

\subsection{Lempel-Ziv Depth}

The following demonstrates the existence of a sequence $S$ with PB-depth of roughly $1/2$ and low LZ-depth. The sequence is that from Theorem $5$ of \cite{DBLP:journals/mst/MayordomoMP11}. This sequence is broken into blocks where each block is a concatenation of most strings of length $n$. Specifically blocks are composed of subblocks of the form $XFY$ where $X$ is a listing of a selection of strings of length $n$, $F$ is a flag not contained in any string of length $n$ listed, and $Y$ is a listing of strings of length $n$ such that $Y = X^{-1}$. A pebble transducer can perform well on this sequence as given $X$, the transducer can use its two-way tape property to print $Y$ also. LZ does not compress $S$ by much as it is almost a listing of every string in order of length. LZ compresses such sequences poorly.

\begin{theorem}
For each $0 < \beta < \frac{1}{2}$, there exists a sequence $S$ such that PB-depth$(S) \geq \frac{1}{2} - \beta$ and LZ-depth$(S) < \beta.$
\label{PB: Thm: PB vs LZ}
\end{theorem}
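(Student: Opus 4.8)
The plan is to adapt the construction from Theorem 5 of \cite{DBLP:journals/mst/MayordomoMP11}, whose sequence $S$ is built from blocks $B_1B_2\cdots$, where $B_n$ is a concatenation of subblocks of the form $XF\,X^{-1}$: here $X$ lists a large fraction (roughly a $(1-\beta)$-fraction, suitably chosen) of the strings of length $n$ in some fixed order, $F$ is a flag symbol sequence (say $1^{n+1}$, or a block not of length $n$ that cannot occur inside the listing) marking the midpoint, and $X^{-1}$ is the reverse string listing $X$ back-to-front. The two quantities to control are (i) the $k$-FS complexity of prefixes of $S$ from below, which forces $\FScomp{k}(S\upharpoonright m)$ to be close to $m$; (ii) the $k'$-PB complexity of prefixes from above, exploiting that a pebble transducer reading $X$ (plus a description of the block structure) can regenerate $XFX^{-1}$ by scanning its two-way tape; and (iii) the LZ-complexity of prefixes from above, since $S$ is essentially an in-order listing of all strings and LZ compresses such sequences very poorly, giving $|LZ(S\upharpoonright m)| \geq m(1-o(1)) \geq m(1-\beta)$ for large $m$, which will make $S$ fail to be $\beta$-LZ-deep. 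The overall picture: on a prefix ending exactly at the flag $F$ of a subblock, a pebble transducer saves essentially the length of the $X^{-1}$-half, i.e. roughly half the subblock, yielding PB-depth $\geq \tfrac12-\beta$, while no information-lossless FST ever beats $LZ$ by more than $\beta n$ bits.

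First I would fix the parameters: choose $n$-th block to consist of subblocks over strings of length $n$, list a $\lceil (1-\beta/c) 2^n\rceil$-sized subset $A_n\subseteq\bin^n$ for an appropriate constant, set $X_n$ = concatenation of the elements of $A_n$ in lexicographic order, $F_n = 0^{?}$-style flag of length, say, $n+1$ (so it is distinguishable and cannot be a substring of a length-$n$ listing in the relevant sense), and $B_n = (X_n F_n X_n^{-1})$, with $S = B_1B_2B_3\cdots$. Then I would establish the FS lower bound: using Corollary \ref{normal fs} style arguments is not available since $S$ need not be normal, so instead I would invoke Remark \ref{FS: Remark: any substring split} together with a counting/incompressibility estimate — a prefix of $S$ contains an in-order listing of almost all strings up to some length, and such a string has $k$-FS complexity at least $m(1-\beta/2)$ for $m$ large and every fixed $k$ (because a short description plus a size-$k$ FST decompressing to it would contradict the counting bound on how many strings a size-$k$ FST can output from short inputs). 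Next, the PB upper bound: build a pebble transducer $T$ that, given a description of the block index, the subblock index, and the string $X$ of that subblock (plus the already-emitted earlier blocks encoded compactly, or regenerated recursively), prints $X$, then $F$, then walks its head back using one pebble to scan $X$ from right to left emitting $X^{-1}$; earlier complete blocks $B_1\cdots B_{n-1}$ are themselves of total length $o(|B_n|)$ relative to the position, or can be regenerated by the same machine using additional pebbles — so on a prefix $S\upharpoonright m$ ending right after some $F$ inside block $B_n$, $\PBcomp{k'}(S\upharpoonright m) \leq (\tfrac12+\beta/2)m + O(\log m)$ for a fixed $k'$, giving $\FScomp{k}(S\upharpoonright m) - \PBcomp{k'}(S\upharpoonright m)\geq (\tfrac12-\beta)m$ on an infinite set of $m$ — but for the $\forall^\infty$ quantifier in PB-depth$(S)\geq\tfrac12-\beta$ I actually need the bound at \emph{all} sufficiently large $m$, so I would note that between consecutive flags the pebble savings degrade gracefully and one interpolates, or (following \cite{DBLP:journals/mst/MayordomoMP11}) chooses the block lengths growing fast enough that the "bad" intervals contribute a vanishing fraction, restoring the almost-everywhere bound. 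Finally, the LZ bound: since $S\upharpoonright m$ contains almost every string of each relevant length exactly once in order, the LZ parsing produces roughly $m/\log m \cdot (1+o(1))$ distinct phrases whose encoding has total length $\geq m(1-o(1))$, so for $m$ large, $|C(S\upharpoonright m)| - |LZ(S\upharpoonright m)| \leq m - m(1-\beta) = \beta m$ fails to hold — more precisely $|LZ(S\upharpoonright m)| > (1-\beta)m \geq |C(S\upharpoonright m)| - \beta m$ is violated — hence no $\ILFST$ $C$ can witness $\beta$-LZ-depth, so LZ-depth$(S)<\beta$.

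The main obstacle I expect is reconciling the two quantifier structures: PB-depth$(S)\geq\tfrac12-\beta$ demands the FS-minus-PB gap at \emph{almost every} length $n$, but the clean pebble savings occur only at the special lengths where the prefix ends at a flag; handling the lengths strictly inside an $X$-half or $X^{-1}$-half requires either a more careful pebble construction that partially reconstructs a truncated reversed listing, or a rate argument showing the block sizes grow quickly enough that the fraction of the prefix lying in a "partial" subblock is $o(1)$ and the savings on the completed part already exceed $(\tfrac12-\beta)m$. The second obstacle is making the FS lower bound rigorous without normality: the listing of almost-all strings is incompressible for size-$k$ FSTs by a counting argument, and I would phrase this via the relationship in Theorem \ref{compvs decomp} and Lemma \ref{FS: Lemma: GEQ Lemma}, applied block-by-block, to get $\FScomp{k}(S\upharpoonright m)\geq m(1-\beta/2)$ uniformly in $k$ for large $m$. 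Everything else — the pebble transducer construction, Engelfriet's composition closure for handling nested block regeneration, and the standard poor-compression-of-listings fact for LZ — is routine, with the details deferred to the appendix as in the preceding theorems.
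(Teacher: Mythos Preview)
Your overall strategy matches the paper's: you cite the same Mayordomo--Moser--Perifel construction, you identify that a $1$-pebble transducer can print $XFX^{-1}$ from $X$, you plan a counting argument for the FS lower bound (the paper does this directly via Theorem \ref{compvs decomp}), and you observe that LZ compresses such listings poorly. The LZ portion and the FS incompressibility sketch are fine.

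The genuine gap is quantitative: a \emph{single} $X_nF_nX_n^{-1}$ subblock per block only yields PB-depth about $1/4$, not $1/2-\beta$. The worst prefix is the one ending at the flag $F_n$. There the pebble transducer's savings come solely from the completed blocks $B_1,\ldots,B_{n-1}$: its input has length roughly $\sum_{j<n}|X_j|+|X_n|$ against output length $2\sum_{j<n}|X_j|+|X_n|$. Since $|X_j|\sim j2^j$ one has $\sum_{j<n}|X_j|\approx |X_n|/2$, so the input/output ratio at this prefix is $\tfrac{3/2}{2}=\tfrac34$, and the FS$-$PB gap is only $\approx m/4$. Neither of your proposed fixes repairs this. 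Partial reconstruction of a truncated reversal helps inside the $X_n^{-1}$ half, but during the $X_n$ half nothing has been reversed yet and there is nothing to reconstruct. Making the blocks grow faster is the wrong direction: it makes the currently-unsaved $X_n$ dominate the prefix and drives the gap ratio toward $0$, not $1/2$.

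The idea you are missing, which is exactly what \cite{DBLP:journals/mst/MayordomoMP11} and the paper use, is an extra parameter $v$: each block $S_n$ is split into $v{+}1$ subblocks $X_{n,1}F\,Y_{n,1}\cdots X_{n,v+1}F\,Y_{n,v+1}$ with $Y_{n,i}=X_{n,i}^{-1}$. Now the currently-incomplete subblock has length at most about $|S_n|/v$, so the ``no-savings'' region is an $O(1/v)$ fraction of any prefix, and choosing $v$ large pushes $\limsup_m \PBcomp{|T|}(S\upharpoonright m)/m$ down to $1/2$ (the paper's Claim \ref{PB: Thm: PB vs LZ: Claim1}). Without this refinement the $\forall^\infty$ bound $\FScomp{k}-\PBcomp{k'}\geq(\tfrac12-\beta)m$ fails on infinitely many $m$. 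A secondary detail you gloss over is that the flags must be distinguishable from the listed strings; the paper handles this by listing only strings that avoid the substring $1^k$ and using flags from $\{1^{f(n)}:f(n)\geq 2k\}$, which is why the FS lower bound is $(k-3)/k$ rather than $1$.
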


\begin{proof}

Let $0<\beta<\frac{1}{2}$, and let $k>2$ and $v$ be integers to be determined later. For any $n \in \N$, let $T_n$ denote the set of strings of length $n$ that do not contain the substring $1^j$ in $x$ for all $j \geq k$. As $T_n$ contains the set of strings whose every $k^{\text{th}}$ bit is $0$, it follows that $|T_n| \geq 2^{(\frac{k-1}{k})n}$. Note that for every $x \in T_n$, there exists $y \in T_{n-1}$ and $b \in \bin$ such that $x = yb$. Hence 
\begin{equation}
|T_n| < 2|T_{n-1}|. \label{PB: Thm: PB vs LZ: EQ1}
\end{equation}

Let $A_n = \{a_{1_n},\ldots,a_{u_n}\}$ be the set of palindromes in $T_n$. As fixing the first $\lceil \frac{n}{2} \rceil$ bits determines a palindrome, $|A_n| \leq 2^{\lceil \frac{n}{2} \rceil}$. The remaining strings in $T_n - A_n$ are split into $v+1$ pairs of sets $X_{n,i} = \{x_{n,i,1},\ldots, x_{n,i,t_n^i}\}$ and $Y_{n,i} = \{y_{n,i,1},\ldots, y_{n,i,t_n^i}\}$ where $t_n^i = \lfloor\frac{|T_n - A_n|}{2v} \rfloor$ if $i \neq v+1$ and $$t_n^{v+1} = \frac{1}{2}(|T_n-A_n| - 2\sum_{i = 1}^v|X_{n,i}|), $$ $(x_{n,i,j})^{-1} = y_{n,i,j}$ for every $1 \leq j \leq t_n^i$ and $1 \leq i \leq v+1$ both $x_{n,i,1}$ and $y_{n,i,t_n}$ start with $0$ (that is, $x_{n,i,t_n^i}$ ends with a $0$) excluding the case where both $X_{n,v+1}$ and $Y_{n,v+1}$ are the empty sets). Note that for convenience we write $X_i,Y_i$ for $X_{n,i},Y_{n,i}$ respectively.

$S$ is constructed in stages. Let $f(k) = 2k$ and $f(n+1) = f(n) + v + 2$. Note that $n < f(n) < n^2$ for large $n$. For $n \leq k-1$, $S_n$ is a concatenation of all strings of length $n$, i.e. $S_n = 0^n\cdot0^{n-1}1\cdots 1^{n-1}0 \cdot 1^n.$ For $n \geq k$,$$S_n = a_{1_n}\ldots a_{u_n}1^{f(n)}z_{n,1}z_{n,2}\ldots z_{n,v}z_{n,v+1}$$
    where $$z_{n,i} = x_{n,i,1}x_{n,i,2}\ldots x_{n,i,t^i_n-1}x_{n,i,t^i_n}1^{f(n) + i}y_{n,i,t^i_n}y_{n,i,t^i_n-1}\ldots y_{n,i,2}y_{n,i,1},$$ with the possibility that $z_{n,v+1} = 1^{f(n) + v + 1}$ only.
That is, $S_n$ is a concatenation of all strings in $A_n$ followed by a flag of $f(n)$ ones, followed by a concatenation of all strings in the $X_i$ zones and $Y_i$ zones separated by flags of increasing length such that each $Y_i$ zone is the $X_i$ zone written in reverse. Let $$S = S_1S_2\ldots S_{k-1}1^k1^{k+1}\ldots 1^{2k-1}S_kS_{k+1}\ldots$$ i.e. the concatenation of all $S_j$ zones with some extra flags between $S_{k-1}$ and $S_k$.

We first examine the lower randomness density of $S$ for pebble transducers.
\begin{claim}
$$\lim_{k \to \infty}\limsup_{n \to \infty}\frac{\PBcomp{k}(S \upharpoonright n)}{n} \leq \frac{1}{2}.$$
\label{PB: Thm: PB vs LZ: Claim1}
\end{claim}

\noindent We prove this claim by building the $1$-pebble transducer $T$ that acts as follows: $T$ begins moving right and printing its input until it sees the first $0$ after a flag of $2k$ ones. Upon seeing this $0$, if the succeeding bit is a $1$, $T$ stays in the print zone. $T$ moves right and prints what is on its tape until it sees a flag of $2k$ ones followed by a $0$ again. If $T$ sees a $0$ after $1^{2k}0$, $T$ enters a print-and-reverse zone. $T$ drops its pebble on the succeeding square. $T$ moves its head right printing what it sees until it sees $1^{2k}0$ (without printing the last $0$), then scans left past the flag of $1$s. Once the flag of $1$s ends, $T$ prints what it sees (i.e. printing the reverse of what it just printed) until it reaches the square with the pebble, printing what is on it. $T$ then moves right until it sees $1^{2k}0$ again and checks the next bit to see if it is in a print or print-and-reverse zone.

Let $y = S_1 \ldots S_{k-1}1^k \ldots 1^{2k-1}$. Then $T(y0) = y.$ Note that $|y| +1 < 2^{2k}.$ For $n \geq k$ and $1 \leq i \leq v$, let $$\pi_n = 1a_{1_n}\ldots a_{u_n}1^{f(n)}0 \, \textrm{ and } \, \sigma_{n,i} = 0x_{n,i,1}\cdots x_{n,i,t_n^i}1^{f(n) + i}0.$$ If $i = v+1$ we let $$\sigma_{n,v+1} = \begin{cases}
    0x_{n,v+1,1}\cdots x_{n,v+1,t_n^{v+1}}1^{f(n) + v+1}0, & \textrm{if} \, |X_{n,v+1}| \neq 0,\\
    1 1^{f(n) + v + 1}0, & \textrm{ otherwise}.
\end{cases}$$ 
Lastly we set $$\tau_n = \pi_n\sigma_{n,1}\sigma_{n,2}\ldots \sigma_{n,v+1}.$$ Note that \begin{align}
    |\tau_n| &\leq |A_n|n + (v+2)(f(n) + v+1) + 2(v + 2) + \frac{n}{2}|T_n - A_n| \notag\\
    & = |A_n|n + (v+2)(f(n) + v+3) + \frac{n}{2}|T_n - A_n|. \label{PB: Thm: PB vs LZ: EQ2}
    \end{align} 

\noindent Then as $T(y0\tau_k \ldots \tau_{n-1}) = S_1 \ldots S_{k-1}1^k\ldots 1^{2k-1} S_k\ldots S_{n-1},$ it follows that
\begin{align}
    \PBcomp{|T|}( S_1 \ldots S_{k-1}1^k\ldots 1^{2k-1} S_k \ldots S_{n-1}) \leq |S_1 \ldots S_{k-1}1^k\ldots 1^{2k-1}| + 1 \notag\\
    + \sum_{j=k}^{n-1} [|A_j|j + (v+2)(f(j) + v + 3)  +\frac{j}{2}|T_j - A_j|]. \label{PB: Thm: PB vs LZ: EQ3}
\end{align}

Let $w_p$ be the string such that $T(w_p) = S\upharpoonright p.$ Note that the ratio $\frac{|w_p|}{|p|}$ is maximal if the suffix of $S \upharpoonright p$ is a full concatenation of a $Y_{n,i}$ zone without the final bit. That is, $S \upharpoonright p$ ends with a suffix of the form $$y_{n,i,t_n^i}\ldots y_{n,i,2}y_{n,i,1}[0..n-2].$$ This is because $T$ cannot make use of its two-way capability to print the reverse of the $X$ zone since it does not know where to stop. In particular, the ratio is maximal on the zone $i=1$ as it immediately follows palindrome portion of $S_n$ where $T$ acts as the identity transducer to output it.

Let $0 \leq I < v$. We do not examine the case where $I = v$ as in this case, $T$ requires the fewest amount of bits to output the $v+1^{\thh}$ zone. We examine the ratio $\frac{|w_p|}{|S \upharpoonright p|}$ inside zone $S_n$ on the second last symbol of the $Y_{I+1}$ zone. Note that $T$ outputs $S \upharpoonright p$ on input
\begin{align*}
    &y0\tau_k \ldots \tau_{n-1} \pi_n\sigma_{n,1}\ldots \sigma_{n,I}1z
\end{align*}
where $$z = x_{n,I+1,1}\ldots
    x_{n,I+1,t_n}1^{f(n)+I+1}y_{n,I+1,t_n}\ldots y_{n,I+1,2}y_{n,I+1,1}[0..n-2].$$
Thus
\begin{align}
    |w_p| &\leq 2^{2k} + \sum_{j=k}^{n-1} [|A_j|j + (v+2)(f(j) + v + 3)
    +\frac{j}{2}|T_j - A_j|] \notag\\
    & + |A_n|n + (v+2)(f(n) + v + 3) + I(\frac{n|T_n - A_n|}{2v}) + \frac{n|T_n - A_n|}{v} \label{PB: Thm: PB vs LZ: EQ4}
\end{align}    

 %
 
 Note first that \begin{equation}
     \sum_{j=k}^{n}|A_j|j \leq n^2|A_n| \leq n^2\cdot2^{\lceil\frac{n}{2}\rceil} \leq n^2 \cdot2^{\frac{n+1}{2}} \label{PB: Thm: PB vs LZ: EQ5}
 \end{equation}for $n$ large. Similarly the summation of the $(f(j)+v+2)$ contributes at most a polynomial number of bits in $n$. Along with the $2^{2k}$ term being a constant term this gives us for all $\varepsilon > 0$, for $n$ large 
 \begin{align}
     |w_p| &\leq 2^{n(\frac{1}{2} + \varepsilon)} + \sum_{j=k}^{n-1} \frac{j}{2}|T_j| + \frac{n|T_n|}{v}(\frac{I}{2} + 1) \notag \\
     & = 2^{n(\frac{1}{2} + \varepsilon)} + \sum_{j=k}^{n-1} \frac{j}{2}|T_j| + \frac{n|T_n|}{2v}(I + 2). \label{PB: Thm: PB vs LZ: EQ6}
 \end{align}
 The number of bits in such a prefix of $S$ is
  
\begin{align}
      |S \upharpoonright p| & \geq \sum_{j = k}^{n-1}j|T_j| + n|A_n| +  2n\Big \lfloor \frac{|T_n - A_n|}{2v}\Big\rfloor(I  + 1) \notag \\
      & \geq \sum_{j = k}^{n-1} + n|A_n| + 2n(\frac{|T_n-A_n|}{2v}-1)(I+1) \notag \\
      & = \sum_{j = k}^{n-1} + n|A_n| + n(\frac{|T_n|-|A_n|}{v}-2)(I+1) \notag\\
      & = \sum_{j = k}^{n-1}j |T_j| + n|A_n|(1 - \frac{(I+1)}{v}) + n(I+1)(\frac{|T_n|}{v} -2) \notag \\
      & \geq \sum_{j = k}^{n-1}j |T_j| + \frac{n}{v}|T_n|(I) \label{PB: Thm: PB vs LZ: EQ7}
  \end{align}
   as $I+1 \leq v$.
  
 %
 
 Hence,
 \begin{align}
     \limsup\limits_{n \rightarrow \infty}\frac{|w_n|}{|S \upharpoonright n|} & \leq \limsup\limits_{n \rightarrow \infty}\frac{2^{n(\frac{1}{2} + \varepsilon)} + \sum_{j=k}^{n-1} \frac{j}{2}|T_j| + \frac{n|T_n|}{2v}(I + 2)}{\sum_{j = k}^{n-1}j|T_j| + \frac{n|T_n|}{v}(I)} \tag{by \eqref{PB: Thm: PB vs LZ: EQ6} and \eqref{PB: Thm: PB vs LZ: EQ7}}\\
     & = \limsup\limits_{n \rightarrow \infty}\bigg[\frac{2^{n(\frac{1}{2} + \varepsilon)} + 2\cdot\frac{n|T_n|}{2v}}{\sum_{j = k}^{n-1}j|T_j| + \frac{n|T_n|}{v}(I)} \notag \\ 
    & + \frac{1}{2}\cdot\frac{\sum_{j=k}^{n-1} j|T_j| + \frac{n|T_n|}{v}(I)}{\sum_{j = k}^{n-1}j|T_j| + \frac{n|T_n|}{v}(I)} \bigg] \notag \\
     & = \limsup\limits_{n \rightarrow \infty}\bigg[\frac{2^{n(\frac{1}{2} + \varepsilon)} + \frac{n|T_n|}{v}}{\sum_{j = k}^{n-1}j|T_j| + \frac{n|T_n|}{v}(I)} + \frac{1}{2} \bigg]. \label{PB: Thm: PB vs LZ: EQ8}
 \end{align}
 By \eqref{PB: Thm: PB vs LZ: EQ1}, as $\sum_{j=k}^{n-1}j|T_j| \geq (n-1)|T_{n-1}| \geq \frac{(n-1)}{2}|T_n|$, we have
 \begin{align}
     \sum_{j=k}^{n-1}j|T_j| + \frac{n}{v}|T_n|(I)&\geq \frac{n-1}{2}|T_n| +\frac{n}{v}|T_n|(I) \notag\\
     & = \frac{n|T_n|}{2v} (2I + v - \frac{v}{n}). \label{PB: Thm: PB vs LZ: EQ9}
 \end{align}
 Thus, when $\varepsilon$ is chosen to be such that $0 < \varepsilon < \frac{1}{2}- \frac{1}{k}$ we have that
\begin{align*}
    \limsup\limits_{n \rightarrow \infty} \frac{2^{n(\frac{1}{2} + \varepsilon)}}{\sum_{j = k}^{n-1}j|T_j| + \frac{n|T_n|}{v}(I)} & \leq \limsup\limits_{n \rightarrow \infty}\frac{2^{n(\frac{1}{2} + \varepsilon)}}{\frac{(n-1)}{2}|T_n|} 
     \leq \limsup\limits_{n \rightarrow \infty}\frac{2^{n(\frac{1}{2} + \varepsilon)}}{|T_n|} \\
    & \leq \limsup\limits_{n \rightarrow \infty}\frac{2^{n(\frac{1}{2} + \varepsilon)}}{2^{\frac{(k-1)n}{k}}} = 0 \label{PB: Thm: PB vs LZ: EQ10}
\end{align*}
as $k > 2$. Similarly by \eqref{PB: Thm: PB vs LZ: EQ9} we have
\begin{equation}
\frac{\frac{n|T_n|}{v}}{\sum_{j = k}^{n-1}j|T_j| + \frac{n|T_n|}{v}(I + 1)} \leq \frac{\frac{n|T_n|}{v}}{\frac{n|T_n|}{2v}(2I + v - \frac{v}{n})} 
 \leq \frac{2}{v(1 - \frac{1}{n})} \label{PB: Thm: PB vs LZ: EQ11}
\end{equation}
which can be made arbitrarily small by choosing $v$ appropriately large.

Therefore $$\limsup\limits_{n \rightarrow \infty}\frac{|w_n|}{|S \upharpoonright n|} \leq \frac{1}{2}.$$ This establishes Claim \ref{PB: Thm: PB vs LZ: Claim1}, i.e. for all $0<\beta'< 1/2 - 3/k$, we can choose $v,\,I$ and $k$ appropriately such that \begin{align}
    \PBcomp{|T|}(S \upharpoonright n) \leq (\frac{1}{2} + \frac{\beta'}{2})n. \label{PB: Thm: PB vs LZ: EQ12}
\end{align}

Next we examine how well any ILFST can compress prefixes of $S$. We use Theorem \ref{compvs decomp} to relate the compression performance back to $k$-finite state complexity.

\begin{claim}
$$\rho_{\textrm{FS}}(S) \geq \frac{k-3}{k}.$$
\label{PB: Thm: PB vs LZ: Claim2}
\end{claim}

Let $C \in \text{ILFST}.$ We assume every state in $C$ is reachable from its start state. Let $n\geq k$ and suppose $C$ is reading $S_n$. We examine the proportion of strings in $T_n$ that give a large contribution to the output. The argument is similar to that found in \cite{DBLP:journals/tcs/BecherH13}. 

We write $C(p,x) = (q,v)$ to represent that when $C$ is in state $p$ and reads input $x$, $C$ outputs $v$ and finishes in state $q$. For each $x \in T_n$, let
$$h_x = \min \{ |v| : \exists p,q \in Q, C(p,x) = (q,v)\}$$ be the minimum possible addition of the output that could result from $C$ reading $x$. Let $$B_n = \{ x \in T_n : h_x \geq \frac{(k-2)n}{k} \} $$ be the `incompressible' strings that give a large contribution to the output.

We write $C(p,x) = (q,v)$ to represent that when $C$ is in state $p$ and reads input $x$, $C$ outputs $v$ and finishes in state $q$, i.e. $C(p,x) = (\delta_C(p,x), \nu_C(p,x)) = (q,v)$. For each $x \in T_n$, let
$$h_x = \min \{ |v| : \exists p,q \in Q, C(p,x) = (q,v)\}$$ be the minimum possible addition of the output that could result from $C$ reading $x$. Let $$B_n = \{ x \in T_n : h_x \geq \frac{(k-2)n}{k} \} $$ be the `incompressible' strings that give a large contribution to the output.

Either $B_n = T_n$ meaning that $|B_n| = |T_n|$ or $B_n$ is a strict subset of $T_n$. If the latter case is true,  consider $x' \in T_n - B_n$. Then there is a computation of $x'$ that results in $C$ outputting at most $\frac{(k-2)n}{k}$ bits. As $C$ is lossless, $x'$ can be associated uniquely to a start state $p_{x'}$, end state $q_{x'}$ and output $v_{x'}$ where $|v_{x'}| < \frac{(k-2)n}{k}$ such that $C(p_{x'},x') = (q_{x'},v_{x'}).$ That is, we can build an injective map $g: T_n - B_n \to Q \times \bin^{< \frac{(k-2)n}{k}} \times Q$ where $g(x') = (p_{x'},v_{x'},q_{x'})$. As this map $g$ is injective, we can bound $|T_n - B_n|$ from above by
\begin{equation}
   | T_n - B_n| < |Q|^2 \cdot 2^{\frac{(k-2)n}{k}}. \label{PB: Thm: PB vs LZ: EQ13}
\end{equation}

Let $0 < \delta < \frac{1}{12(k-2)}$.  As $|T_n| \geq 2^{\frac{(k-1)n}{k}}$, by \eqref{PB: Thm: PB vs LZ: EQ13} we have that for $n$ large
\begin{align}
    |B_n| &= |T_n| - |T_n - B_n| \notag\\
    & > |T_n| - |Q|^2  \cdot 2^{\frac{(k-2)n}{k}} \notag \\
    & > |T_n|(1 - \delta). \label{PB: Thm: PB vs LZ: EQ14} 
\end{align}

Similarly, as the flags only compose $O(n^2)$ bits in each $S_n$ zone for $n \geq k$, we have for $n$ large that  
\begin{align}
|T_n|n > |S_n|(1 - \delta). \label{PB: Thm: PB vs LZ: EQ15}
\end{align}

Then for $n$ large (say for all $n\geq i$ such that \eqref{PB: Thm: PB vs LZ: EQ14} and \eqref{PB: Thm: PB vs LZ: EQ15} hold),
\begin{align}
    |C(S_1 \ldots S_i \ldots S_n)|
    & >  \frac{k-2}{k}\sum_{j=i}^m j |B_j| \notag\\
    & > \frac{k-2}{k}(1 - \delta)\sum_{j=i}^n j|T_j|  \tag{by \eqref{PB: Thm: PB vs LZ: EQ14}}\\
    & > \frac{k-2}{k}(1 - 2\delta)\sum_{j=i}^n |S_j| \tag{by \eqref{PB: Thm: PB vs LZ: EQ15}}\\
    & = \frac{k-2}{k}(1 - 2\delta)(|S_1 \ldots S_n| - |S_1 \ldots S_{i-1}|) \notag \\
    & > \frac{k-2}{k}(1 - 3\delta)|S_1 \ldots S_n|. \label{PB: Thm: PB vs LZ: EQ16}
\end{align}

The compression ratio of $S$ on $C$ is least on prefixes of the form $S_1 \ldots S_n x_{n+1}$, where potentially $x_{n+1}$ is a concatenation of all the strings in $T_{n+1}-B_{n+1}$, i.e. the compressible strings of $T_{n+1}$. Let $x_{n+1}$ be a such a potential prefix of $S_{n+1}$. Then if $F_{n+1} = \sum_{i=0}^{v+1} (f(n+1) + i)$ is the length of the flags in $S_{n+1}$, we can bound the length of $|x_{n+1}|$ as follows: \begin{align}
    |x_{n+1}| &< |T_{n+1}-B_{n+1}|(n+1) + F_{n+1} \notag\\
    & < (|T_{n+1}| - |B_{n+1}|)(n+1) + cn^2 \tag{for some  $c \in \N$} \\
    &< \delta|T_{n+1}|(n+1) + \delta|T_{n}|(n+1) \tag{by \eqref{PB: Thm: PB vs LZ: EQ14}} \\
    & < 2\delta|T_n|(n+1) + \delta|T_{n}|(n+1)  \tag{by \eqref{PB: Thm: PB vs LZ: EQ1}}\\
    & = 3\delta|T_n|(n+1)
     < 3\delta|S_n| + 3\delta|T_n| < 3\delta|S_n| + 6\delta|T_{n-1}| \tag{by \eqref{PB: Thm: PB vs LZ: EQ1}} \\
    & < 6\delta|S_1 \ldots S_n| \label{PB: Thm: PB vs LZ: EQ17}
\end{align}
for $n$ large.

So for $n$ large,
\begin{align*}
    |C(S_1 \ldots S_nx_{n+1})| &> (\frac{k-2}{k})(1 - 3\delta)(|S_1 \ldots S_nx_{n+1}| - |x_{n+1}|)  \tag{by \eqref{PB: Thm: PB vs LZ: EQ16}}\\
    & > \frac{k-2}{k}(1 - 6\delta)(|S_1 \ldots S_nx_{n+1}| - 6\delta|S_1 \ldots S_n|) \tag{by \eqref{PB: Thm: PB vs LZ: EQ17}} \\
    &> \frac{k-2}{k}(1 - 6\delta)(|S_1 \ldots S_n|(1 - 6\delta) + |x_{n+1}|(1 - 6\delta)) \\ 
    &  > \frac{k-2}{k}(1 - 6\delta)^2|S_1 \ldots S_nx_{n+1}| > \frac{k-2}{k}(1 - 12\delta)|S_1 \ldots S_nx_{n+1}|\\
    & > \frac{k-3}{k}|S_1 \ldots S_nx_{n+1}| \label{PB: Thm: PB vs LZ: EQ18}
\end{align*}
by choice of $\delta$. 

Hence $\rho_{\textrm{FS}}(S) \geq \frac{k-3}{k}$ establishing Claim \ref{PB: Thm: PB vs LZ: Claim2}.Therefore for all $m$, for almost every prefix of $S$ we have that $$\FScomp{m}(S \upharpoonright n) \geq (\frac{k-3}{k} - \frac{\beta'}{2}).$$

Hence we have that for all $m$ and for almost every prefix of $S$ that \begin{equation}
    \FScomp{m}(S \upharpoonright n) - \PBcomp{|T|}(S \upharpoonright n) \geq (\frac{k-3}{k} -\frac{\beta'}{2})n  - (\frac{1}{2} + \frac{\beta'}{2})n = (\frac{1}{2} - \frac{3}{k} - \beta')n. \label{PB: Thm: PB vs LZ: EQ19} 
\end{equation}

Then, choosing $k,v$ and $\beta'$ such that $\beta =\frac{3}{k} + \beta' < \frac{1}{2}$ gives us the desired result that PB-depth$(S) \geq \frac{1}{2} - \beta.$

Next we examine LZ-depth. Recall $\rho_{LZ}(S) \geq 1 - \varepsilon$. Thus for $c$ such that $\varepsilon + c < \beta$ (recall $\varepsilon < \beta)$, for almost every $n$ it holds that \begin{equation}|LZ(S \upharpoonright n)| > (1 - \varepsilon-c)n. \label{LZ: Thm: PD not LZ: EQ10}\end{equation} Hence as $I_{\mathrm{FS}} \in \ILFST$, we have that for almost every $n$ \begin{equation}|I_{\mathrm{FS}}(S \upharpoonright n)| - |LZ(S \upharpoonright n)| < n - (1 - \varepsilon-c)n = (\varepsilon + c)n < \beta n. \label{LZ: Thm: PD not LZ: EQ11}\end{equation} Hence we have that LZ-depth$(S) < \beta$.

In conclusion, for all $0 < \beta < \frac{1}{2}$, choosing $\varepsilon$ such that $\varepsilon<\beta$ and $k$ such that $\frac{4}{k} < \beta$, a sequence $S$ can be built which satisfies the requirements of the theorem.

For completeness, the construction of $T$ is confined to the appendix subsection \ref{cons2}.

\end{proof}

\begin{corollary}
There exists a non-normal PB-deep sequence.
\label{PB: Cor: Non normal deep}
\end{corollary}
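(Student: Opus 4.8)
The plan is to show that the very sequence $S$ constructed in the proof of Theorem~\ref{PB: Thm: PB vs LZ} already witnesses the corollary. Fix any $\beta$ with $0<\beta<\tfrac12$; Theorem~\ref{PB: Thm: PB vs LZ} then produces a sequence $S$ with PB-depth$(S)\ge\tfrac12-\beta>0$, so by Definitions~\ref{PB: Def: PB Depth} and~\ref{PB: Def: PB depth} this $S$ is PB-deep. It therefore suffices to check that this $S$ is \emph{not} normal, and for that I would track the asymptotic frequency of the single substring $1^{k}$, where $k$ is the parameter fixed in the construction.

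The key observation is that, by definition, every string in $T_n$ --- and hence every palindrome in $A_n$ and every $x_{n,i,j}$, $y_{n,i,j}$ --- contains no block $1^{j}$ with $j\ge k$, so in a block $S_n$ with $n\ge k$ no occurrence of $1^{k}$ sits entirely inside one of the listed length-$n$ strings. The only occurrences of $1^{k}$ inside $S_n$ thus come from (i) the $v+2$ flags $1^{f(n)}$, $1^{f(n)+i}$, which have total length $\Theta(n)$ and hence contribute $\Theta(n)$ occurrences, and (ii) the $O(|T_n|)$ junctions between two consecutive listed strings (or between a string and an adjacent flag), each contributing at most $k-1$ occurrences, for $O(k\,|T_n|)$ in total. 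Since $|S_n|=\Theta(n\,|T_n|)$ by~\eqref{PB: Thm: PB vs LZ: EQ15}, the number of occurrences of $1^{k}$ inside $S_n$ is only $O\!\big(k\,|S_n|/n\big)=o(|S_n|)$; the same bound absorbs the palindrome prefix $a_{1_n}\cdots a_{u_n}$ of $S_n$, whose length $n|A_n|\le n2^{(n+1)/2}$ is itself $o(|S_n|)$ because $k>2$.

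Finally, the blocks $S_1,\dots,S_{k-1}$ together with the extra flags $1^{k}\cdots 1^{2k-1}$ form a fixed finite prefix of $S$ and so do not influence asymptotic frequencies, and each block junction contributes only $O(1)$ further occurrences of $1^{k}$. A routine Cesàro-type estimate --- of the same flavour as the ratio computations in~\eqref{PB: Thm: PB vs LZ: EQ16}--\eqref{PB: Thm: PB vs LZ: EQ18}, using that $|S_n|$ is a constant fraction of $|S_1\cdots S_n|$ while the per-block density of $1^{k}$ tends to $0$ --- then shows that $1^{k}$ occurs in $S$ with asymptotic frequency $0\ne 2^{-k}$. Hence $S$ is not normal, and being also PB-deep it establishes the corollary. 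The one step demanding care is exactly this last bookkeeping: bounding the flag and boundary contributions uniformly and then legitimately passing from the per-block density estimate to a statement about \emph{every} prefix $S\upharpoonright m$, including those ending in the middle of a flag, which is where I expect the proof to spend its effort.
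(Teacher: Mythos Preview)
Your proposal is correct and, like the paper, uses the sequence $S$ from Theorem~\ref{PB: Thm: PB vs LZ} to witness the corollary. The difference is in which substring you track to break normality.

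You study $1^{k}$ and argue that its asymptotic frequency in $S$ is $0$. This is true, but --- as you yourself note --- it needs a genuine density computation: counting flag contributions, junction contributions, comparing against $|S_n|=\Theta(n|T_n|)$, and then a Ces\`aro-type passage from per-block densities to all prefixes. The paper sidesteps all of this by instead looking at the string $01^{2k}0$ and observing that it occurs only \emph{finitely many} times in $S$. The reason is immediate: since every element of $T_n$ avoids $1^{k}$ (hence $1^{2k}$), an occurrence of $01^{2k}$ must have its $1^{2k}$ as a prefix of one of the flags; but the flags $1^{f(n)+i}$ have strictly increasing lengths, so for all but finitely many of them the bit following $01^{2k}$ is another $1$, not a $0$.

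What the paper's choice buys is precisely the elimination of the ``one step demanding care'' you flagged: with only finitely many occurrences there is nothing to estimate and no prefix-by-prefix bookkeeping, so the corollary becomes a two-line observation. Your approach is perfectly valid, just heavier than necessary; the general lesson is that to refute normality it is often easier to look for a pattern that occurs only finitely often rather than one that merely has the wrong positive density.
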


\begin{proof}
This follows from Theorem \ref{PB: Thm: PB vs LZ} since the string $01^{2k}0$ only occurs as a substring of the constructed $S$ a finite number of times. This is clear as the only places $01^{2k}$ can occur is if $0$ is the last bit of $S_{k-1}$ or where the $1^{2k}$ is a prefix to a flag in some zone $S_n$. However, as the flags increase in length, $01^{2k}$ will eventually always be followed by another $1$.

\end{proof}

\subsection{Pushdown Depth - Preliminary Result}

In this section we do not present an example of a sequence which is PB-deep but not PD-deep. More work is to be done to find such sequences, if they exist. Instead we present a preliminary result which states that for all $0< \beta < 1/2$, one can construct a sequence $S$ such that PB-depth$(S) \geq 1 - \beta$ while PD-depth$(S) \geq 1/2 - \beta$. Hence, the sequence is deep in both notions, and it is possible that their depth levels are in fact equal.

The sequence is composed of strings of the form $R^{|R|}F(R^{-1})^{|R|}$ where $F$ is a flag and $R$ is a string not containing $F$ with large plain Kolmogorov complexity relative to its length. Note that $R^{|R|}$ is a string of length $|R|^2$. From a single description of $R$, a $1$-pebble transducer can use a single pebble to print $R^{|R|}$. A large ILPDC with no restriction on its stack can be built to push $R^{|R|}$ onto its stack, and then when it sees the flag $F$, use its stack to compress $(R^{-1})^{|R|}$. These $R$ are built such that an $\ILUPDC$ is unable to use its stack to compress $R$, resulting in minimal compression.

\begin{remark}
For all $0< \beta < 1/2$, there exists a sequence $S$ such that  PB-depth$(S) \geq 1 - \beta$ and PD-depth$(S) \geq 1/2 - \beta$.
\label{PB: rmk: pb not pdfs}
\end{remark}

\begin{proof}

Let $0 < \beta < 1/2$ and let $k> 8$ be such that $\beta \geq 8/k$. For each $n$, let $t_n = k^{\lceil\frac{\log n}{\log k}\rceil}$. Note that for all $n$, \begin{align}
    n \leq t_n \leq kn. \label{PB: rmk: pb not pdfs2: EQ1} 
\end{align}
Consider the set $T_j$ which contains all strings of length $j$ that do not contain $1^{k}$ as a substring. As $T_j$ contains strings of the form $x_10x_20x_30\cdots$ where each $x_t$ is a string of length $k-1$, we have that $|T_j| \geq 2^{j(1 - \frac{1}{k})}$.
For each $j$, let $R_j \in \bin^{kt_j}$ have maximal plain Kolmogorov complexity in the sense that 
\begin{equation}
K(R_j) \geq |R_j|(1 - \frac{1}{k}). \label{PB: rmk: pb not pdfs2: EQ2}
\end{equation} Such an $R_j$ exists as $|T_{|R_j|}| > 2^{|R_j|(1 - \frac{1}{k})}-1$. Note that $kj \leq |R_j| \leq k^2j$. We construct $S$ in stages $S = S_1S_2\ldots$ where for each $j$, $$S_j = R_j^{|R_j|}1^k(R_j^{-1})^{|R_j|}.$$

\begin{claim}
PD-depth$(S) \geq \frac{1}{2} - \beta.$
\end{claim}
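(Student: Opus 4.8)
The plan is to prove, for \emph{every} $C \in \ILUPDC$, that the ILPDC $C'$ sketched just before the statement satisfies $|C(S \upharpoonright n)| - |C'(S \upharpoonright n)| \geq (\tfrac12 - \beta)n$ for all large $n$; this gives PD-depth$(S) \geq \tfrac12 - \beta$ with witness $\alpha = \tfrac12 - \beta$ (note the same $C'$ serves every $C$). I would obtain this from an upper bound $|C'(S \upharpoonright n)| \leq (\tfrac12 + \tfrac{\beta}{2})n$ and a lower bound $|C(S \upharpoonright n)| \geq (1 - \tfrac{4}{k})n$, both for $n$ large, which combine to the claim since $\beta \geq 8/k$.

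For the upper bound, on the expected block shape $C'$ reads the first half $R_j^{|R_j|}$ of $S_j$ by copying its input to the output and simultaneously pushing each bit onto its binary stack; on the flag $1^k$ it outputs the flag and leaves the stack alone; and on the second half it pops one symbol per input bit, outputting nothing, emptying back to $z_0$ exactly, since $(R_j^{-1})^{|R_j|}$ is precisely the stack read in reverse. (We may assume, at the cost of an $O(1)$ adjustment to \eqref{PB: rmk: pb not pdfs2: EQ2}, that each $R_j$ begins and ends with $0$, so that $1^k$ occurs in $S$ only as a flag; on any input deviating from this shape $C'$ reverts to a verbatim mode flagged by a marker bit and so remains information lossless on all of $\fbins$. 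As in \cite{DBLP:journals/mst/MayordomoMP11,JordonMPDLZ}, this detailed construction is routine and can be deferred to the appendix.) A complete block then compresses to $|R_j|^2 + k + O(1)$ bits out of $|S_j| = 2|R_j|^2 + k$, a ratio tending to $\tfrac12$. Since $kj \leq |R_j| \leq k^2 j$ we have $|S_j| = \Theta(|R_j|^2)$ while $|S_1 \ldots S_{j-1}| = \Theta(\sum_{i<j}|R_i|^2)$ grows strictly faster, so $|S_j|/|S_1 \ldots S_{j-1}| \to 0$; hence the final partial block and the at most $(j-1)k$ flag bits contribute only $o(n)$, and a short case analysis on where $n$ lands inside $S_j$ gives the upper bound.

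For the lower bound I would argue that $C$'s unary stack is useless against the high Kolmogorov complexity of the $R_j$'s. Decompose the first half of $S_j$ into its $|R_j|$ consecutive copies of $R_j$ and the second half into $|R_j|$ copies of $R_j^{-1}$. When $C$ reads one such copy it starts from a configuration $(p,0^h z_0)$ which is \emph{reachable} (by the corresponding prefix of $S$) and which is fully specified by the state $p$ together with the single integer $h$; since $h \leq (c+1)|S_1 \ldots S_j| = \mathrm{poly}(j)$, this configuration is described by $O(\log|R_j|)$ bits. Because $C$ is information lossless, the map sending a length-$|R_j|$ string $x$ to the pair (bits $C$ outputs on $x$ from this configuration, resulting state) is injective — otherwise two distinct inputs to $C$ agreeing up to this configuration would have equal output and final state. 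Hence $K(R_j) \leq (\text{output on this copy}) + O(\log|R_j|)$, and similarly for $R_j^{-1}$ using $K(R_j^{-1}) \geq K(R_j) - O(1)$. With $K(R_j) \geq |R_j|(1 - \tfrac1k)$ from \eqref{PB: rmk: pb not pdfs2: EQ2}, this forces $C$ to output at least $|R_j|(1 - \tfrac2k)$ bits on each of the $2|R_j|$ copies in $S_j$ once $j$ is large. Summing over the complete blocks $S_1, \ldots, S_{j-1}$ and the complete copies inside the partial block $S_j$, and absorbing the $\leq(j-1)k$ flag bits and the $O(|R_j|) = o(n)$ bits of incomplete copies, yields $|C(S \upharpoonright n)| \geq (1 - \tfrac4k)n$ for $n$ large.

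The main obstacle is this last step: making precise the "information losslessness for sub-computations" of $C$ (so that a single copy of $R_j$ can be decoded from $C$'s behaviour on it, via the reachable starting configuration), and checking that a unary stack contributes only $O(\log|R_j|)$ bits to that decoding. This is exactly where the restriction to ILUPDCs rather than general ILPDCs is essential: an unrestricted stack could hold $R_j^{|R_j|}$ and thus carry $\Theta(|R_j|^2)$ bits across a block, destroying the bound — and indeed $C'$ itself exploits precisely that capability.
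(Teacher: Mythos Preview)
Your overall strategy matches the paper's: bound every ILUPDC from below via Kolmogorov complexity of the $R_j$'s plus the observation that a unary stack contributes only $O(\log|R_j|)$ bits of description, and bound a specific ILPDC $C'$ from above by pushing $R_j^{|R_j|}$ and verifying the mirror half against the stack. Your lower-bound argument is essentially the paper's (the paper uses Remark~\ref{PD: Rmk: height discussion} to cap the relevant stack height at $(c+1)|R_j|$ rather than bounding the actual height by the total input length, but either works and yields the same $O(\log|R_j|)$ overhead).

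There is, however, a real gap in your $C'$. You have it ``outputting nothing'' while popping through $(R_j^{-1})^{|R_j|}$, but recall that IL for PDCs (Definition~\ref{PD: Def: ILPDC}) requires the pair $(C'(x),\widehat{\delta}_Q(x))$ to determine $x$ --- the stack is \emph{not} part of this data. If $C'$ outputs nothing on the second half, then for any two distinct prefixes $y,y'$ of $(R_j^{-1})^{|R_j|}$ with $|y|\equiv|y'|\pmod{|Q|}$ the inputs $R_j^{|R_j|}1^ky$ and $R_j^{|R_j|}1^ky'$ have identical output and (for some such pair) identical final state, so $C'$ is not IL and your ratio $|R_j|^2+k+O(1)$ over $2|R_j|^2+k$ is not attained by any ILPDC. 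The paper's fix is to have $C'$ emit a single $0$ every $v$ input bits during the verification phase, with $v=k^a$ chosen so that $v\mid|R_j|$ for almost all $j$ (this is why $|R_j|=kt_j$ with $t_j$ a power of $k$); then the number of trailing zeros together with the index of the compress state recovers exactly how many bits of the second half were read, IL is restored, and the ratio is $\tfrac12+\tfrac1v$, which can be made as close to $\tfrac12$ as needed. With this correction your argument goes through.

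A smaller point: your flag-detection device (force $R_j$ to begin and end with $0$) is fine but unnecessary; since $k\mid|R_j|$, reading the first half in aligned blocks of length $k$ already isolates the flag without any assumption on the endpoints of $R_j$, which is what the paper does.
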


First we examine how well any ILUPDC compresses occurrences of $R_j$ zones in $S$. Let $C \in \ILUPDC$. Consider the tuple  $$(\widehat{C},q_s,q_e,z,\nu_C(q_s,R_j,z))$$
where $\widehat{C}$ is an encoding of $C$, $q_s$ is the state that $C$ begins reading $R_j$ in, $q_e$ is the state $C$ ends up in after reading $R_j$, $z$ is the stack contents of $C$ as it begins reading $R_j$ in $q_s$ (i.e. $z = 0^pz_0$ for some $p$), and the output $\nu_C(q_s,R_j,z)$ of $C$ on $R_j$. By Remark \ref{PD: Rmk: height discussion}, $C$'s stack is only important if $|z| < (c+1)|R_j|$, as if $|z|$ is larger, $C$ will output the same irregardless of $|z|$'s true value.  Hence, setting \begin{equation}
    z' = \begin{cases}
         |z| & \text{ if } |z| < (c+1)|R_j| \\
         (c+1)|R_j| & \text{ if } |z| \geq (c+1)|R_j|,
    \end{cases}
\end{equation}
as $C$ is lossless, having knowledge of the tuple $(\widehat{C},q_s,q_e,z',\nu_C(q_s,R_j,z))$ means we can recover $R_j$.

Using the fact that tuples of the form $(x_1,x_2,\ldots,x_n)$ can be encoded by the string \begin{equation}1^{\lceil \log n_1 \rceil}0n_1x_11^{\lceil \log n_2 \rceil}0n_2x_2 \ldots1^{\lceil \log n_{n-1} \rceil}0n_{n-1}x_{n-1}x_n,\label{Tuple Encoding}\end{equation} where $n_i = |x_i|$ in binary, and noting that $z'$ contributes roughly $O(\log |R_j|)$ bits to the encoding, we have we have by Equation \eqref{PB: rmk: pb not pdfs2: EQ2} that \begin{equation}
    |R_j|(1 - \frac{1}{k}) \leq K(R_j) \leq |\nu_C(q_s,R_j,z)| + O(\log|R_j|) + O(|\widehat{C}|) + O(1). \label{PB: rmk: pb not pdfs2: EQ3}
\end{equation}
Therefore, for $j$ large we have \begin{align}
    |\nu_C(q_s,R_j,z)| \geq |R_j|(1 - \frac{1}{k}) - O(\log|R_j|) > |R_j|(1 - \frac{2}{k}) \label{PB: rmk: pb not pdfs2: EQ4}
\end{align}
This is similarly true for $R_j^{-1}$ zones also as $K(R_j) \leq K(R_j^{-1}) + O(1)$. Hence for $j$ large we see that $C$ outputs at least
\begin{align}
    |C(\overline{S_j})| - |C(\overline{S_{j-1}})| &\geq 2|R_j|^2(1 - \frac{2}{k}) \notag\\
    & = (|S_j| - k)(1 - \frac{2}{k}) \notag \\
    & \geq |S_j|(1 - \frac{3}{k}) \label{PB: rmk: pb not pdfs2: EQ5}
\end{align}
bits when reading $S_j$.

Next we examine how well $C$ compresses $S$ on arbitrary prefixes. Consider the prefix $S \upharpoonright n$ and let $j$ be such that $\overline{S_j}$ is a prefix of $S \upharpoonright n$ but $\overline{S_{j+1}}$ is not. Thus $S\upharpoonright n = \overline{S_j}\cdot y$ for some $y \sqsubset S_{j+1}$. Suppose Equation \eqref{PB: rmk: pb not pdfs2: EQ5} holds for all $i\geq j'$. Hence we have that
\begin{align}
    |C(S \upharpoonright n)| & \geq |C(\overline{S_j})| \notag \geq |C(\overline{S_j})| - |C(\overline{S_{j'-1}})| \notag \\
    & \geq |S_{j'}\ldots S_j|(1 - \frac{3}{k}) - O(1) \tag{by \eqref{PB: rmk: pb not pdfs2: EQ5}} \\
    & = (n - |y| - |\overline{S_{j'-1}}|(1 - \frac{3}{k}) - O(1) \notag \\
    & \geq (n - |y|)(1 - \frac{4}{k}). \label{PB: rmk: pb not pdfs2: EQ6}
\end{align}
Then, noting that $n = \Omega(j^3)$ and that $|y| = O(j^2)$, by Equation \eqref{PB: rmk: pb not pdfs2: EQ6} we have that  \begin{equation}
    |C(S \upharpoonright n)| \geq n(1 - \frac{5}{k}).\label{PB: rmk: pb not pdfs2: EQ7}
\end{equation}
As $C$ was arbitrary, we therefore have that \begin{equation}
    \rho_{\textrm{UPD}}(S) > 1 - \frac{6}{k}. \label{PB: rmk: pb not pdfs2: EQ8}
\end{equation}

Next we build an ILPDC $C'$ that is able to compress prefixes of $S$. Let $C'$ be the ILPDC which outputs its input for some prefix $S_1\ldots S_i$. Then, for all $j > i$, $C'$ compresses $S_j$ as follows: On $S_j$, $C'$ outputs its input on $R_j^{|R_j|}1^k$ while trying to identify the $1^k$ flag. Once the flag is found, $C'$ pops the flag from its stack and then begins to read an $(R_j^{-1})^{|R_j|}$ zone. On $(R_j^{-1})^{|R_j|}$, $C'$ counts modulo $v$ to output a zero every $v$ bits, and uses its stack to ensure that the input is indeed $(R_j^{-1})^{|R_j|}$. If this fails, $C'$ outputs an error flag and enters an error state and from then on outputs its input. Furthermore, $v$ is cleverly chosen such that for all but finitely many $j$, $v$ divides evenly in $|R_j|$. Specifically we set $v = k^a$ for some $a \in \N$. A complete description of $C'$ is provided at the end of this proof.

Next we will compute the compression ratio of $C'$ on $S$. We let $p$ be such that for all $j \geq p$, $v$ divides evenly into $|R_j|$. $C'$ will output its input on $\overline{S_{p}}$ and begin compressing on the succeeding zones. Also, note that the compression ratio of $C'$ on $S$ is largest on prefixes ending with a flag $1^k$. Hence, consider some prefix $\overline{S_{j-1}}R_j^{|R_j|}1^k$ of $S$. We have that for $n$ sufficiently large
\begin{align}
     \frac{|C(\overline{S_{j-1}}R_j^{|R_j|}1^k)|}{|\overline{S_{j-1}}R_j^{|R_j|}1^k|} &\leq \frac{|\overline{S_{p-1}}| + \sum_{i = p}^j (|R_i|^2 + k + \frac{|R_i|^2}{v}) - \frac{|R_j|^2}{k}}{|\overline{S_{j-1}}R_j^{|R_j|}1^k|} \notag \\
     & \leq \frac{|\overline{S_{p-1}}|}{|\overline{S_{j-1}}|} +  \frac{(1 + \frac{1}{v})\sum_{i = 1}^j(kt_i)^2 + jk - \frac{(kt_j)^2}{v}}{|\overline{S_{j-1}}|} \notag  \\
     & \leq \frac{1}{6v} + \frac{(1 + \frac{1}{v})\sum_{i = 1}^j(kt_i)^2 + jk - \frac{(kt_j)^2}{v}}{ 2k^2 \sum_{i = 1}^{j-1}t_i^2} \tag{for $j$ large} \\
    & \leq \frac{1}{6v} + \frac{(1 + \frac{1}{v})\sum_{i = 1}^jt_i^2 + \frac{j}{k} - \frac{t_j^2}{v}}{ 2 \sum_{i = 1}^{j-1}t_i^2} \notag\\
    & \leq \frac{1}{6v} + \frac{(1+ \frac{1}{v})\sum_{i = 1}^{j-1}t_i^2}{2 \sum_{i = 1}^{j-1}t_i^2} + \frac{t_j^2}{2 \sum_{i = 1}^{j-1}t_i^2} + \frac{j}{2k \sum_{i = 1}^{j-1}t_i^2}  \notag \\
    & \leq \frac{1}{6v} + \frac{1}{2} + \frac{1}{2v} + \frac{3(jk)^2}{(j-1)(j)(2j+1)} + \frac{3}{k(j-1)(2j+1)} \notag \\
    & \leq \frac{1}{6v} + \frac{1}{2} + \frac{1}{2v} + \frac{1}{6v} + \frac{1}{6v} \tag{for $j$ large} \\
    & = \frac{1}{2} + \frac{1}{v}.
\end{align}
As $v$ can be chosen to be arbitrarily large, we therefore have that \begin{align}
    R_{\textrm{PD}}(S) \leq \frac{1}{2}. \label{PB: rmk: pb not pdfs2: EQ9}
\end{align}
Hence, for $n$ large, by Equations \eqref{PB: rmk: pb not pdfs2: EQ8} and \eqref{PB: rmk: pb not pdfs2: EQ9} it follows that for all $C \in \ILUPDC$ \begin{align}
    |C(S \upharpoonright n)| - |C'(S \upharpoonright n)| & \geq (1 - \frac{6}{k} - \frac{1}{k})n - (\frac{1}{2} + \frac{1}{k})n \\
    & = (\frac{1}{2} - \frac{8}{k}). \label{PB: rmk: pb not pdfs2: EQ10}
\end{align}

Hence, choosing $k$ large such that $\frac{8}{k} \leq \beta$ gives us our desired result of PD-depth$(S) \geq \frac{1}{2} - \beta.$

\begin{claim}
PB-depth$(S) \geq 1 - \beta.$ 
\end{claim}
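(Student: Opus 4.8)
The plan is to pair the finite-state lower bound (which is essentially already available from the pushdown analysis above) with a matching pebble upper bound obtained from a direct transducer construction.

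\textbf{Finite-state lower bound.} Every $\ILFST$ is simulated by an $\ILUPDC$ that performs no $\lambda$-transition and never alters its stack, so the estimate $|C(S\upharpoonright n)|\ge n(1-5/k)$ established in \eqref{PB: rmk: pb not pdfs2: EQ7} for every $C\in\ILUPDC$ holds in particular for every $C\in\ILFST$. By Theorem~\ref{compvs decomp} this gives $\rho_{\textrm{FS}}(S)\ge 1-5/k$. Since $\liminf_{n}\FScomp{m}(S\upharpoonright n)/n$ is non-increasing in $m$ with limit $\rho_{\textrm{FS}}(S)$, we get $\liminf_{n}\FScomp{m}(S\upharpoonright n)/n\ge 1-5/k$ for \emph{every} $m\in\N$, hence $\FScomp{m}(S\upharpoonright n)\ge (1-6/k)n$ for every $m$ and almost every $n$.

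\textbf{Pebble upper bound.} I would fix one pebble transducer $T$ (a single pebble suffices) and, for each $n$, exhibit a short input $w_n$ with $T(w_n)=S\upharpoonright n$. Write $S\upharpoonright n=\overline{S_j}\cdot y$ with $j$ maximal so that $\overline{S_j}$ is a prefix and $y\sqsubset S_{j+1}$. The input $w_n$ consists of self-delimiting copies of $R_1,\dots,R_{j+1}$ (via the doubling encoding, say), each immediately followed by a unary ``ruler'' segment: $0^{|R_i|}$ for $i\le j$, and the shorter rulers $0^{a},0^{|w'|}$ for the truncated last block, where $y=R_{j+1}^{a}w'\cdots$ determines $a$ and $w'$. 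On a full block $R_i$ the transducer prints $R_i^{|R_i|}\,1^k\,(R_i^{-1})^{|R_i|}$: to emit the repetitions it repeatedly scans the stored copy of $R_i$ left to right while advancing its pebble one cell at a time along the ruler $0^{|R_i|}$, the iteration finishing exactly when the pebble leaves the ruler; $(R_i^{-1})^{|R_i|}$ is produced the same way scanning right to left, and $1^k$ together with all flags is hard-wired into $T$. The partial last block is handled identically with the shorter rulers. The book-keeping (parity-based detection of block and ruler boundaries, and the pop--move--push step that advances the single pebble within the stack discipline) is routine and I would relegate the full description of $T$ to an appendix, as with the earlier pebble constructions in this paper.

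\textbf{Combining.} Since $|R_i|=kt_i\le k^2 i$ by \eqref{PB: rmk: pb not pdfs2: EQ1}, the input has length $|w_n|=\sum_{i=1}^{j+1}|R_i|+O(j)=O(j^2)$, whereas $n\ge|\overline{S_j}|\ge\sum_{i=1}^{j}2k^2i^2=\Omega(k^2j^3)$; hence, taking the constant $k'=|T|_\sigma$, we get $\PBcomp{k'}(S\upharpoonright n)\le|w_n|=O(n^{2/3})=o(n)$. Therefore, for every $m$ and almost every $n$,
$$\FScomp{m}(S\upharpoonright n)-\PBcomp{k'}(S\upharpoonright n)\ \ge\ (1-6/k)n-o(n)\ \ge\ (1-7/k)n\ \ge\ (1-\beta)n,$$
using $7/k<8/k\le\beta$; and since the same $k'$ serves for all $m$, this yields PB-depth$(S)\ge 1-\beta$. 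The step I expect to be the main obstacle is the pebble construction itself — in particular, verifying that one pebble suffices for the nested ``for each block, repeat $R_i$ the correct number of times'' loop when the head must wander away from the counter to scan and print each copy, and that the boundary detection between the (doubled) data blocks and the unary rulers is genuinely finite-state; the length accounting and the finite-state bound are then immediate.
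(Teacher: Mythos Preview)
Your approach is essentially the paper's: pull the finite-state lower bound from the $\ILUPDC$ estimate, and pair it with an explicit $1$-pebble upper bound showing $R_{\mathrm{PB}}(S)=0$. The paper's transducer is organised a bit differently, though, and the difference matters for the one step you correctly flagged as the obstacle. Rather than carrying a separate unary ruler $0^{|R_i|}$, the paper lets $d(R_i)$ play double duty: the pebble walks through $d(R_i)$ one pair at a time, and on each step the head sweeps from the left flag to the right flag of $d(R_i)$ printing one full copy of $R_i$; this automatically yields exactly $|R_i|$ copies without any auxiliary counter, and $(R_i^{-1})^{|R_i|}$ is handled by a second $10$-block containing $d(R_i^{-1})$. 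For the truncated tail $y\sqsubset S_{j+1}$ the paper simply appends $01\cdot d(y)$ and prints it verbatim, at cost $2|y|+2=O(j^2)=o(n)$.

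Your ruler variant is fine for producing $R_i^{|R_i|}$, but the stated treatment of the partial copy via a ruler $0^{|w'|}$ does not obviously go through with a \emph{single} pebble: once the pebble sits on the ruler, the head has no marker for its current position inside $d(R_{j+1})$, so after each printed bit it cannot return to the right pair. You would need either a second pebble, or to drop the ruler for the tail and include $d(w')$ (or, as the paper does, $d(y)$) directly---either fix keeps $|w_n|=o(n)$. Also, your length count $|w_n|=\sum_i|R_i|+O(j)$ is off by the constant factor coming from the doubling and the rulers; this is harmless for the $O(j^2)=o(n)$ conclusion but worth correcting.
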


Finally we examine the pebble depth of $S$. First we note that by Equation \eqref{PB: rmk: pb not pdfs2: EQ8}, it holds that \begin{align}
    \rho_{\textrm{FS}}(S)  > 1 - \frac{6}{k}. \label{PB: rmk: pb not pdfs2: EQ11}
\end{align}

Next consider the pebble-transducer $T$ that reads its input the following way: $T$ reads its input in chunks of size $2$ trying to find flags of uneven bits. If $T$ reads a chunk $10$ in its input, $T$ then scans right continuing to read its input in chunks of size two until it finds two unequal bits. $T$ uses the two flags and the string of the form $d(x)$ between the flags to print the string $x^{|x|}$. That is, if $T$ reads an input with the substring $10d(x)b_1b_2$, with $b_1,b_2 \in \bin$, $b_1 \neq b_2$, and  $x \in \fbins$, then $T$ outputs $x^{|x|}$ on that substring. If instead $T$ reads the chunk $01$, then $T$ reads its input in chunks of size $2$, outputting a single bit from each chunk if the bits match until it sees an unequal chunk or it reaches the end of the tape. That is, if $T$ reads an input with the substring $01d(x)b_1b_2$, with $b_1,b_2 \in \bin$, $b_1 \neq b_2$, and  $x \in \fbins$, or the tape ends with $01d(x)\vdash$, then $T$ outputs $x$. $T$ enters its final state upon seeing $\vdash$ if the last flag it saw was $01$, i.e. $T$ must `print' at least the empty string to enter a final state. A full description of $T$ is provided at the end of this proof.

Consider an arbitrary prefix $S \upharpoonright n$ of $S$. Let $j$ be such that $\overline{S_{j-1}}$ is a prefix of $S \upharpoonright n$ but $\overline{S_j}$ is not. That is, $S\upharpoonright n = \overline{S_{j-1}}\cdot y$ for some $y \sqsubset S_j$. For each $i$, let $x_i$ denote the string $$x_i = 10\cdot d(R_i)\cdot 01 \cdot d(1^k) \cdot 10 \cdot d(R_i^{-1}).$$
Hence we have that $$T(x_1\ldots x_{j-1} 01\cdot d(y)) = S \upharpoonright n.$$ Then, for all $\varepsilon > 0$, for $n$ large it follows that \begin{align}
    \frac{\PBcomp{|T|}(S \upharpoonright n)}{n} & = \frac{\sum_{i=1}^{j-1}|x_i| + 2 + 2|y|}{|\overline{S_{j-1}}| + |y|} \notag\\
    & \leq \frac{4\sum_{i=1}^{j-1}|R_i| + (6 + 2k)(j-1) + 2 + 2|y|}{|\overline{S_{j-1}}|} \notag \\
    & \leq \frac{4\sum_{i=1}^{j-1}kt_i + (6 + 2k)(j-1)  + 2 + 2|S_j|}{|\overline{S_{j-1}}|} \notag \\
    & \leq \frac{4k\sum_{i=1}^{j-1}t_i + (6+2k)(j-1) + 2 + 2k + 4(kt_j)^2}{2k^2\sum_{i=1}^{j-1}t_i^2} \notag \\
    & \leq \frac{4k^2\sum_{i=1}^{j-1}i}{2k^2\sum_{i=1}^{j-1}i^2} + \frac{(6+2k)(j-1) + 2(1+k)}{2k^2\sum_{i=1}^{j-1}i^2} + \frac{4k^3j^2}{2k^2\sum_{i=1}^{j-1}i^2} \notag \\
    & = \frac{6}{2j-1} + \frac{(6+2k)(j-1) + 2(1+k)}{(j-1)(j)(2j-1)/6} + \frac{12kj^2}{(j-1)(j)(2j-1)} \notag\\
    & \leq \varepsilon. \tag{for $j$ large}
\end{align}
Hence we have that \begin{equation}
R_{\textrm{PB}}(S) = 0. \label{PB: rmk: pb not pdfs2: EQ13}
\end{equation}

Therefore, By Equations \eqref{PB: rmk: pb not pdfs2: EQ11} and \eqref{PB: rmk: pb not pdfs2: EQ13}, for all $k$ and almost every $n$ we have that
\begin{align}
    \FScomp{k}(S \upharpoonright n) - \PBcomp{|T|}(S \upharpoonright n) \geq (1 - \frac{7}{k})n - \frac{1}{k}n =  (1 - \frac{8}{k})n\geq (1 - \beta)n. \label{PB: rmk: pb not pdfs2: EQ14}
\end{align}
That is, PB-depth$(S) \geq 1 - \beta$ as desired.

For completeness, the construction of the ILPDC $C'$ is confined to the appendix subsection \ref{cons3} the construction of the PB $T$ is confined to the appendix subsection \ref{cons4}.

\end{proof}

\section{Remarks}
In this paper we developed a variant of Bennett's logical depth based on pebble transducers, and showed that it satisfies versions of the fundamental properties of depth. Specifically, we first showed that FST-trivial and PB-incompressible sequences are not PB-deep in Theorem \ref{PB: Thm: Easy Hard}. We demonstrated a slow growth type law holds in Theorem \ref{PB: Thm: SGL}. We differentiated PB-depth from FS-depth by showing the existence of a normal PB-deep sequence in Theorem \ref{PB: Thm: Normal}. We also proved the existence of PB-deep sequences in Theorem \ref{PB: Thm: PB vs LZ} which, if they are LZ-deep, have low LZ-depth. A preliminary comparison with pushdown depth was also performed in Remark \ref{PB: rmk: pb not pdfs}. 

Currently PB-depth is defined as a mixed notion between FSTs and PBs. Ideally a non-mixed version would be developed, i.e. a depth notion of $k$-PB complexity vs $k'$-PB complexity. One obstacle is finding an analogous result to Lemma \ref{FS: Lemma: GEQ Lemma} which is used to prove the existence of FS-deep sequences. The current obstacle is that in the finite-case, given an FST where $T(x) = p$ and $T(xy)=pq$, simply switching the starting state to the state which it ends reading $T(x)$ in does not mean that $y$ is a description for $q$ since one must take into account the location of the pebbles too after reading $x$.

A full comparison with PD-depth is also not performed. Remark \ref{PB: rmk: pb not pdfs} presents a sequence which is both PB-deep and PD-deep. The construction of the sequence does not make use of the ability of pebble transducers to compute the $\pref$ function. Perhaps this is the approach to take to identify a sequence which is PB-deep but not PD-deep? Similarly, does there exist a sequence which is LZ-deep but not PB-deep? 

Furthermore, in this paper knowing that PB-incompressible sequences exist was sufficient for our desired results. Based on this, a result which could also potentially be expanded upon is Lemma \ref{PB: Lemma: pebble random} in which we showed that ML-random sequences are PB-incompressible. Just as it is known that a sequence is FS-incompressible if and only if the sequence is normal, a similar result which classifies what sequences are PB-incompressible is welcome.

  \bibliographystyle{plainurl}
\bibliography{biblio.bib}

\newpage

\section*{Appendix}
In the following we present constructions of pebble transducers described in the main body of the paper. Some transitions are omitted for succinctness, however one is free to assume for every transition not described, the transducer enters and remains in a non-final extra state, thus ensuring determinism.

\subsection{Construction from Lemma \ref{PB: Lemma: SGL No Work With PB}}
\label{cons1}
For completeness, we provide the following construction for $T$: As $T$ is a $1$-pebble transducer, the pebble placement part of the transition and output function will have value $0$ or $1$ indicating whether or not the pebble is present on the current square of the input tape.

Let $T = (Q,q_0,\{q_f\},1,\delta,\nu)$ be as follows. $T$ has the following set of states:
\begin{enumerate}
    \item $q_s$ the start state.
    \item $q_p$ is the state $T$ enters when it needs to move its pebble.
    \item $q^b$ is the state which records the first bit for $b \in \bin$ when examining a block of size $2$.
    \item $q_l$ is the state used when $T$ continuously moves its head to the left side of the tape.
    \item $q_1,q_2,q_3$ are the states used to print the prefixes of the input.
    \item $q_i$ is the state where $T$ acts as the identity transducer.
    \item $q_f$ is the final state.
\end{enumerate} 
Beginning in the start state, $T$ moves its head to the right and enters the pebble placement state
$$\delta(q_s,\dashv,0) = (q_p, +1).$$

Beginning in $q_p$, $T$ then reads the next two bits. $T$ first records the first bit and moves right $$\delta(q_p,b,0) = (q^b,+1).$$
Then reading the second bit, if it matches the first bit, $T$ places a pebble onto the square and enters the state for scanning to the left. If they do not match, $T$ moves right and enters the identity state. That is

\begin{equation*}
    \delta(q^b,a,0) = 
    \begin{cases}
        (q_l,\push) & \textrm{if $a = b$}, \\
        (q_i,+1) &\textrm{if $a \neq b$}.
    \end{cases}
\end{equation*}

In $q_l$, $T$ scans left to the end of the tape, i.e. for $b,c \in \bin$, $$\delta(q_l,b,c) = (q_l,-1).$$
When $T$ reaches the end of the tape, it begins reading in chunks of size two, printing every second bit, until it sees the square containing the pebble. $T$ first moves its head right,
$$\delta(q_l,\dashv,0) = (q_1,+1).$$ $T$ then moves its head to the right to the second square on any bit, $$\delta(q_1,b,0) = (q_2,+1).$$ In $q_2$, if the current square contains the pebble, $T$ pops the pebble and moves it forward two squares. If it does not, $T$ moves right and returns to $q_1.$ That is, on any bit $b$,

\begin{equation*}
    \delta(q_2,b,c) = 
    \begin{cases}
        (q_1,+1) & \textrm{if $c = 0$}, \\
        (q_3,\pop) &\textrm{if $c = 1$}.
    \end{cases}
\end{equation*}

In $q_3$, $T$ returns to $q_p$ and moves its head to the right to begin the process of moving the pebble again. That is,
$$\delta(q_3,b,0) = (q_p,+1).$$

When in state $q_i$, $T$ moves right regardless of the bit read. That is,
$$\delta(q_i,b,0) = (q_i,+1).$$

$T$ enters its final state if $T$ reaches the right hand side of the tape in states $q_p,q^b$ or $q_i$. That is, for $q \in \{q_p,q^b,q_i\}$, $$\delta(q,\vdash,0) = (q_f,-1).$$

$T$ outputs the empty string on all transitions except in the following cases where it prints the bit on the current square: 
\begin{align*}
    \textrm{For $c \in \bin$, }\nu(q_2,b,c) = b\, \textrm{ and }\nu(q_i,b,0) = b.
\end{align*}
This completes the construction of $T$.

\subsection{Construction  from Theorem \ref{PB: Thm: PB vs LZ}}
\label{cons2}

For completeness, the following is a construction for $T$: $T=(Q,q_o,F,1,\delta,\nu)$ is the $1$-pebble transducer whose states are are follows: 
\begin{enumerate}
    \item $q_0$ the start state,
    \item $q_{i,w}$ for $w \in \bin^{2k}$ the just printing states,
    \item $q_1$ the state used to check whether the transducer just prints or needs to print the reverse too,
    \item $q_p$ a state used to place the pebble,
    \item $q_{r,w}$ for $w \in \bin^{2k}$, the state where $T$ moves right printing but will print the reverse too,
    \item $q_{f}$ the state when scanning left along the flag before printing the reverse,
    \item $q_l$ the state used to print the reverse moving left,
    \item $q_{s,w}$ for $w \in \bin^{2k}$ used to scan right,
    \item $q_F$ the final state.
\end{enumerate}
So $F = \{q_F\}$.

From the start state, $T$ moves to state $q_{i,0^{2k}}$ and prints nothing. That is, $$\delta(q_0,\dashv,0) = (q_{i,0^{2k}},+1),$$ and $$\nu(q_0,\dashv,0) = \lambda.$$ 
From here, $T$ continuously prints what is under its head moving right until it sees the end of a flag. At the end of the flag it moves to $q_1$. That is for $w \in \bin^{2k}$ and $b \in \bin$ \begin{equation*}
    \delta(q_{i,w},b,0) = 
        \begin{cases}
            (q_{i,w[1..]b},+1) & \text{if $w \neq 1^{2k}$ or ($w = 1^{2k}$ and $b = 1$),} \\
            (q_1,+1) & \text{if $w = 1^{2k}$ and $b = 0$,}
        \end{cases}
\end{equation*}
and 
\begin{equation*}
    \nu(q_{i,w},b,0) = 
        \begin{cases}
            b & \text{if $w \neq 1^{2k}$ or ($w = 1^{2k}$ and $b = 1$),} \\
            \lambda & \text{if $w = 1^{2k}$ and $b = 0$.}
        \end{cases}
\end{equation*}

In $q_1$, $T$ has just read a $0$ after a flag of $1^{2k}$. If $T$ reads a $1$ in $q_1$, $T$ moves right and returns to $q_{i,0^{2k}}$ the initial printing state. If $T$ reads a $0$, $T$ moves right and enters the state $q_p$ and places its pebble on its tape. That is,
\begin{equation*}
    \delta(q_1,b,0) = 
        \begin{cases}
            (q_p,+1) & \text{if $b = 0$,}\\
            (q_{i,0^{2k}},+1) & \text{if $b = 1$.}
        \end{cases}
\end{equation*}
$T$ prints nothing in $q_1$. That is, for $b,c \in \bin$ $$\nu(q_1,b,c) = \lambda.$$

In $q_p$, $T$ places a pebble on its current square and enters state $q_{r,0^{2k}}$ and prints nothing. That is, for $b \in \bin$,
$$\delta(q_p,b,0) = (q_{r,0^{2k}},\push),$$ and $$\nu(q_p,b,0) = \lambda.$$

$T$ moves its head to the right printing what it reads when in states $q_{r,w}$. It does this until it sees the end of a $1^{2k}$ flag, upon which it enters state $q_f$ moving its head to the left. That is, for $b,c \in \bin$, $w \in \bin^{2k}$,

\begin{equation*}
    \delta(q_{r,w},b,c) = 
        \begin{cases}
            (q_{r,w[1..]b},+1) & \text{if $w \neq 1^{2k}$ or ($w = 1^{2k}$ and $b = 1$),} \\
            (q_f,-1) & \text{if $w = 1^{2k}$ and $b = 0$,}
        \end{cases}
\end{equation*}
and
\begin{equation*}
    \nu(q_{r,w},b,c) = 
        \begin{cases}
            b & \text{if $w \neq 1^{2k}$ or ($w = 1^{2k}$ and $b = 1$),} \\
            \lambda & \text{if $w = 1^{2k}$ and $b = 0$.}
        \end{cases}
\end{equation*}

$T$ moves its head to the left printing nothing while in $q_f$ until it sees a $0$, that is, the end of the $1^{2k}$ flag zone. When it sees a $0$, $T$ begins printing what it reads and enters state $q_l$. That is for $b \in \bin$,
\begin{equation*}
    \delta(q_f,b,0) = 
        \begin{cases}
            (q_f,-1) & \text{if $b = 1$,} \\
            (q_l,-1) & \text{if $b = 0$,}
        \end{cases}
\end{equation*}
and 
\begin{equation*}
    \nu(q_f,b,0) = 
        \begin{cases}
            \lambda & \text{if $b = 1$,} \\
            0 & \text{if $b = 0$.}
        \end{cases}
\end{equation*}

In $q_l$, $T$ moves its head to the left printing what it sees until it sees the square with the pebble. When $T$ sees the pebble, $T$ removes the pebble and enters state $q_{s,0^{2k}}$. That is for $b,c \in \bin$
\begin{equation*}
    \delta(q_l,b,c) = 
        \begin{cases}
            (q_l,-1) & \text{if $c = 0$,} \\
            (q_{s,0^{2k}},\pop) & \text{if $c = 1$,}
        \end{cases}
\end{equation*}
and 
$$\nu(q_l,b,c) = b.$$

$T$ moves its head to the right printing nothing until it sees the end of a $1^{2k}$ flag, upon which it enters state $q_1$ to begin the process of printing a new zone again. That is, for $b \in \bin,w \in \bin^{2k}$
\begin{equation*}
    \delta(q_{s,w},b,0) = 
        \begin{cases}
            (q_{s,w[1..]b},+1) & \text{if $w \neq 1^{2k}$ or ($w = 1^{2k}$ and $b = 1$),} \\
            (q_1,+1) & \text{if $w = 1^{2k}$ and $b = 0$,}
        \end{cases}
\end{equation*}
and 
$$\nu(q_{s,w},b,0) = \lambda.$$

For $w \in \bin^{2k}$, if $T$ is in state $q_{i,w}$ (the just printing states without reversing) or in state $q_1$ (where $T$ checks if the next zone is just printing or printing and reversing) and sees $\vdash$ indicating the right hand side of the tape, $T$ enters $q_F$ the final state and halts, printing nothing. That is for $w\in \bin^{2k}$

$$\delta(q_{i,w},\vdash,0) = \delta(q_1,\vdash,0) = (q_F,-1),$$
and $$\nu(q_{i,w},\vdash,0) = \nu(q_1,\vdash,0) = \lambda.$$

This completes the construction of $T$.

\subsection{Construction from Remark \ref{PB: rmk: pb not pdfs} : ILPDC}
\label{cons3}

For completeness we now present a full description of the ILPDC $C'$: Let $Q$ be the following set of states: \label{ILPDC Construction}
\begin{enumerate}
    \item the start state $q_0^s$,
    \item the counting states $q^s_1,\ldots q^s_m$ and $q_0$ that count up to $m = |\overline{S_{p-1}}|$,
    \item the flag checking states $q_1^{f_1},\ldots,q_k^{f_1}$ and $q_1^{f_0},\ldots,q_k^{f_0}$,
    \item the pop flag states $q_0^F,\ldots, q_k^F$,
    \item the compress states $q_1^c,\ldots, q_{v+1}^c$,
    \item the error state $q_e$.
\end{enumerate}

\noindent We now describe the transition function of $C'$. At first, $C'$ counts om $q_0^s$ to $q^s_m$ to ensure that for later $R_j$ zones, $v$ divides evenly into $|R_j|$. That is, for $0 \leq i \leq m-1$, $$\delta(q_i^s,x,y) = (q_{i+1}^s,y)$$ and $$\delta(q_m^s,\lambda,y) = (q_0, y).$$

\noindent Once this counting has taken place, an $R_j$ zone begins. Here, the input is pushed onto the stack and $C'$ tries to identify the flag $1^k$ by examining group of $k$ symbols. We set \begin{align*}
    \delta(q_0,x,y) = 
    \begin{cases}
         (q_1^{f_1},xy) & \textrm{if $x = 1$}\\
         (q_1^{f_0},xy) & \textrm{if $x \neq 1$}
    \end{cases}
\end{align*}
and for $1 \leq i \leq k-1$, $$\delta(q_i^{f_0},x,y) = (q_{i+1}^{f_0},xy)$$ and \begin{align*}
    \delta(q_i^{f_1},x,y) = 
    \begin{cases}
         (q_{i+1}^{f_1},xy) & \textrm{if $x = 1$} \\
         (q_{i+1}^{f_0},xy) & \textrm{if $x \neq1$.}
    \end{cases}
\end{align*}
If the flag $1^k$ is not detected after $k$ symbols, the test begins again. That is $$\delta(q_k^{f_0},\lambda,y) = (q_0,y).$$
If the flag is detected, the pop flag state is entered. $\delta(q_k^{f_1},\lambda,y) = (q_0^F,y).$ The flag is then removed from the stack, that is, for $0 \leq i \leq k$ $$\delta(q_i^F,\lambda,y) = (q_{i+1}^F, \lambda)$$ and $$\delta(q_k^F,\lambda,y) = (q_1^c,y).$$ 

\noindent $C'$ then checks using the stack, that the next part of the input it reads is $R_j^{-1}$, counting modulo $v$. If the checking fails, the error state is entered. That is for $1 \leq i \leq v$, \begin{align*}
    \delta(q_i^c,x,y) = 
    \begin{cases}
         (q_{i+1}^c,\lambda) & \textrm{if $x  = y$}\\
         (q_e, y) & \textrm{if $x \neq y$ and $y \neq z_0$} \\
         (q_1^{f_1},xz_0) & \textrm{if $x=1$ and $y = z_0$}\\
         (q_1^{f_0},xz_0) & \textrm{if $x\neq1$ and $y = z_0$}.
    \end{cases}
\end{align*}
Once $v$ symbols are checked, the checking starts again. That is $$\delta(q_{v+1}^c,\lambda,y) = (q_1^c,y).$$
The error state is the loop $$\delta(q_e,x,y) = (q_e,y).$$

We now describe the output function of $C'$. Firstly, on the counting states, $C'$ outputs its input. That is, for $0 \leq i \leq m-1$ $$\nu(q_i^s,x,y) = x.$$
On the flag checking states $C'$ outputs its input. That is, for $1 \leq i \leq k-1$ $$\nu(q_i^{f_0},x,y) = \nu(q_i^{f_1},x,y) = x.$$
$C'$ outputs nothing while in the flag popping states $q_0^F,\ldots, q_k^F$ and on the compression states $q_1^c,\ldots,q_{v+1}^c$ except in the case when $v$ symbols have just been checked. That is, $$\nu(q_{v}^c,x,y) = 0 \textrm{ if $x = y$}.$$
When an error is seen, a flag is outputted. That is for $1 \leq i \leq v$ $$\nu(q_i^c,x,y) = 1^{3m + i}0x \textrm{ if $x \neq y$ and $y \neq z_0$}.$$
$C'$ outputs its input while in the error state. That is, $$\nu(q_e,x,y) = x.$$

Lastly we verify that $C'$ is in fact IL. If the final state is not an error state, then all $R_j$ zones and $1^k$ flags are output as in the input. If the final state is $q_i^c$ then the number $t$ of zeros after the last flag in the output along with $q_i^c$ determines that the last $R_j^{-1}$ zone read is $tv + i-1$ bits long.
If the final state is $q_e$, then the output is of the form $$aR_j1^k0^t1^{3m + i}0b$$ for $a,b \in \fbins.$ The input is uniquely determined to be the input corresponding to the output $aR_j1^k0^t$ with final state $q_1^c$
 followed by $$R_j^{-1}[tv..tv + (i-1)-1].$$ As $1^{3m}$ does occur anywhere as a substring of $S$ post the prefix $\overline{S_{p-1}}$, its first occurrence post $\overline{S_{p-1}}$ as part of an output must correspond to an error flag.

\subsection{Construction from Remark \ref{PB: rmk: pb not pdfs} : PB}
\label{cons4}

For completeness, the following is a full description of the pebble transducer $T$: Let $Q$ be the following set of states of $T$:
\begin{enumerate}
    \item the start state $q_s$,
    \item the accepting state $q_a$,
    \item the failure state $q_d$, 
    \item the initial flag identifying states $q^i_s,q_s^0$ and $q_s^1$,
    \item the `just print' states $q_p,q_p^0$ and $q_p^1$,
    \item the `place pebble' states $q_r^,q_r^0,q_r^1$,
    \item the states used to find a flag when scanning left $q_l,q_l^0$ and $q_l^1$,
    \item the `print square' states $q_i^{-1},q_i, q_i^0$ and $q_i^1$,
    \item the states used to find and pop the pebble from the tape $q_f$ and $q_f'$.
\end{enumerate}

We first describe the transition function $\delta$ of $T$. Beginning in the start state, $T$ checks whether the next two bits contain $01$ or $10$ to indicate whether it is entering a \emph{print} or \emph{print-square} zone respectively.  $T$ first moves right off of $\dashv$$$\delta(q_s,\dashv,0) = (q^i_s,+1).$$ In $q^i$, $T$ reads what is under its head and moves right to check the next bit. That is, \begin{equation*}
    \delta(q^i_s,b,c) = 
    \begin{cases}
        (q_d,+1) & \textrm{if $b =\, \dashv$,} \\
        (q_d,-1) & \textrm{if $b =\, \vdash$,} \\
        (q^0_s,+1) & \textrm{if $b = 0$,}\\
        (q^1_s,+1) & \textrm{if $b = 1$}.
    \end{cases}
\end{equation*}
$T$ then checks if the next bit is different from the previous bit, i.e. if a flag has just been read. If they are the same or the end of the tape has been read, $T$ enters the failure state. That is, \begin{align*}
    \delta(q^b_s,b',c) = 
    \begin{cases}
        (q_d,-1) & \textrm{if $b = \, \vdash$ or $b' = b$,} \\
        (q_p,+1) & \textrm{if $bb' = 01$,} \\
        (q_r,+1) & \textrm{if $bb' = 10$.}
    \end{cases}
\end{align*}

If the flag read was $01$, $T$ enters the `just print' states beginning with state $q_p$. Here, $T$ reads its input in chunks of size two. $T$ scans left until it sees a chunk of two unmatching bits, that is, another flag and enters the appropriate state. If it reaches the right end of the tape, it enters the final state. That is, beginning in state $q_p$, $T$ reads the first bit of a chunk \begin{align*}
    \delta(q_p,b,c) =\begin{cases}
        (q_p^b,+1) & \textrm{if $b \in \bin$,}\\
        (q_a, -1) & \textrm{if $b = \,\vdash$}.
    \end{cases}
\end{align*}
Then in state $q_p^b$, if the next bit read matches $b$, $T$ enters state $q_p$ again, otherwise it knows it has just read a flag. That is \begin{align*}
    \delta(q_p^b,b',c) = \begin{cases}
        (q_p,+1) & \textrm{if $b = b'$,} \\
        (q_p, + 1) & \textrm{if $bb' = 01$,} \\
        (q_r,+1) & \textrm{if $bb' = 10$,}\\
        (q_a, -1) & \textrm{if $b = \,\vdash$}.
    \end{cases}
\end{align*}

If $T$ has read the flag $10$, it enters the `print square' zone. $T$ must first place its pebble on its tape. Starting in state $q_r$, $T$ reads its input and then moves to the right checking if the two bits it has just read match. If they match, $T$ places its pebble on the tape, otherwise it knows it has just read another flag. That is \begin{align*}
    \delta(q_r,b,0) = \begin{cases}
    (q_d,-1) & \textrm{if $b = \, \vdash$,}\\
    (q_r^0,+1) & \textrm{if $b = 0$,} \\
    (q_r^1,+1) & \textrm{if $b = 1$,}
    \end{cases}
\end{align*}
and
\begin{align*}
    \delta(q_r^b,b',0) = \begin{cases}
        (q_d,-1) & \textrm{if $b' = \, \vdash$,}\\
        (q_l,\push) & \textrm{if $b = b'$,}\\
        (q_r,+1) & \textrm{if $bb' = 10$,}\\
        (q_p,+1) & \textrm{if $bb' = 01$}.
    \end{cases}
\end{align*}

Once the pebble is placed, beginning in state $q_l$, $T$ scans left while reading in chunks of size two to find the last $10$ flag it has read. That is, $$\delta(q_l,b,c) = (q_l^b,-1)$$ and \begin{align*}
    \delta(q_l^b, b', c) = \begin{cases}
        (q_l, -1) & \textrm{if $b = b'$,} \\
        (q_i^{-1},+1) & \textrm{if $b'b = 10$,}\\
        (q_d, + 1) & \textrm{otherwise}.
    \end{cases}
\end{align*}
Once the $10$ flag is found, beginning in state $q_i^{-1}$, $T$ moves to the right $$\delta(q_i^{-1},b,c) = (q_i,+1).$$ Using states $q_i,q_i^0$ and $q_i^1$, $T$ scans right reading in chunks of size two trying to find the next flag. That is \begin{align*}
    \delta(q_i,b,c) = \begin{cases}
        (q_d,-1) & \textrm{if $b = \, \vdash$,} \\
        (q_i^b, + 1) & \textrm{if $b \in \bin$,}
    \end{cases}
\end{align*} and
\begin{align*}
    \delta(q_i^b,b',c) = 
    \begin{cases}
        (q_d,-1) & \textrm{if $b = \, \vdash$,} \\
        (q_i,+1) & \textrm{if $b = b'$,}\\
        (q_f, - 1) & \textrm{if $b \neq b'$}.
    \end{cases}
\end{align*}
In state $q_f$, $T$ has just read a flag. $T$ then scans left to find its pebble on its tape to pop it. That is, \begin{align*}
    \delta(q_f,b,c) = 
    \begin{cases}
        (q_d,+1) & \textrm{if $b = \, \vdash$,}\\
        (q_f,-1) & \textrm{if $c = 0$,}\\
        (q_f',\pop) & \textrm{if $c = 1$}.
    \end{cases}
\end{align*} 
In state $q_f'$, $T$ moves right and re-enters state $q_r$ to place a pebble on its tape. That is, $$\delta(q_f,b,c) = (q_r, +1).$$
In the failure state $q_d$, $T$ enters a loop and so never enters $q_a$. That is \begin{align*}\delta(q_d,b,c) = \begin{cases}
    (q_d,-1) & \textrm{if $b = \, \vdash$,}\\
    (q_d,+1) & \textrm{otherwise}.
\end{cases}
\end{align*}
$T$ outputs nothing on all transitions except in the following two cases:
\begin{itemize}
    \item $\nu(p_b,b,c) = b$ (when in a `just print' state and it sees an equal chunk)
    \item $\nu(q_i^b,b,c) = b$ (when in a `print square' zone and it sees an equal block)
\end{itemize}
This completes the construction of $T$.

\end{document}